\newtheorem{theorem}{Theorem}[section]
\numberwithin{theorem}{section}
\newtheorem{lemma}[theorem]{Lemma}
\newtheorem{definition}[theorem]{Definition}
\newtheorem{corollary}[theorem]{Corollary}
\newtheorem{prop}[theorem]{Proposition}
\newtheorem{remark}{Remark}[section]
\DeclareMathOperator{\tr}{Tr}
\DeclareMathOperator{\crk}{Crk}
\newcommand{\F}{\mathbb F}
\newcommand{\Fbn}{\mathbb{F}_{2^n}}
\newcommand{\Fbnn}{\mathbb{F}_{2^n} \longrightarrow \mathbb{F}_{2^n}}
\newcommand{\ddt}{\textup{\texttt{DDT}}}
\newcommand{\bct}{\textup{\texttt{BCT}}}
\newcommand{\bu}{\textup{\texttt{BU}}}
\newcommand{\du}{\textup{\texttt{DU}}}
\newcommand*\circled[1]{\tikz[baseline=(char.base)]{
            \node[shape=circle,draw,inner sep=2pt] (char) {#1};}}
\newcommand{\ca}{\textup{\texttt{Class}} \,\,\mathcal{A}}
\newcommand{\cb}{\textup{\texttt{Class}} \,\,\mathcal{B}}
\newcommand{\cc}{\textup{\texttt{Class}} \,\,\mathcal{C}}
\newcommand{\pole}{\mathcal{P}}
\newcommand{\hh}{\mathcal{h}}
\newcommand{\res}{\text{Res}}
\begin{document}


    \title{On the Boomerang Uniformity of \\
                 Permutations of Low Carlitz Rank}
    \author{Jaeseong Jeong$^1$, Namhun Koo$^2$, and Soonhak Kwon$^1$\\
        \small{\texttt{ Email: wotjd012321@naver.com, nhkoo@ewha.ac.kr, shkwon@skku.edu}}\\
        \small{$^1$Applied Algebra and Optimization Research Center, Sungkyunkwan University, Suwon, Korea}\\
        \small{$^2$Institute of Mathematical Sciences, Ewha Womans University, Seoul, Korea}
    }

   \date{}

    \maketitle
    \begin{abstract}
        Finding permutation polynomials with low differential and boomerang uniformity
        is an important topic in S-box designs of many block ciphers. For
        example, AES chooses the inverse function as its S-box, which is
         differentially 4-uniform and boomerang 6-uniform. Also there has been
         considerable research on many non-quadratic permutations which are obtained by modifying
        certain set of points from the inverse function.
        In this paper, we give a novel approach
        that shows that plenty of existing modifications of the inverse function are in fact affine
        equivalent to  permutations of low Carlitz rank and those modifications
        cannot be APN (almost perfect nonlinear) unless the Carlitz rank is very large.    Using nice properties
        of the permutations of Carlitz form, we present  the complete
         list of permutations of Carlitz rank 3 having the boomerang uniformity six, and also give the
        complete classification
        of the differential uniformity of  permutations of Carlitz rank 3.
        We also provide, up to affine equivalence, all the involutory permutations of Carlitz rank 3
         having the boomerang uniformity six.
    \end{abstract}

\section{Introduction}

Substitution boxes (S-Boxes) are the only nonlinear part in block ciphers, thus they play important roles to resist
several attacks on block ciphers. Mathematically, S-boxes are vectorial (multi-output) Boolean functions and rigorous
analysis of the Boolean functions is possible.   The boomerang attack is one of the well-known cryptanalysis proposed
by Wagner\cite{Wag99}. In Eurocrypt 2018, Cid et al.\cite{CHP18+} proposed a new cryptanalysis tool called the
Boomerang Connectivity Table(BCT), which measures the resistance against various styles of boomerang attacks, where it
is heuristically  shown that finding permutations with maximum BCT value $\leq 4$ is very difficult.
   Shortly afterwards, Dunkelman \cite{Dun18} improved the complexity for BCT construction.

By formalizing the concept of BCT,  Boura and Canteaut \cite{BC18} introduced a new invariant, the boomerang
uniformity, which is the maximum value of nontrivial parts of the BCT. They did in-depth study on  the boomerang
uniformity of  many differentially 4-uniform permutations such as the inverse function and some quadratic permutations
including Gold function.
 Later, Li et
al. \cite{LQSL19} gave an alternative definition of the boomerang uniformity, which reveals the connection between
 the boomerang and the differential uniformity more clear. They also showed that some family of quadratic binomial permutations
have the boomerang uniformity 4. In \cite{TBP20}, Tian et al. studied the boomerang uniformity of some popular S-boxes
and also derived the result that the boomerang uniformity does not change under EA equivalence for Gold function and
the inverse function.

 Recently,
several researchers, including the above mentioned \cite{BC18,LQSL19,TBP20}, studied  quadratic permutations of
boomerang uniformity 4 over $\Fbn$ with $n\equiv 2 \pmod{4}$.  In \cite{MTX20}, Mesnager et al. studied boomerang
properties of quadratic permutations when $n$ is even which generalized some previous results on quadratic
permutations.
  In
particular, they proved that the boomerang uniformity of the differentially 4-uniform quadratic permutation presented
in \cite{BTT14} equals 4. Very recently, Tu et al.\cite{TLZ20} proved that a class of quadratic quadrinomials has the
boomerang uniformity 4. Also, Li et al. \cite{LHXZ20} proposed permutations from generalized butterfly structure with
the boomerang uniformity 4. All these quadratic permutations have the optimal boomerang uniformity $4$ for $n\equiv 2
\pmod{4}$, but they are rather impractical to be applied in construction of S-boxes because they are quadratic. It is
known that permutations with low algebraic degree is vulnerable at high order differential attack, and is also known
that quadratic permutations have the maximal differential-linear uniformity \cite{CKL+19}.

Therefore it is necessary to consider permutations with low boomerang uniformity and high algebraic degree. The
boomerang uniformity of a modification of the inverse function swapping the image of 0 and 1 was investigated by Li et
al. \cite{LQSL19}. Recently, Calderini and Villa \cite{CV20} computed the boomerang uniformity of three  classes of
differentially 4-uniform permutations which are  modifications of the inverse function. There are many known
results\cite{CV20, LWY13e, TCT15, ZHS14, PTW16, LQSL19, QTTL13, QTLG16, YWL13,PT17} on the modifications of the inverse
function which have good cryptographic properties such as high algebraic degree, high nonlinearity and low differential
uniformity. However, it was not studied enough on the boomerang uniformity of these classes of permutations, and hence
it is necessary to be further studied.

The Carlitz rank, first introduced in \cite{ACMT09}, is the concept based on known result that all permutations can be
expressed by a composition of the inverse function and linear polynomials (see\cite{Nie15}). Differential uniformity of
permutations with Carlitz rank 1 or 2 in some finite fields of odd characteristic can be found in \cite{CMT14}, but for
even characteristic there was no further observation connecting Carlitz rank with cryptographic properties such as
differential
 and boomerang uniformity.
 It is easy to see that permutations of Carlitz rank 1 are affine equivalent to the inverse
 function. Also, permutations with Carlitz rank 2 are
  affine equivalent to a modification of the inverse function swapping the image of 0 and 1, whose
  cryptographic properties  are already
  studied in \cite{LWY13e} for differential uniformity and in  \cite{LQSL19} for boomerang uniformity.
  For permutations of Carlitz rank 3, the
  boomerang uniformity of some class of permutations (with coefficients in $\F_4$) has been studied
  in \cite{CV20}. Also, differentially
  4-uniform permutations of Carlitz rank 3  with even $n$ were studied in \cite{LWY13e}, but
  not completely classified for arbitrary permutation of Carlitz rank 3 and for arbitrary dimension $n$.
  Therefore, even if it is not explicitly mentioned in the above literature, the authors of the above articles in fact studied
  cryptographic properties of permutations of low Carlitz rank, where the case rank 3 is not completely settled yet.

 In this paper, we give a complete characterization of the condition that the boomerang
   uniformity equals six for permutations of Carlitz rank 3, which is minimal  in this class, and give the
        complete classification
        of the differential uniformity of  permutations of Carlitz rank 3.
        It should be mentioned that, among all existing modifications of the inverse function over the field $\Fbn$ with even $n$,
        boomerang uniformity six is the lowest value known at this moment.
         We also provide, up to affine equivalence, all the involutory permutations of Carlitz rank 3
         having the boomerang uniformity six.
       Moreover, using nice properties
        of the permutations of Carlitz form, we show that plenty of existing modifications of the inverse function are in fact affine
        equivalent to  permutations of low Carlitz rank and those modifications
        cannot be APN (almost perfect nonlinear) unless the Carlitz rank is very large.

The rest of this paper is organized as follows. In Section 2, we give some basic notions of vectorial Boolean functions
and Carlitz rank, which will be used in this paper. In Section 3, we discuss affine equivalence of permutations of
Carlitz form and explain the connections between previous works on boomerang uniformity and Carlitz form. In Section 4,
we introduce the notion of convergents and poles of Carlitz form, and prove that APN permutation on even dimension must
have a very large Carlitz rank. In Section 5, we explain fundamentals of permutations of Carlitz rank 3 which will be
discussed in detail in subsequent sections. In Section 6, We give a complete characterization of the differential
uniformity of the permutations with Carlitz rank 3.
 In Section 7, we give a complete characterization of the permutations of Carlitz rank 3 having the boomerang uniformity six.
 In Section 8, as an application, we explain that involutions are very easy to find in Carlitz form, and provide all the involutions
 of Carlitz rank 3 having the boomerang uniformity six. We also show implementation results using SageMath.
 Finally, in Section 9, we
give the concluding remarks.

\section{Preliminaries}

Let $q=2^n$ with positive integer $n$ and let $\mathbb F_q=\Fbn$ be a finite field with $q$ elements. A function $F :
\Fbnn$ is called an $(n,n)$-function or a   vectorial Boolean function on $\Fbn$.
 For a given $(n,n)$-function $F : \Fbnn$, there is a unique polynomial representation for $F$ such that
  $F(x)=\sum_{i=0}^{2^n-1} c_i x^i$ with $c_i\in \Fbn$. In case when $F$ is a permutation, we call $F$ a
  permutation polynomial. For a given $F : \Fbnn$,
  the difference
    distribution table, $\ddt$, consists of elements at each position $(a,b)\in\Fbn\times \Fbn$ given by
$$\ddt_F(a,b) = \# \{x \in \Fbn :F(x) + F(x+a) = b \}$$
The differential uniformity $\Delta_F$ of $F$ is defined as
 $$\Delta_F = \displaystyle \max_{a
    \in \Fbn \setminus \{0\},\ b \in \Fbn} \ddt(a,b), $$
and such  $F$ is said to be differentially $\Delta_F$-uniform. Note that the least possible value of $\Delta_F$ is two,
in which case the function is called almost perfect nonlinear (APN).

For a given permutation $F : \Fbnn$, the boomerang connectivity table, $\bct$, consists of elements at each position
$(a,b)\in \Fbn\times \Fbn$ given by
$$\bct_F(a,b) = \# \{x \in \Fbn : F^{-1}(F(x)+b) + F^{-1}(F(x+a)+b) = a\}$$
 The boomerang uniformity $\delta_F$ of $F$ is defined as
 $$\delta_F =\displaystyle \max_{a, b
    \in \Fbn \setminus \{0\}} \bct(a,b), $$
and such  $F$ is said to be boomerang $\delta_F$-uniform. For any permutation $F$, it is well known that
 $\Delta_F\leq \delta_F$, and $\delta_F=2$ if and only if  $F$ is APN. (See \cite{CHP18+})
  In particular, $\delta_F=4$ implies $\Delta_F=4$. However, many permutation polynomials in \cite{BC18, LQSL19, CV20} turned out to be
  $\Delta_F=4$ but $\delta_F\geq 6$.

The absolute field trace $\tr : \Fbn \longrightarrow \F_{2}$ is defined by   $\tr(x) = \sum_{i=0}^{n-1}x^{2^{i}}.$ Two
functions $F: \Fbnn$ and $ F' : \Fbnn$ are called affine equivalent  if there exist affine permutations $A',A'' :
\Fbnn$ satisfying  $F' = A' \circ F \circ A'' $. It is well-known that two affine equivalent functions are of same
differential uniformity and boomerang uniformity. (for details,  see \cite{Bud15, Can, Car10, CS17})

\medskip
We define a finite field analogue of continued fraction of real numbers, which is useful to study permutation
 polynomials of Carlitz form soon to be discussed.
\begin{definition}
For given $a_1, a_2, \cdots a_s \in \Fbn$ not necessarily distinct, we define
$$
[a_1,a_2,\ldots,a_s]=  ((\cdots (a_s^{q-2}+a_{s-1})^{q-2}+\cdots )^{q-2}+a_2)^{q-2}+a_1 \quad \mbox{with}\,\, q=2^n
$$
\end{definition}
\noindent Note that the usual continued fraction of positive real numbers replaces $a^{q-2}$ with ${a}^{-1}$ and is
always
 well defined since the convergents are well defined, i.e., the denominators of the convergents are never zero for the
 case of real numbers. In the case of finite field, although the above definition is well defined since there is no division,
 one should be very careful when extending the above notation
 to the convergents because the denominator of convergents may be zero in finite fields.

\medskip
 It is known \cite{ACMT09, Nie15} that, for any permutation $F : \Fbn \longrightarrow
\Fbn$, there is $m \geq 0$ and $a_i \in \Fbn \,\, (0\leq i \leq m)$ such that
\begin{equation}\label{car}
F(x) = (\cdots((a_0x + a_1)^{q-2}+a_2)^{q-2} \cdots +a_m)^{q-2} + a_{m+1}
\end{equation}
where $a_0, a_2, \cdots , a_m \neq 0$.
 The above expression means that any permutation on $\Fbn$ is generated by inverse function $x^{q-2}$ and linear
function $ax+b\,\, (a\neq 0)$. Using our definition of continued fraction, equation \eqref{car} can be expressed as
\begin{equation}\label{car_CF} F(x) = [a_{m+1},a_m, \ldots , a_2, a_1 + a_0x]
\end{equation}


For a given permutation $F$, the above expression is not unique in general. However there is the least $m\geq 0$ among
all possible expressions of $F$. The \textbf{Carlitz rank} of  $F$ (denoted $\crk(F)$ ) is the least nonnegative
integer $m$ satisfying the above expression. Suppose that a permutation $F: \Fbn \rightarrow \Fbn$ has Carlitz rank
$m$. Then one may write $F$ as the form of the equation \eqref{car_CF}. For every $0 \leq k \leq m$, we define
\begin{equation}\label{fkxdefinition}
\begin{split}
F_k(x) &= [a_{k+1},a_k, \ldots , a_2 , a_1+a_0x] \\
&=(\cdots((a_0x + a_1)^{q-2}+a_2)^{q-2} \cdots +a_k)^{q-2} + a_{k+1}
\end{split}
\end{equation}
Then one has $F_k(x)=F_{k-1}(x)^{q-2}+a_{k+1}$ where $F_0(x)=a_0x+a_1, F_1(x)=(a_0x+a_1)^{q-2}+a_2,
F_2(x)=((a_0x+a_1)^{q-2}+a_2)^{q-2}+a_3, \cdots, F_m(x)=F(x)$.
 Also we  define $R_k(x)$ for $0\leq k \leq m$ as
\begin{equation}\label{rkxdefinition}
R_k(x) =(\cdots((a_0x + a_1)^{-1}+a_2)^{-1} \cdots +a_k)^{-1} + a_{k+1},
\end{equation}
where the domain of definition of $R_k(x)$ is the set of all $x\in \Fbn$ satisfying $F_j(x)\neq 0$ for all $0\leq j<k$.
 Since $F_j$ itself is a permutation, $F_j(x)=0$ has a unique root and it is trivial to
check the unique root is $x=\tfrac{1}{a_0}[a_1,a_2,\cdots, a_{j+1}]$.
 Also note that $R_k(x)=R_{k-1}(x)^{-1}+a_{k+1}$. It is known \cite{CMT08 } that $R_k$ has the following form of
fraction
\begin{equation}\label{alp,bet notation}
R_k(x)=\frac{\alpha_{k+1}x+\beta_{k+1}}{\alpha_kx+\beta_k},
\end{equation}
where
\begin{equation}\label{recurrence}
\alpha_{k+1}=a_{k+1}\alpha_k+\alpha_{k-1},\,\,\,  \beta_{k+1}=a_{k+1}\beta_k+\beta_{k-1} \quad  (1\leq k\leq m)
\end{equation}
with the initial conditions $\alpha_0=0, \alpha_1=a_0$ and $\beta_0=1, \beta_1=a_1$.
 The above recurrence relation can easily be derived using induction since
 \begin{align*}
 R_k(x)=R_{k-1}(x)^{-1}+a_{k+1}=\frac{\alpha_{k-1}x+\beta_{k-1}}{\alpha_k x+\beta_k}+a_{k+1}
                             =\frac{(a_{k+1}\alpha_k+\alpha_{k-1})x+(a_{k+1}\beta_k+\beta_{k-1})}{\alpha_k x+\beta_k}
 \end{align*}
From the constructions of $F_k$ and $R_k$, one concludes
\begin{align}
F(x)=R_m(x)   \quad \mbox{for all } x \notin \{\tfrac{1}{a_0}[a_1,a_2,\ldots, a_{i}]  \,\,|\,\,  i=1,2,\ldots, m\}
\label{frequality}
\end{align}
One can also easily verify
\begin{align}
F\left(\tfrac{1}{a_0}[a_1,a_2,\ldots, a_{i}]\right)=[a_{m+1}, a_m,\ldots, a_{i+1}]  \quad \mbox{for all }
               i=1,2,\ldots, m \label{frexception}
\end{align}

\section{Carlitz Form and Connection with Previous Works on Boomerang and Differential Uniformity}

\subsection{Carlitz Form and Affine Equivalence}

The following lemma explains some properties of continued fractions over finite fields.

\begin{lemma} \label{cont. frac lemma}
    Let $b_0,b_1, \ldots , b_m \in \Fbn$. Then the followings are satisfied.

 \begin{enumerate}[$(a)$]
        \item   For $ 0 \leq i < j \leq m$, one has $[b_0,b_1, \ldots , b_i] = [b_0,b_1, \ldots , b_j]$
              if and only if  $[b_{i+1},b_{i+2},\ldots, b_j] = 0$.
        \item  One has $[b_0, b_1, \ldots , b_j] = 0$ if and only if $[b_{j},b_{j-1}, \ldots ,b_0]=0$.
               More generally, for $0\leq i< j\leq m$, one has $[b_0, b_1, \ldots , b_j] = 0$ if and
only if $[b_{i+1},b_{i+2}, \ldots , b_j]^{q-2} =  [b_i,b_{i-1}, \ldots , b_0] $.
    \end{enumerate}
\end{lemma}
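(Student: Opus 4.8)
The plan is to avoid convergents and matrix representations altogether---thereby sidestepping exactly the degeneracy of convergents (vanishing denominators) flagged immediately after the definition of $[\,\cdot\,]$---and to argue directly from the defining recursion, exploiting the fact that over $\F_q$ the map $\iota:x\mapsto x^{q-2}$ is an \emph{involution}: $\iota(0)=0$ and $\iota(x)=x^{-1}$ for $x\neq0$, so $\iota$ is a bijection of $\F_q$ with $\iota\circ\iota=\mathrm{id}$.

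First I would record two elementary consequences of the definition: the front-peeling identity $[a_1,\dots,a_s]=[a_2,\dots,a_s]^{q-2}+a_1$, and, for $0\le i<j$, the splitting identity
\begin{equation*}
[b_0,\dots,b_j]=G_i\bigl([b_{i+1},\dots,b_j]^{q-2}\bigr),\qquad G_i(y):=[b_0,\dots,b_{i-1},\,b_i+y],
\end{equation*}
which holds because the inner block $b_{i+1},\dots,b_j$ evaluates to $R:=[b_{i+1},\dots,b_j]$ and the remaining operations are precisely those defining $G_i$ applied to $R^{q-2}$. The crucial structural point is that $G_i$ is a composition of $\iota$ with translations $y\mapsto y+c$, hence $G_i$ is a \emph{bijection} of $\F_q$; note also $[b_0,\dots,b_i]=G_i(0)$.

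With these in hand, part $(a)$ is immediate: since $G_i$ and $\iota$ are bijections, $[b_0,\dots,b_j]=[b_0,\dots,b_i]$ reads $G_i(R^{q-2})=G_i(0)$, equivalently $R^{q-2}=0$, equivalently $R=[b_{i+1},\dots,b_j]=0$. For the general statement of part $(b)$ the same splitting gives $[b_0,\dots,b_j]=0\iff R^{q-2}=G_i^{-1}(0)$, so everything reduces to locating the unique preimage $G_i^{-1}(0)$. This is the main obstacle, and I would isolate it as a standalone key identity:
\begin{equation*}
[b_0,b_1,\dots,b_{k-1},\,b_k+[b_k,b_{k-1},\dots,b_0]]=0\qquad\text{for all }k\ge0,
\end{equation*}
which identifies $G_i^{-1}(0)=[b_i,b_{i-1},\dots,b_0]$ and hence yields $[b_0,\dots,b_j]=0\iff[b_{i+1},\dots,b_j]^{q-2}=[b_i,\dots,b_0]$. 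I expect to prove the key identity by induction on $k$: writing $S_k:=[b_k,\dots,b_0]=S_{k-1}^{q-2}+b_k$, the characteristic-$2$ cancellation $b_k+S_k=S_{k-1}^{q-2}$ collapses the innermost entry, and applying the splitting identity once (using $\iota\circ\iota=\mathrm{id}$) shows the $k$-th expression equals the $(k-1)$-th; the base case $[b_0+[b_0]]=[0]=0$ closes the induction. The genuinely delicate bookkeeping is making sure the substitutions and the involution cancellations are applied at the correct nesting level, which is where an approach via convergents would instead force a separate treatment of vanishing denominators.

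Finally, the reversal statement $[b_0,\dots,b_j]=0\iff[b_j,\dots,b_0]=0$ I would deduce from the general part $(b)$ rather than prove afresh: applying the general equivalence to the sequence and to its reversal (with the index shifted by $i\mapsto j-i-1$) gives, with $A=[b_{i+1},\dots,b_j]$ and $B=[b_i,\dots,b_0]$, that $[b_0,\dots,b_j]=0\iff A^{q-2}=B$ while $[b_j,\dots,b_0]=0\iff B^{q-2}=A$; since $\iota$ is an involution these two conditions are equivalent, which completes the proof.
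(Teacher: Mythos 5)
Your proof is correct, and while it rests on the same two elementary ingredients as the paper's proof (front-peeling of the bracket and the fact that $x\mapsto x^{q-2}$ is an involution of $\Fbn$ fixing $0$), it is organized around a different key statement. The paper proves $(a)$ by stripping the common leading entries $b_0,\ldots,b_i$ one at a time, and proves all of $(b)$ as a single chain of biconditionals
$[b_0,\ldots,b_j]=0 \Leftrightarrow [b_1,\ldots,b_j]^{q-2}=b_0 \Leftrightarrow [b_2,\ldots,b_j]^{q-2}=[b_1,b_0]\Leftrightarrow\cdots\Leftrightarrow 0=[b_j,\ldots,b_0]$,
in which the ``more generally'' clause consists exactly of the intermediate links and the reversal is the final link. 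You instead make the outer map $G_i$ explicit, observe it is a bijection (this yields $(a)$ in essentially the same way as the paper's peeling), and prove the standalone annihilator identity $[b_0,\ldots,b_{k-1},b_k+[b_k,\ldots,b_0]]=0$ by an induction that collapses the nesting from the inside via the characteristic-$2$ cancellation $b_k+S_k=S_{k-1}^{q-2}$; the general clause of $(b)$ then follows formally from bijectivity, and the reversal becomes a corollary of the general clause applied to the word and to its reversal. Both routes are sound: the paper's chain is shorter and handles both clauses of $(b)$ at once, while your decomposition makes the logical flow (general statement implies reversal) explicit and isolates a reusable identity $G_k^{-1}(0)=[b_k,\ldots,b_0]$. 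One small remark: that identity is itself equivalent to the reversal statement for the prefix $b_0,\ldots,b_k$ with a free last entry, so your closing deduction, though perfectly valid, essentially re-derives what your induction has already established; also, your base case $[b_0+[b_0]]=[0]=0$ tacitly uses the convention that a length-one bracket is the identity, which is indeed the paper's convention (e.g.\ $p_3=[0]=0$).
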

\begin{proof}
   From $ [b_0,b_1, \ldots , b_i] = b_0 + [b_1, \ldots , b_i]^{q-2}$
 and $ [b_0,b_1, \ldots , b_j] =b_0 + [b_1, \ldots , b_j]^{q-2}$,
one has $ [b_0,b_1, \ldots , b_i] =[b_0,b_1, \ldots , b_j]$ if and only if $[b_1, \ldots , b_i] = [b_1, \ldots ,b_j]$.
Inductively, we get the claim $(a)$. The claim $(b)$ follows in a similar way.
    \begin{align*}
      [b_0,b_1, \ldots , b_j] = 0  &\Leftrightarrow [b_1, \ldots , b_j]^{q-2}=b_0
    \Leftrightarrow [b_2, \ldots , b_j]^{q-2}= [b_1, b_0] \\
    &\Leftrightarrow  \cdots \Leftrightarrow b_j^{q-2}= [b_{j-1},b_{j-2}, \ldots ,b_0]
                          \Leftrightarrow 0= [b_j, b_{j-1}, \ldots ,b_0]
    \end{align*}
\end{proof}

 Many of the cryptographic parameters such as differential and boomerang uniformities are invariant under affine
 equivalence, and we will choose a standard expression of each permutation with Carlitz rank $m$ up to affine equivalence.
\begin{prop}\label{EAequivalent lemma}
    Let $F(x) =  [a_{m+1},a_m, \ldots , a_2 , a_1+a_0x] $ be a permutation on $\Fbn$ with $a_0, a_2, \ldots , a_m \neq 0$.
  \begin{enumerate}[$(1)$]
    \item If $m = 1$, then $F$ is affine equivalent to the inverse function $x^{q-2}$.
    \item If $m \geq2$, then $F$ is affine equivalent to $F'$ where
                $$F'(x) =[0,b_m, \ldots , b_3, 1 , x]  \quad \mbox{with }\, b_i =a_2^{(-1)^{i+1}}a_{i} \,\,\, (2\leq i\leq m)$$
    Therefore, to study cryptographic properties of a permutation
    $F(x) =  [a_{m+1}, a_m,\ldots, a_2, a_1+a_0x] $ with $m\geq 2$,
    one may assume $a_{m+1}=0, a_2=1, a_1=0$ and $a_0=1$ so that one only needs to consider the
following form
$$F(x) =  [0,a_m, \ldots, a_3, 1, x]$$
of permutations.
  \end{enumerate}
\end{prop}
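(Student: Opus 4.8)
The plan is to realize $F$ as a composition of the inverse map $\iota(x)=x^{q-2}$ with translations and scalings, and then to absorb the superfluous constants into affine pre- and post-maps one parameter at a time. Throughout I will use three elementary identities in characteristic two: $\iota(cx)=c^{-1}\iota(x)$ for $c\neq 0$ (equivalently $c\,y^{q-2}=(c^{-1}y)^{q-2}$, valid for all $y$ including $y=0$), together with the facts that post-composing with a translation adds a constant to the outer entry $a_{m+1}$, and pre-composing with an affine map $x\mapsto cx+d$ alters only the innermost data $a_1+a_0x$. Part $(1)$ is then immediate: for $m=1$ one has $F(x)=(a_0x+a_1)^{q-2}+a_2$, so with $A'(y)=y+a_2$ and $A''(x)=a_0x+a_1$ (a permutation since $a_0\neq0$) we get $F=A'\circ \iota\circ A''$.

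For part $(2)$ I would normalize in three stages. First, post-compose with the translation $y\mapsto y+a_{m+1}$ to replace the outer entry $a_{m+1}$ by $0$; this is legitimate since $-a_{m+1}=a_{m+1}$. Next, pre-compose with $x\mapsto \tfrac{1}{a_0}x+\tfrac{a_1}{a_0}$, which turns the innermost entry $a_1+a_0x$ into $x$ and leaves every other entry untouched, so that $F$ has been reduced, up to affine equivalence, to $H(x)=[0,a_m,\ldots,a_2,x]$. The only remaining task is to rescale $a_2$ to $1$ while keeping the argument slot equal to $x$ and the outer entry equal to $0$.

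The engine for the last stage is the scaling identity
$$ c\,[b_1,b_2,\ldots,b_s]=[c\,b_1,\;c^{-1}b_2,\;c\,b_3,\ldots,\,c^{(-1)^{s-1}}b_s], $$
i.e. the $j$-th entry $b_j$ is multiplied by $c^{(-1)^{j-1}}$; this follows by a one-line induction on $s$ using $[b_1,\ldots,b_s]=b_1+[b_2,\ldots,b_s]^{q-2}$ together with $c\,y^{q-2}=(c^{-1}y)^{q-2}$. Applying this with $c=t$ to $H(sx)=[0,a_m,\ldots,a_2,sx]$ and post-multiplying by $t$, the entry $a_2$ (sitting in position $m$) acquires the factor $t^{(-1)^{m-1}}$ while the argument slot acquires $t^{(-1)^m}s$. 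Choosing $t=a_2^{(-1)^m}$ forces $a_2\mapsto 1$, and then $s=a_2^{-1}$ forces the argument slot back to $x$; a short exponent computation ($j=m+2-i$ for the slot holding $a_i$) shows each $a_i$ is simultaneously sent to $a_2^{(-1)^{i+1}}a_i=b_i$, while the outer $0$ is preserved. Collecting the three stages gives $F'=A'\circ F\circ A''$ with $A'(y)=t\,(y+a_{m+1})$ and $A''(x)=\tfrac{1}{a_0}(sx)+\tfrac{a_1}{a_0}$, both affine permutations.

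The routine parts are the first two normalization stages; the real content, and the only place demanding care, is the bookkeeping of the alternating exponents in the scaling identity. In particular one must verify that the single scalar $t=a_2^{(-1)^m}$ together with $s=a_2^{-1}$ simultaneously achieves $a_2\mapsto 1$, returns the argument slot to $x$, fixes the outer entry at $0$, and produces exactly the alternating pattern $b_i=a_2^{(-1)^{i+1}}a_i$. Matching the continued-fraction position $j$ to the subscript $i$ is the step where an off-by-one in the sign exponent would be easiest to make, so I would carry out that index computation explicitly.
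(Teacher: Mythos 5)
Your proof is correct and follows essentially the same route as the paper: the paper's one-line computation $a_2^{(-1)^{m}}F\bigl(\tfrac{1}{a_0a_2}x+\tfrac{a_1}{a_0}\bigr)+a_2^{(-1)^{m}}a_{m+1}=[0,b_m,\ldots,b_3,1,x]$ uses exactly your affine maps ($A''(x)=\tfrac{1}{a_0a_2}x+\tfrac{a_1}{a_0}$, $A'(y)=a_2^{(-1)^m}(y+a_{m+1})$) and implicitly the same alternating scaling identity $c\,y^{q-2}=(c^{-1}y)^{q-2}$ that you state and verify explicitly. Your staged presentation, with the index computation $j=m+2-i$ carried out in detail, is just a more explicit write-up of the paper's argument.
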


\begin{proof}
    If $ m = 1$, then $F(x) = (a_0x + a_1)^{q-2}+ a_2$ and we have
    $ F \left (\dfrac{x+a_1}{a_0}\right ) +a_2 = x^{q-2}$. Now suppose $ m \geq 2$. Then it holds that
\begin{equation*}
\begin{split}
   a_2^{(-1)^{m}}F  &\left ( \frac{1}{a_0a_2}x + \frac{a_1}{a_0}\right ) + a_2^{(-1)^{m}}a_{m+1} \\
             &=
(\cdots ( ( ( x^{q-2}+1 )^{q-2} + a_2a_3 )^{q-2} + a_2^{-1}a_4 )^{q-2} \cdots +a_2^{(-1)^{m+1}}a_{m})^{q-2} \\
&=[0,a_2^{(-1)^{m+1}}a_{m}, \ldots , a_2^{-1}a_4,a_2a_3, 1 , x]
\end{split}
\end{equation*}
\end{proof}

Many of the previous works
  \cite{CV20, LWY13e, TCT15, ZHS14, PTW16, LQSL19, QTTL13, QTLG16, YWL13,PT17} on cryptographic parameters discuss
  on the functions which are modifications of the inverse function at some set of points. Although not explicitly mentioned,
  some of the above articles in fact discuss permutations of low Carlitz rank.
  Since one of the natural approaches is the study of inverse functions
modified at some (small) set of points, and these functions can be expressed as permutations of (low) Carlitz rank
under simple types of affine equivalences (which will be stated here) using the equations \eqref{alp,bet notation},
\eqref{frequality} and \eqref{frexception}, plenty of the previous works can be rephrased in the language of Carlitz
form. To be more specific, we present the following lemma which gives a connection between Carlitz form and
 previous works on differential and boomerang uniformity. Please note that, although our main interest
 is the reduced form $F(x) =  [0,a_m, \ldots, a_3, 1, x]$, we state the following lemma for arbitrary
 $F$ for the sake of generality.

    \begin{lemma}\label{inverse lemma}
        Let $F(x) =  [a_{m+1},a_m, \ldots , a_2 , a_1+a_0x] $ be a permutation on $\Fbn$ with $a_0, a_2, \ldots , a_m \neq 0$.
         Then the followings are satisfied.
         \begin{enumerate}[$(a)$]
        \item If $\alpha_{m} \neq 0$, then there is a subset $P \subset \Fbn $
        with $\# P \leq m$  and affine permutations $\ell_1, \ell_2 : \Fbn
        \rightarrow \Fbn$ satisfying $(\ell_2 \circ F \circ \ell_1)
        (x)=\frac{1}{x}$ for all $x\not \in P$.
        \item If $\alpha_{m} = 0$, then there is a subset $P \subset \Fbn $
        with $\# P \leq m$  and affine permutations $\ell : \Fbn
        \rightarrow \Fbn$ satisfying $(\ell \circ F)
        (x)=x$ for all $x\not \in P$.
         \end{enumerate}
    \end{lemma}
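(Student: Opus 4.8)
The plan is to exploit the fact, recorded in \eqref{frequality}, that $F$ coincides off the set $P = \{\tfrac{1}{a_0}[a_1,\ldots,a_i] : 1 \le i \le m\}$ (which has at most $m$ elements) with the linear fractional map $R_m(x) = \frac{\alpha_{m+1}x + \beta_{m+1}}{\alpha_m x + \beta_m}$. Once $F$ is replaced on $\Fbn \setminus P$ by this M\"obius transformation, both claims reduce to elementary normalizations of $R_m$ by affine maps, and the two cases $\alpha_m \neq 0$ and $\alpha_m = 0$ correspond exactly to whether $R_m$ is a genuine fractional transformation or merely affine.

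The first step is to establish nondegeneracy of $R_m$, namely that $\alpha_{m+1}\beta_m + \alpha_m\beta_{m+1} = a_0$. Using the recurrence \eqref{recurrence} one checks that $D_k := \alpha_{k+1}\beta_k + \alpha_k\beta_{k+1}$ is independent of $k$: expanding $\alpha_{k+1}$ and $\beta_{k+1}$ makes the two cross terms $a_{k+1}\alpha_k\beta_k$ cancel in characteristic two, leaving $D_k = \alpha_k\beta_{k-1} + \alpha_{k-1}\beta_k = D_{k-1}$. The initial data $\alpha_0 = 0,\ \alpha_1 = a_0,\ \beta_0 = 1,\ \beta_1 = a_1$ give $D_0 = a_0$, hence $D_m = a_0 \neq 0$. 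This guarantees that $R_m$ is invertible as a map on the projective line, and, in case $(b)$, that the surviving coefficients do not vanish.

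For part $(a)$, assuming $\alpha_m \neq 0$, I would precompose with the affine permutation $\ell_1(x) = (x + \beta_m)/\alpha_m$, chosen so that the denominator becomes $\alpha_m\ell_1(x) + \beta_m = x$. A direct substitution then gives $F(\ell_1(x)) = \frac{\alpha_{m+1}}{\alpha_m} + \frac{\lambda}{x}$ with $\lambda = (\alpha_{m+1}\beta_m + \alpha_m\beta_{m+1})/\alpha_m = a_0/\alpha_m \neq 0$, and postcomposing with $\ell_2(y) = (y + \alpha_{m+1}/\alpha_m)/\lambda$ yields $\frac{1}{x}$. The exceptional set becomes $\ell_1^{-1}(P)$, still of size at most $m$ since $\ell_1$ is a bijection; in particular the pole $x = \beta_m/\alpha_m$ already lies in $P$, being the point $\tfrac{1}{a_0}[a_1,\ldots,a_m]$, so no new bad point is introduced.

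For part $(b)$, $\alpha_m = 0$ forces the denominator of $R_m$ to be the constant $\beta_m$, and the determinant identity gives $\alpha_{m+1}\beta_m = a_0 \neq 0$, so $\beta_m \neq 0$ and $\alpha_{m+1} \neq 0$; hence $R_m(x) = \frac{\alpha_{m+1}}{\beta_m}x + \frac{\beta_{m+1}}{\beta_m}$ is an affine permutation, and taking $\ell = R_m^{-1}$ gives $(\ell \circ F)(x) = x$ for all $x \notin P$. The only point requiring care throughout is the bookkeeping of the exceptional set, verifying that the $\le m$ points of \eqref{frequality} already absorb the pole of $R_m$ and the point sent to $0$ under $\tfrac{1}{x}$, but this is routine, so I do not anticipate a genuine obstacle.
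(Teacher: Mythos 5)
Your proof is correct and takes essentially the same route as the paper's: both start from \eqref{frequality}, establish the invariant $\alpha_{m+1}\beta_m+\alpha_m\beta_{m+1}=a_0$ by the same cancellation in the recurrence \eqref{recurrence}, and then normalize $R_m$ by explicit affine maps (your $\ell_1,\ell_2$ differ from the paper's only by a harmless placement of the factor $a_0$), with part $(b)$ handled identically via $\alpha_{m+1}\beta_m=a_0\neq 0$. One small imprecision: the pole $\beta_m/\alpha_m$ need not equal $\tfrac{1}{a_0}[a_1,\ldots,a_m]$ itself (by Proposition \ref{propfundamental} it equals $A_j$ for the least $j$ with $K_{m,j}=0$, which can be strictly smaller than $m$, as in the paper's example with $m=8$), but it does lie in $P$, which is all your bookkeeping actually requires.
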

    \begin{proof} Recall that, from the equation \eqref{frequality},
    $$F(x) =\frac{\alpha_{m+1}x+\beta_{m+1}}{\alpha_mx+\beta_m}
        \text{ for all } x \notin \{\tfrac{1}{a_0}[a_1,a_2,\cdots, a_{i}]  \,\,|\,\,  i=1,2,\cdots, m\}$$
           Let us first consider the case $\alpha_{m} \neq 0$.
         It is trivial to check
           $\alpha_i\beta_{i+1}+\alpha_{i+1}\beta_i=a_0$ for all $0\leq i\leq m$
           (in particular, $\alpha_m\beta_{m+1}+\alpha_{m+1}\beta_m=a_0)$. Using this property, one has
         $$\frac{\alpha_{m+1}x+\beta_{m+1}}{\alpha_mx+\beta_m}
         =\frac{a_0}{\alpha_m (\alpha_m x+\beta_m)}+\frac{\alpha_{m+1}}{\alpha_m} $$
        Thus defining
        $\ell_1(x)=\dfrac{a_0 x+\beta_m}{\alpha_m} \text{ and }
        \ell_2(x)=\alpha_m x +\alpha_{m+1},
        $
        we have
        $$\ell_2 \circ F \circ \ell_1
        (x)=\frac{1}{x}
        \text{ for all } x \not \in
         \left\{   \frac1{a_0}\left(\alpha_m\cdot \frac1{a_0}[a_1,a_2,\ldots, a_{i}] +\beta_m\right)
                  \,\,|\,\, i = 1 , \ldots , m\right\}\overset{{\rm def}}{=}P $$
        Now consider the case $\alpha_{m} = 0$. In this case, both $\alpha_{m+1}$ and $\beta_m$ are nonzero  because
        $\alpha_{m+1}\beta_m=\alpha_m\beta_{m+1}+\alpha_{m+1}\beta_m=a_0\neq 0$.
          Therefore defining $\ell(x)=\dfrac{\beta_mx + \beta_{m+1}}{\alpha_{m+1}}$,
             \begin{align*}
         \ell \circ F (x)= \ell \left(\frac{\alpha_{m+1}x+\beta_{m+1}}{\beta_m}\right) = x
         \quad \mbox{for all} \,\,  x \not \in
           \left\{  \tfrac{1}{a_0}[a_1,a_2,\ldots, a_{i}] \,\,|\,\, i = 1 , \ldots , m\right\}\overset{{\rm def}}{=}P
           \end{align*}
    \end{proof}

\begin{remark}
 Lemma \ref{inverse lemma} implies that a permutation
     $F$ with $\crk(F) = m$ is affine equivalent to the inverse function or identity function
       except on a subset $P$ of $\Fbn$ with $\# P\leq m$.
\end{remark}

\subsection{Carlitz Form in Previous Works}

Now we explain the connections between  Lemma \ref{inverse lemma} and previous works.

\medskip
\noindent {\bf When $\crk(F) = 1$ :} from the Proposition \ref{EAequivalent lemma}-$(1)$, $F$ is affine equivalent to
inverse function, which has been extensively studied.

\medskip
\noindent {\bf When  $\crk(F) = 2$ :} from the Proposition \ref{EAequivalent lemma}-$(2)$, we may assume
 $F(x) = [0,1,x]=(x^{q-2}+1)^{q-2}$. Using Lemma \ref{inverse lemma}-$(a)$, it is easy to check $\ell_1(x)=x+1=\ell_2(x)$
such that
   $$\ell_2\circ F \circ \ell_1 (x)= F(x+1)+1=((x+1)^{q-2}+1)^{q-2}+1= f_2(x)$$ where
\begin{equation*}
f_2(x) = \begin{cases}
0 &\text{ if } x= 1 \\
1 &\text{ if } x= 0 \\
\frac{1}{x} &\text{ otherwise }
\end{cases}
\end{equation*}
 The differential uniformity of $f_2$ is  well known in \cite{LWY13e, ZHS14}. Also the boomerang uniformity of $f_2$
 is recently classified in \cite{LQSL19} completely. For example, it is shown in \cite{LQSL19} that one has the lowest
 boomerang uniformity $6$ in this family if and only if $n\not\equiv 0 \pmod{3}$.

\medskip

\noindent {\bf When $\crk(F) = 3$ :} from the same proposition,  we may assume $F(x) = [0,\gamma,1,x]$ for some
$\gamma\neq 0 \in \Fbn$. Since $\gamma\neq 0$, one has either $\gamma=1$ or $\gamma \in \Fbn\setminus \F_2$.
\begin{enumerate}[]
\item  {\bf Case : $\gamma = 1$ with $\crk(F)=3$} \\
 Then $F(x) = [0,1,1,x]=((x^{q-2}+1)^{q-2}+1)^{q-2}$ is written as
$F(x) =
\begin{cases}
0 &\text{ if } x = 0 \\
1 &\text{ if } x = 1 \\
x+1 &\text{ otherwise}.
\end{cases}$ \\

\noindent This is the case where $\alpha_3=0$ such that
 $F(x)=x+1=\tfrac{\alpha_4x+\beta_4}{\beta_3}$  for $x\notin \{0,1\}$. One easily checks that
    $F(x) + F(x+1) = 1$ for all $x\in \Fbn$.
 Therefore we have the value $2^n$ as the differential and boomerang uniformity, which is uninteresting.
 \end{enumerate}

\noindent
Now assume $\gamma\neq 0,1$. Then from the information of $a_0=1,a_1=0,a_2=1,a_3=\gamma, a_4=0$,
   using the recurrences in \eqref{recurrence},
one gets $\{(\alpha_i,\beta_i)=(0,1),(1,0),(1,1),(\gamma +1,\gamma),(1,1) \,\, |\,\, i=0,1,2,3,4 \}$. Therefore by the
equations \eqref{alp,bet notation}, \eqref{frequality}, \eqref{frexception}, on gets
$$
F(x)=R_3(x)=\frac{\alpha_4x+\beta_4}{\alpha_3x+\beta_3}=\frac{x+1}{(\gamma+1)x+\gamma} \quad \mbox{for}\,\, x\notin
           \left\{[a_1,\ldots,a_i] \,\, | \,\, i=1,2,3\right\}=\left\{0,1,\frac{\gamma}{\gamma+1}\right\}
$$
Also, from the Lemma \ref{inverse lemma}, one has
$$
\ell_1(x)=\frac{a_0x+\beta_3}{\alpha_3}=\frac{x+\gamma}{\gamma+1}, \quad \ell_2(x)=\alpha_3x+\alpha_4=(\gamma+1)x+1
$$
and
\begin{align*}
\ell_2\circ F\circ\ell_1(x) =
          \begin{cases}
            \dfrac{1}{x} &\text{ for }  x\notin
           \left\{\alpha_3[a_1,\ldots,a_i]+\beta_3 \,\, | \,\, i=1,2,3\right\}=\left\{\gamma,1,0\right\} \\
                 0, \dfrac{1}{\gamma}, 1 &\text{ for } x= \gamma,1,0, \,\,\mbox{respectively}
          \end{cases}
\end{align*}
That is, letting $f_3(x)=\ell_2\circ F\circ\ell_1(x)$, $F(x)=[0,\gamma,1, x]$ is affine equivalent to

\begin{enumerate}[]
\item  {\bf Case :  $\gamma \in \Fbn\setminus \F_2 $ with $\crk(F)=3$}\\
  $$  f_3(x) = (\gamma+1) F(\tfrac{x+\gamma}{\gamma +1})+ 1 =
   \begin{cases}
1 &\text{ if } x= 0 \\
\tfrac1\gamma &\text{ if }  x= 1 \\
0 &\text{ if }  x= \gamma \\
\frac{1}{x} &\text{ if } x\neq 0,1,\gamma
   \end{cases}
  $$
\end{enumerate}
When $\gamma\in \F_4$ (i.e., $\gamma^2+\gamma+1=0)$, the differential uniformity of $f_3$ is well known in
\cite{LWY13e,ZHS14}.  Also the boomerang uniformity $\delta_{f_3}$ of $f_3$ with $\Delta_{f_3}=4$ is recently computed
in \cite{CV20} completely. For example, it is shown in \cite{CV20} that one has $\delta_{f_3}=6$ and $\Delta_{f_3}=4$
if and only if
 $n\equiv 2 \pmod{4}$ with $6 \not| n$.

However for general  $\gamma \in \Fbn \setminus \F_4$, no satisfactory answer exist so far. The boomerang uniformity is
unknown  for this general case. Also,
 some plenty cases of
differential uniformity of $f_{3}$ is dealt in \cite{LWY13e} but no complete classification exist so far.
 Especially, there is no known \emph{polynomial time algorithm} for determining the differential uniformity of $f_3$ with
arbitrary $\gamma \in \Fbn \setminus \F_4$.
  In Section \ref{differentialsec} and \ref{boomerangsec}, we will give satisfactory answers for the differential
  and the boomerang uniformity of arbitrary $f_3$ (i.e., permutations of Carlitz rank 3).

\medskip

Related to the function $f_3$, there are some approach \cite{LWY13e,LQSL19,CV20} to study cryptographic parameters of
the permutations of the form $F(x)=\pi(x)^{q-2}$ with $\pi=(c_1\, c_2\, \ldots\, c_k)$ a (cyclic) $k$-cycle, i.e.,
 $\pi(c_i)=c_{i+1}$ and $\pi(x)=x$ for $x\notin \{c_1,\ldots, c_k\}$. For example, $f_3(x)=\pi(x)^{q-2}$ with
$\pi=(0\, 1\, \gamma)$. We briefly show that these are (as expected) special cases of permutations of Carlitz form.

\begin{prop}\label{piinverse}
Let $F(x)=\pi(x)^{q-2}$ with $k$-cycle  $\pi=(c_1\, c_2\, \ldots\, c_k),\,\, k\geq 3$. Then one has
 the Carlitz rank $\crk(F) \leq 3k+2$ when $\pi$ contains no zero, and $\crk(F) \leq 3k-4$ when $\pi$ contains
  zero.
\end{prop}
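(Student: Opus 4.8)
The plan is to exploit two structural facts about the Carlitz rank: it is invariant under (two-sided) affine equivalence, and it is \emph{subadditive under composition}, i.e. $\crk(G\circ H)\le \crk(G)+\crk(H)$. The latter follows by splicing the two Carlitz forms together: writing each factor as an alternating composition of the inverse map $I(x)=x^{q-2}$ and affine maps, the innermost affine map of the outer factor merges with the outermost affine map of the inner factor into a single affine map, so no extra copy of $I$ is created at the junction. Any scalar produced when $I$ meets a non-translation affine map telescopes inward and is finally absorbed by the innermost affine map $a_0x+a_1$, exactly by the manipulation used in the proof of Proposition~\ref{EAequivalent lemma}. I would first record this as a lemma, checking the affine-invariance in the same breath (absorbing the outer affine into $a_{m+1}$ and the inner one into $a_0x+a_1$).

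Next I would isolate the cost of a single transposition. Let $\tau_{a,b}$ denote the transposition swapping $a,b$ and fixing all else. A direct evaluation gives $I\circ\tau_{0,1}=f_2$, where $f_2$ is the rank-$2$ map of the $\crk(F)=2$ discussion; composing on the left with $I$ (and using $I\circ I=\mathrm{id}$) yields $\tau_{0,1}=I\circ f_2$, so $\crk(\tau_{0,1})\le \crk(I)+\crk(f_2)=1+2=3$. By affine invariance every transposition satisfies $\crk(\tau_{a,b})\le 3$, since $\tau_{a,b}=\ell^{-1}\circ\tau_{0,1}\circ\ell$ for the affine map $\ell$ with $\ell(a)=0$ and $\ell(b)=1$.

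Now I would assemble $F$. Since $F=I\circ\pi$ and a $k$-cycle factors into $k-1$ transpositions, say $\pi=(c_1\,c_2)(c_2\,c_3)\cdots(c_{k-1}\,c_k)$, subadditivity gives $\crk(F)\le 1+3(k-1)=3k-2\le 3k+2$, settling the general bound. For the sharper bound when $\pi$ contains $0$, I would rotate the cycle so that $c_1=0$ and group the leading inverse together with the first transposition. With $c=c_2$ and $\ell(x)=c^{-1}x$ one has $I\circ\ell^{-1}=M\circ I$ for $M(y)=c^{-1}y$, hence
\[
I\circ\tau_{0,c}=M\circ(I\circ\tau_{0,1})\circ\ell=M\circ f_2\circ\ell,
\]
so that $\crk(I\circ\tau_{0,c})\le 2$, a saving of $2$ over the naive bound $1+3$. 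Consequently $\crk(F)\le 2+3(k-2)=3k-4$, matching the stated estimate; note $k\ge 3$ keeps the count of remaining transpositions nonnegative.

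The routine parts are the transposition factorization of a cycle and the evaluations of $f_2$, $I\circ\tau_{0,1}$, and $I\circ\tau_{0,c}$. The delicate step I would write out in full is the subadditivity lemma, specifically the claim that the scalar created when an inverse meets a general affine map telescopes all the way to the innermost affine map without spawning a new copy of $I$. The only genuinely clever point is the zero-case saving: recognizing that composing the leading $I$ with a transposition moving $0$ collapses into a single rank-$2$ map, which is precisely what upgrades the generic bound $3k-2$ to $3k-4$.
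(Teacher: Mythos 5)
Your proof is correct, but it takes a genuinely different route from the paper's, so a comparison is worthwhile. The paper never formulates a subadditivity lemma: it works with the explicit rank-3 Carlitz word $f_c(x)=[0,\tfrac1c,c,\tfrac1c+\tfrac1{c^2}x]$ for the transposition $(0\,c)$, decomposes the cycle as a \emph{star} through $c_1$, namely $\pi=(c_1\,c_k)(c_1\,c_{k-1})\cdots(c_1\,c_2)$ with each $(c_1\,c_j)=f_{c_1}\circ f_{c_j}\circ f_{c_1}$, cancels the adjacent involutions $f_{c_1}\circ f_{c_1}=\mathrm{id}$ to obtain $\pi=f_{c_1}\circ f_{c_k}\circ\cdots\circ f_{c_2}\circ f_{c_1}$, and merges the leading inverse with the outermost factor (so that $I\circ f_{c_1}=[\tfrac1{c_1},c_1,\tfrac1{c_1}+\tfrac1{c_1^2}x]$ has rank $2$); counting gives $2+3k=3k+2$ in the zero-free case and $2+3(k-2)=3k-4$ when $0$ occurs. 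You instead use the \emph{path} decomposition $(c_1\,c_2)(c_2\,c_3)\cdots(c_{k-1}\,c_k)$ into $k-1$ transpositions, and you observe that affine conjugation costs nothing in Carlitz rank, so \emph{every} transposition --- not just those moving $0$ --- has rank at most $3$; combined with your subadditivity lemma this yields $1+3(k-1)=3k-2$ in the zero-free case, which is actually sharper than the stated $3k+2$ (and hence implies it), and the same $3k-4$ in the zero case via the same merging trick the paper uses. The trade-off: the paper's argument is fully explicit and self-contained --- it produces an actual Carlitz word with named coefficients, needing nothing beyond the formula for $f_c$ --- whereas yours is more modular, and routing the conjugation through rank-$0$ affine maps rather than through the rank-$3$ maps $f_{c_1}$ is precisely what buys the improvement from $3k+2$ to $3k-2$. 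Your two supporting steps are sound: the splicing argument for $\crk(G\circ H)\le\crk(G)+\crk(H)$ (merging the junction affine maps and telescoping the scalar inward via $b_0y^{q-2}=(b_0^{-1}y)^{q-2}$, exactly the manipulation in the proof of Proposition \ref{EAequivalent lemma}), and the evaluations $I\circ\tau_{0,1}=f_2$ and $I\circ\tau_{0,c}=M\circ f_2\circ\ell$, which give $\crk(\tau_{0,1})\le 3$ and $\crk(I\circ\tau_{0,c})\le 2$ as claimed.
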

\begin{proof}
Let $c\neq 0\in \Fbn$. Then it is easy to check (See p. 358 in \cite{Nie15})
$$f_c(x)=[0,\tfrac1c,c,\tfrac1c+\tfrac1{c^2}x]=
 \left(\left(\left(\tfrac1{c^2}x+\tfrac1c\right)^{q-2}+c\right)^{q-2}+\tfrac1c\right)^{q-2}$$
is a transposition $(0\, c)$. If $c_1$ and $c_2$ are nonzero,
 then $(c_1\,c_2)=(0\,c_1)(0\, c_2)(0\, c_1)=f_{c_1}\circ f_{c_2}\circ f_{c_1}$.
Then, any $k$-cycle $\pi=(c_1\, c_2\, \ldots\, c_k)$ containing no zero is written as
 \begin{align*}
 \pi &=(c_1\, c_{k})(c_1\,c_{k-1})\cdots (c_1\, c_2) \\
     &=(f_{c_1}\circ f_{c_k}\circ f_{c_1})\circ(f_{c_1}\circ f_{c_{k-1}}\circ f_{c_1}) \circ \cdots\circ
                     (f_{c_1}\circ f_{c_2}\circ f_{c_1}) \\
     &=f_{c_1}\circ f_{c_k}\circ f_{c_{k-1}} \circ \cdots\circ f_{c_2}\circ f_{c_1}
 \end{align*}
Therefore $\pi(x)^{q-2}= f'_{c_1}\circ f_{c_k}\circ f_{c_{k-1}} \circ \cdots\circ f_{c_2}\circ f_{c_1}$
 with $f'_{c_1}=[\tfrac1c,c,\tfrac1c+\tfrac1{c^2}x]$ and one finds $\crk (\pi(x)^{q-2})\leq 3k+2$.
When $\pi=(c_1\, c_2\, \ldots\, c_k)$ contains $0$, say $c_1=0$, then the same method shows
 \begin{align*}
 \pi =(0\, c_{k})(0\,c_{k-1})\cdots (0\, c_2) =f_{c_k}\circ f_{c_{k-1}}\circ \cdots \circ f_{c_2}
 \end{align*}
and one concludes $\crk (\pi(x)^{q-2})\leq 3k-4$.
\end{proof}
Although not the exact form of $F(x)=\pi(x)^{q-2}$ is discussed, the articles \cite{PT17,TCT15,ZHS14,QTTL13,QTLG16}
also study
 the differential uniformity of modifications of the inverse function, and it is also possible to describe their
 constructions via Carlitz form with Carlitz rank depending on the size of exceptional sets. However,
 when the exceptional set is large, we do not have a satisfactory method for computing
   the differential uniformity other than the approach
 provided in \cite{PT17,TCT15,ZHS14,QTTL13,QTLG16}.
\begin{remark}
Any permutation on $\F_q$ with $q=2^n$ can be regarded as an element of $S_q$, a symmetric group on $q$ letters, and
any permutation in $S_q$ is a composition of at most $\tfrac{3q}{2}$ transpositions of the form $(0\,c)$ with $c\neq
0$. Therefore an obvious upper bound of the Carlitz rank of permutations on $\F_q$ is $\tfrac{9q}{2}$. We will show, in
next section, that when a permutation $F$ on even dimension $($i.e., $q=2^n$ with $n$ even$)$ has Carlitz rank not too
large $($i.e., if $\crk(F)<\frac{q}{6}$$)$, then $F$ is not APN.
\end{remark}

\section{New Cryptographic Tool using Convergents of Continued Fraction over Finite Fields}

\subsection{Notion of Poles and Convergents}

  Most of the articles \cite{ACMT09, CMT08, CMT14} on permutation $F(x)=[a_{m+1}, a_m,\ldots, a_2, a_1+a_0x]$ of
 Carlitz form define poles as the set of elements $\tfrac{\beta_i}{\alpha_i}$ (which is in fact
 an analogue of convergent of continued fraction of real numbers), where
$\alpha_i$ and $\beta_i$ are defined via recurrence relations in the equation \eqref{recurrence}.
 However a natural definition of poles would be the set of all $x\in \Fbn$
where the linear fractional transformation approximation by $R_m(x)$ differs from $F(x)$. That is, the set of all
$x=\tfrac{1}{a_0}[a_1,a_2,\cdots, a_{i}]$
 for $1\leq i\leq m$ in the equation \eqref{frequality}.
 Since $F$ is a permutation, the inverse permutation $F^{-1}$ also has poles, and both poles have one to one
correspondence with each other, which is explained in the equation \eqref{frexception}.
 To summarize, one may guess that there are some
relations between the two sets
\begin{align*}
\{\tfrac{1}{a_0}[a_1,a_2,\cdots, a_{i}]  \,\,|\,\,  i=1,2,\cdots, m\} \quad\mbox{and}\quad
\{\tfrac{\beta_i}{\alpha_i}\,\,|\,\,  i=1,2,\cdots, m\}
\end{align*}
\begin{remark} Unlike the real number case, the expression $\tfrac{\beta_i}{\alpha_i}$ maybe undefined for some $i$ because
 $\alpha_i$ may be zero multiple times for some $0\leq i\leq m+1$.
 \end{remark}

 Our aim in this section is to show that the two sets above are (if well defined)
the same sets (excluding multiplicities), and also to show that there is well defined correspondence between  the two
sets. For given $a_0, a_1, \ldots, a_{m+1} \in \Fbn$ with $a_0, a_2, \ldots, a_m\neq 0$ and
  sequences $\alpha_i, \beta_i \,\, (0\leq i\leq m+1)$ satisfying the recurrences in \eqref{recurrence}
with initial conditions $\alpha_0=0, \alpha_1 = a_0$ and $\beta_0=1, \beta_1=a_1$, we define as follows.
\begin{definition}
\begin{align*}
A_i=\tfrac{1}{a_0}[a_1,a_2,\cdots, a_{i}] \,\,\, (1\leq i\leq m+1)
  \qquad \mbox{and} \qquad K_{i,j}=\alpha_i\beta_j+\alpha_j\beta_i  \,\,\, (0\leq i,j\leq m+1)
\end{align*}
\end{definition}

\begin{definition} \hfill
\begin{enumerate}

 \item For a set $\mathbb Z_{[1,m+1]}\overset{\text{\rm def}}{=}\{1,2,\ldots,m+1\}$, we define a relation $\sim$ on  $i, j\in \mathbb Z_{[1,m+1]}$
            as $i\sim j$ if $A_i=A_j$.

 \item For a set $\mathbb Z_{[0,m+1]}\overset{\text{\rm def}}{=}\{0,1,2,\ldots,m+1\}$, we define a relation $\approx$ on  $i, j\in \mathbb Z_{[0,m+1]}$
            as $i\approx j$ if $K_{i,j}=0$.
\end{enumerate}
\end{definition}

\noindent Using the above definitions, we obtain the following simple lemma.
\begin{lemma}\label{Kfundamental} \hfill
\begin{enumerate}[$(a)$]
\item Both the relations $\sim$ and $\approx$ are the equivalence relations on $\mathbb Z_{[1,m+1]}$ and on $\mathbb
Z_{[0,m+1]}$ respectively.

 \item For all $0\leq i,j\leq m+1$ with $j\geq 2$, one has the recurrence $K_{i,j}=a_jK_{i,j-1}+K_{i,j-2}$.
 \item For all $1\leq i \leq m+1$, $K_{i,i-1}=a_0$ $($i.e., $\alpha_i\beta_{i-1}+\beta_i\alpha_{i-1}=a_0$$)$
 \item For all $0\leq i\leq m+1$, $\alpha_i$ or $\beta_i$ is nonzero.
 \item For all $1\leq i\leq m+1$, $\alpha_i$ or $\alpha_{i-1}$ is nonzero.
 \item Let $1\leq i\leq m+1$. Then for all $1\leq j\leq m+1$, $K_{i,j}$ or $K_{i,j-1}$ is
nonzero.
 \item Suppose $\alpha_k=0=\alpha_i$ for some $0\leq k<i\leq m+1$, and suppose $\alpha_s\neq 0$ for all $k<s<i$. Then
            $[a_i, a_{i-1}, \ldots, a_{k+2}]=0$ with $k+2\leq i$.
 \item Suppose $K_{i,k}=0=K_{i,j}$ for some $1\leq i\leq m+1$ and $1\leq k<j \leq m+1$, and suppose $K_{i,s}\neq 0$ for all $k<s<j$. Then
            $[a_i, a_{i-1}, \ldots, a_{k+2}]=0$ with $k+2\leq i$.
\end{enumerate}
\end{lemma}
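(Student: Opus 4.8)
The plan is to prove the eight items in the order (b), (c), (d), (e), (a), (f), (g), (h), since each group feeds the next. The organizing observation is that $K_{i,j}=\alpha_i\beta_j+\alpha_j\beta_i$ is, in characteristic two, the determinant of the matrix with columns $v_i=(\alpha_i,\beta_i)$ and $v_j=(\alpha_j,\beta_j)$, so $K_{i,j}=0$ says exactly that $v_i,v_j$ are linearly dependent. I would first get (b) by substituting the recurrences \eqref{recurrence} for $\alpha_j,\beta_j$ into $K_{i,j}$ and factoring out $a_j$; this is a one-line computation. For (c) the same substitution gives $K_{i,i-1}=K_{i-1,i-2}$ (the two $a_i\alpha_{i-1}\beta_{i-1}$ terms cancel over $\F_2$), and telescoping down to $K_{1,0}=\alpha_1\beta_0+\alpha_0\beta_1=a_0$ finishes it. Then (d) and (e) are immediate consequences: if $v_i=0$, or if $\alpha_i=\alpha_{i-1}=0$, then $K_{i,i-1}=0\ne a_0$, a contradiction (and $v_0=(0,1)\ne 0$ covers $i=0$ in (d)).

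For (a), the relation $\sim$ is the pullback of equality along $i\mapsto A_i$ and is therefore automatically an equivalence relation. For $\approx$, reflexivity ($K_{i,i}=0$) and symmetry ($K_{i,j}=K_{j,i}$) are clear; by (d) every $v_i$ is nonzero, so $K_{i,j}=0$ says precisely that $v_i$ and $v_j$ are scalar multiples of one another, and proportionality of nonzero vectors in a two-dimensional space is transitive. The same reading yields (f): if $K_{i,j}=K_{i,j-1}=0$ then $v_j$ and $v_{j-1}$ are both proportional to $v_i$, hence to each other, so $K_{j,j-1}=0$; but $K_{j,j-1}=a_0\ne 0$ by (c), a contradiction.

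The substantive step is (g). Here I would pass to the ratios $\rho_s=\alpha_{s-1}/\alpha_s$, which are well defined for $k+1\le s\le i-1$ because $\alpha_s\ne 0$ on that range, with $\rho_{k+1}=\alpha_k/\alpha_{k+1}=0$ since $\alpha_k=0$; note $i\ge k+2$ because (e) forbids $\alpha_k=\alpha_{k+1}=0$. Dividing $\alpha_s=a_s\alpha_{s-1}+\alpha_{s-2}$ by $\alpha_{s-1}$ gives the continued-fraction recursion $\rho_s=(a_s+\rho_{s-1})^{q-2}$, and an induction using the peeling identity $[b_0,b_1,\ldots]=b_0+[b_1,\ldots]^{q-2}$ identifies $\rho_s=[a_s,a_{s-1},\ldots,a_{k+2}]^{q-2}$. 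Finally, dividing the relation $\alpha_i=a_i\alpha_{i-1}+\alpha_{i-2}=0$ by $\alpha_{i-1}\ne 0$ gives $a_i+\rho_{i-1}=0$, whence $[a_i,a_{i-1},\ldots,a_{k+2}]=a_i+[a_{i-1},\ldots,a_{k+2}]^{q-2}=a_i+\rho_{i-1}=0$; the degenerate case $i=k+2$ (where $\rho_{i-1}=0$ and the identity reduces to $a_{k+2}=0$) is consistent.

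Item (h) I would obtain by running the proof of (g) verbatim on the sequence $s\mapsto K_{i,s}$ for fixed $i$: by (b) it satisfies the same three-term recurrence as $\{\alpha_s\}$, its two flanking zeros are $s=k$ and $s=j$ with no zero in between, and (f) now plays the role that (e) played before (since $K_{i,k}=0$ forces $K_{i,k+1}\ne 0$), so $k+2\le j$. The identical ratio computation returns $[a_j,a_{j-1},\ldots,a_{k+2}]=0$; since $K_{i,i}=0$ holds identically, the principal instance $j=i$ recovers the stated form $[a_i,a_{i-1},\ldots,a_{k+2}]=0$. I expect the only delicate point in the whole lemma to be the index bookkeeping in (g) and (h): keeping precise track of the range on which the ratios $\rho_s$ are defined, and handling the degenerate endpoint $i=k+2$ (respectively $j=k+2$).
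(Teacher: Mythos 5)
Your proof is correct, and on items (b)--(e) and (g) it is essentially the paper's argument: the same substitution of the recurrences \eqref{recurrence} for (b), the same reduction $K_{i,i-1}=K_{i-1,i-2}=\cdots=K_{1,0}=a_0$ for (c), the same contradictions with (c) for (d) and (e), and your bottom-up induction on the ratios $\rho_s=\alpha_{s-1}/\alpha_s$ in (g) is just a reorganization of the paper's top-down unrolling of $0=\alpha_i/\alpha_{i-1}$ into a continued fraction. You genuinely diverge in two places. For (a) and (f) you use the determinant reading of $K_{i,j}$: by (d) every $v_i=(\alpha_i,\beta_i)$ is nonzero, so $K_{i,j}=0$ means $v_i\parallel v_j$, and transitivity of proportionality yields both the transitivity of $\approx$ (which the paper dismisses as trivial, though it is precisely the step that needs (d)) and item (f), since $v_j\parallel v_i\parallel v_{j-1}$ would force $K_{j,j-1}=0$ against (c); the paper instead proves (f) by propagating two consecutive zeros through the recurrence (b) to contradict $K_{i,i-1}=a_0$. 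Both mechanisms are sound; yours has the advantage of making (a) explicit. Second, your index care in (h) is a genuine correction, not pedantry: running (g)'s argument on $s\mapsto K_{i,s}$ yields $[a_j,\ldots,a_{k+2}]=0$ with $k+2\le j$, i.e., the conclusion indexed by the second zero $j$ rather than by $i$ as the lemma literally states, and for $j<i$ the literal statement can fail. In the paper's own $\F_{2^8}$ example, $(i,k,j)=(9,1,7)$ satisfies the hypotheses ($K_{9,1}=K_{9,7}=0$ and $K_{9,s}\neq 0$ for $1<s<7$), yet $[a_9,a_8,\ldots,a_3]\neq 0$ because $A_2=1\neq g^{251}=A_9$, while your conclusion $[a_7,\ldots,a_3]=0$ does hold because $A_2=A_7=1$. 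Since the only invocation of (h), in the proof of Proposition \ref{propfundamental}, takes the second zero to be $K_{i,i}=0$ (so $j=i$ there), your ``principal instance'' is exactly the statement in the form in which the paper actually uses it.
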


\begin{proof} By the Lemma \ref{cont. frac lemma} and by the definitions of $\alpha_i, \beta_i$, the assertion $(a)$ is
trivial. For the assertion $(b)$,
 $K_{i,j}=\alpha_i\beta_j+\beta_i\alpha_j=\alpha_i(a_j\beta_{j-1}+\beta_{j-2})+\beta_i(a_j\alpha_{j-1}+\alpha_{j-2})=$
$a_jK_{i,j-1}+K_{i,j-2}$. For the statement $(c)$, we use induction. That is,
$K_{1,0}=\alpha_1\beta_0+\beta_1\alpha_0=a_0\cdot 1+a_1\cdot 0=a_0$, and inductively,
$K_{i+1,i}=\alpha_{i+1}\beta_i+\beta_{i+1}\alpha_{i}
  =(a_{i+1}\alpha_i+\alpha_{i-1})\beta_i+(a_{i+1}\beta_i+\beta_{i-1})\alpha_i=\alpha_{i-1}\beta_i+\beta_{i-1}\alpha_i=K_{i,i-1}$.
Assertions $(d), (e)$ are clear from the assertion $(c) : \alpha_i\beta_{i-1}+\beta_i\alpha_{i-1}=a_0\neq 0$. For the
assertion $(f)$, if $K_{i,j}=0=K_{i,j-1}$, then from the recurrence in the statement $(b)$, one gets $K_{i,j}=0$ for
all $0\leq j\leq m+1$ which is a contradiction to the assertion $(c) : K_{i,i-1}=a_0\neq 0$.
 For the statement $(g)$, starting from the recurrence $0=\alpha_i=a_i\alpha_{i-1}+\alpha_{i-1}$ with $\alpha_{i-1}\neq
0$ by $(e)$,
\begin{align*}
0=\dfrac{\alpha_i}{\alpha_{i-1}}&=a_i+\dfrac{\alpha_{i-2}}{\alpha_{i-1}}
                               =[a_i, \dfrac{\alpha_{i-1}}{\alpha_{i-2}}]
                               = [a_i,a_{i-1}, \dfrac{\alpha_{i-2}}{\alpha_{i-3}}] \\
                              &\cdots \\
                             &= [a_i,a_{i-1}, \ldots, a_{k+3},\dfrac{\alpha_{k+2}}{\alpha_{k+1}}] \\
                             &= [a_i,a_{i-1}, \ldots, a_{k+3},a_{k+2}] \quad
                \because \dfrac{\alpha_{k+2}}{\alpha_{k+1}}=a_{k+2}+\dfrac{\alpha_{k}}{\alpha_{k+1}} \,\, \mbox{with}\,\, \alpha_k=0
\end{align*}
The proof of $(h)$ is exactly  same to that of $(g)$ because $K_{i,j}$ satisfies the same recurrence with $\alpha_j$ in
view of $(b)$.
\end{proof}
\begin{remark}
From the above lemma, let us denote the equivalent class of $i\in \mathbb Z_{[1,m+1]}$ as
 $\overset{\sim}{i}$, and denote the equivalent class of $j\in \mathbb Z_{[0,m+1]}$ as $\overset{\approx}{j}$.
 \end{remark}

\begin{prop}\label{propfundamental}
We have the followings.
\begin{enumerate}[$(1)$]
\item Suppose $\alpha_i\neq 0$ for some $1\leq i\leq m+1$. Let $0\leq j$ be the least integer satisfying
 $K_{i,j}=0$ $($i.e., $j \in \overset{\approx}{i}$ is the least element
    in the equivalent class of $i$$)$. Then one has $\alpha_j\neq 0$ and
$$
 \dfrac{\beta_i}{\alpha_i}=\dfrac{\beta_j}{\alpha_j}=A_j
$$
\item Conversely, for given $A_i$ with $1\leq i\leq m+1$, let $1\leq j$ be the least integer satisfying $A_i=A_j$
  $($i.e., $j \in \overset{\sim}{i}$ is the least element
    in the equivalent class of $i$$)$.
  Then one has $\alpha_j\neq 0$ and
$$
 A_i=\dfrac{\beta_j}{\alpha_j}
$$
\end{enumerate}
\end{prop}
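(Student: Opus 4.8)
The plan is to exploit the linear–fractional description \eqref{frequality}, namely that $F_k$ agrees with $R_k(x)=\frac{\alpha_{k+1}x+\beta_{k+1}}{\alpha_kx+\beta_k}$ off the pole set $\{A_1,\dots,A_k\}$, together with one structural observation: since $F_k$ is a genuine everywhere–defined permutation of $\Fbn$ while $R_k$ is only a rational function, the unique zero of the denominator $\alpha_kx+\beta_k$ (the finite pole of the LFT $R_k$, which is $\beta_k/\alpha_k$ when $\alpha_k\neq 0$) cannot lie outside $\{A_1,\dots,A_k\}$; otherwise $F_k$ would be forced to take an undefined value there. Thus every convergent $\beta_k/\alpha_k$, when defined, is automatically one of the poles $A_t$ with $1\le t\le k$. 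I would prove part $(2)$ first and then feed its conclusion into part $(1)$, since $(1)$ is not a formal consequence of a ``same pole $\Rightarrow$ same convergent'' principle: the relations $\sim$ and $\approx$ genuinely define different partitions (they differ by an index shift), so the two halves require separate treatment.

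For part $(2)$, let $j$ be the least index with $A_j=A_i$, so that $A_j\notin\{A_1,\dots,A_{j-1}\}$, i.e. $A_j$ is not a pole of $F_{j-1}$. Since the unique root of $F_{j-1}$ is $A_j$, I would evaluate $0=F_{j-1}(A_j)=R_{j-1}(A_j)$; here the denominator $\alpha_{j-1}A_j+\beta_{j-1}$ is nonzero because its zero $\beta_{j-1}/\alpha_{j-1}$ is itself a pole of $F_{j-1}$ and so cannot equal the non-pole $A_j$. Hence the numerator vanishes, $\alpha_jA_j+\beta_j=0$. If $\alpha_j=0$ this would also force $\beta_j=0$, contradicting Lemma \ref{Kfundamental}$(d)$; therefore $\alpha_j\neq 0$ and $A_i=A_j=\beta_j/\alpha_j$.

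For part $(1)$, the easy reductions come first: $K_{i,i}=0$ and $K_{i,0}=\alpha_i\neq 0$ force $1\le j\le i$, and if $\alpha_j=0$ then $K_{i,j}=\alpha_i\beta_j=0$ would give $\beta_j=0$, again contradicting Lemma \ref{Kfundamental}$(d)$; so $\alpha_j\neq 0$, and $K_{i,j}=0$ yields $\beta_i/\alpha_i=\beta_j/\alpha_j$. It then remains to show $\beta_j/\alpha_j=A_j$. By the structural observation above, $\beta_j/\alpha_j=A_t$ for some $1\le t\le j$. Suppose $t<j$, and let $t'$ be the least index with $A_{t'}=A_t$; by part $(2)$, $\alpha_{t'}\neq 0$ and $A_t=\beta_{t'}/\alpha_{t'}$. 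Then $\beta_j/\alpha_j=\beta_{t'}/\alpha_{t'}$ with both denominators nonzero gives $K_{j,t'}=0$ with $t'\le t<j$, contradicting the minimality of $j$ in its $\approx$-class. Hence $t=j$, i.e. $\beta_j/\alpha_j=A_j$, and combining with $\beta_i/\alpha_i=\beta_j/\alpha_j$ finishes $(1)$.

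The hard part is not any single computation but the bookkeeping that keeps $(1)$ and $(2)$ from collapsing into one another. One must resist proving a clean ``$A_i=A_j\Leftrightarrow K_{i,j}=0$'', as direct identities of this kind fail in general (a convergent $\beta_j/\alpha_j$ can equal $A_1$ while $A_j$ is a different pole); the correspondence holds only between least representatives, and it runs through the membership $\beta_k/\alpha_k\in\{A_1,\dots,A_k\}$ rather than through an identity of the two relations. The one genuinely delicate point to verify with care is that every denominator involved ($\alpha_{j-1}A_j+\beta_{j-1}$ in $(2)$, and the LFT denominators underlying the membership claim) is nonzero, which is precisely where Lemma \ref{Kfundamental}$(d)$ and the permutation property of $F_k$ enter.
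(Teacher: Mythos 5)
Your proof is correct, and it takes a genuinely different route from the paper's. The paper proves $(1)$ first, by a purely formal manipulation: it unfolds $0=K_{i,j}/K_{i,j-1}$ through the recurrence $K_{i,j}=a_jK_{i,j-1}+K_{i,j-2}$ of Lemma \ref{Kfundamental}-$(b)$ into the continued fraction $[a_j,a_{j-1},\ldots,a_2,\tfrac{\alpha_ia_1+\beta_ia_0}{\alpha_i}]$, reverses it with Lemma \ref{cont. frac lemma}, and reads off $\beta_i/\alpha_i=A_j$; it then deduces $(2)$ from $(1)$ together with the consecutive-zero Lemmas \ref{Kfundamental}-$(g),(h)$ and two minimality contradictions. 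You instead prove $(2)$ first, by evaluating the rational form of $F_{j-1}$ at its unique root $A_j$ --- legitimate precisely because minimality of $j$ places $A_j$ in the domain where $F_{j-1}=R_{j-1}$ --- so the numerator $\alpha_jA_j+\beta_j$ must vanish; you then get $(1)$ from the elementary algebra $K_{i,j}=0\Rightarrow\beta_i/\alpha_i=\beta_j/\alpha_j$, your membership principle $\beta_j/\alpha_j\in\{A_1,\ldots,A_j\}$, part $(2)$, and transitivity of $\approx$ (Lemma \ref{Kfundamental}-$(a)$). The trade-off: your argument is more conceptual, exploiting the fact that $F_k$ is an everywhere-defined permutation so that the pole of the linear fractional transformation $R_k$ must lie among the excluded points $A_1,\ldots,A_k$, and it bypasses Lemmas \ref{Kfundamental}-$(g),(h)$ and the continued-fraction unfolding entirely; the paper's argument never leaves the level of the recurrences and continued-fraction identities, and in exchange produces the identity $\beta_i/\alpha_i=\tfrac1{a_0}[a_1,\ldots,a_j]$ by explicit computation. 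One polish worth making: your justification of the membership principle (``$F_k$ would be forced to take an undefined value'') should be stated as a concrete contradiction --- if $\alpha_k\neq 0$ and $x_0=\beta_k/\alpha_k\notin\{A_1,\ldots,A_k\}$, then $x_0$ lies in the domain of $R_{k-1}$, where $F_{k-1}(x_0)=R_{k-1}(x_0)=\tfrac{\alpha_kx_0+\beta_k}{\alpha_{k-1}x_0+\beta_{k-1}}=0$ (the denominator cannot also vanish since $K_{k,k-1}=a_0\neq 0$ by Lemma \ref{Kfundamental}-$(c)$), forcing $x_0=A_k$, a contradiction --- but this is exactly the delicate point you already flagged, not a gap.
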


\begin{proof} \hfill

\noindent $(1) $ Such $j$ exists and $j\leq i$ because $K_{i,i}=0$. From $0=K_{i,j}=\alpha_i\beta_j+\beta_i\alpha_j$
with $\alpha_i\neq 0$ and from Lemma \ref{Kfundamental}-$(d)$, one has $\alpha_j\neq 0$ and thus
 $\tfrac{\beta_i}{\alpha_i}=\tfrac{\beta_j}{\alpha_j}$. Now, since $K_{i,j}=0$ and $K_{i,s}\neq 0$ for all $0\leq s < j$,
   using the recurrence in Lemma \ref{Kfundamental}-$(b)$ repeatedly,
 \begin{align*}
0=\dfrac{K_{i,j}}{K_{i,j-1}}&=a_j+\dfrac{K_{i,j-2}}{K_{i,j-1}}
                               =[a_j, \dfrac{K_{i,j-1}}{K_{i,j-2}}]
                               = [a_j,a_{j-1}, \dfrac{K_{i,j-2}}{K_{i,j-3}}] \\
                              &\cdots \\
                             &= [a_j,a_{j-1}, \ldots, a_{2},\dfrac{K_{i,1}}{K_{i,0}}]
                             = [a_j,a_{j-1}, \ldots, a_{2},\dfrac{\alpha_ia_1+\beta_ia_0}{\alpha_i}]
 \end{align*}
Therefore using Lemma \ref{cont. frac lemma}-$(a)$,
$$
0=[\dfrac{\alpha_ia_1+\beta_ia_0}{\alpha_i}, a_2, \ldots, a_{j-1},a_j],
$$
which implies
$$
\dfrac{\beta_i}{\alpha_i}=\dfrac{1}{a_0}[a_1,a_2,\ldots,a_j]
$$

\noindent $(2) $ First we show $\alpha_j\neq 0$. If $\alpha_j=0$, then choose the greatest $0\leq k < j$
  satisfying $\alpha_k=0$. Such $k$ exists because $\alpha_0=0$. Then, since $\alpha_s\neq 0$ for all $k<s<j$,
  we get $[a_j,a_{j-1},\ldots, a_{k+2}]=0=[a_{k+2},\ldots, a_{j-1},a_j]$ with $k+2\leq j$ by the Lemma \ref{Kfundamental}-$(g)$
   and the Lemma \ref{cont. frac lemma}-$(b)$. Therefore by the Lemma \ref{cont. frac lemma}-$(a)$, we have $A_{k+1}=A_j$
    with $k+1<j$ which is a contradiction to the minimality of $j$ in the equivalent class of $\overset{\sim}{i}$.
     Now we show $A_j=\tfrac{\beta_j}{\alpha_j}$. By the assertion $(a)$ in this Proposition, we have
    $\tfrac{\beta_j}{\alpha_j}=\tfrac{\beta_{j_0}}{\alpha_{j_0}}=A_{j_0}$
    where $1 \leq j_0$ is the least integer satisfying $K_{j,j_0}=0$. We claim $j_0=j$ which completes the proof.
    If $j_0< j$,  choose the greatest $j_0\leq k<j$ satisfying $K_{j,k}=0$. Then, since $K_{j,s}\neq 0$
    for all $k<s<j$, we have $[a_j,a_{j-1},\ldots, a_{k+2}]=0=[a_{k+2},\ldots, a_{j-1},a_j]$ with $k+2\leq j$
     by the Lemma \ref{Kfundamental}-$(h)$  and the Lemma \ref{cont. frac lemma}-$(b)$. In the same way,
     we have $A_{k+1}=A_j$  with $k+1<j$ which is a contradiction to
     the minimality of $j$ in the equivalent class of $\overset{\sim}{i}$.
\end{proof}

The implication of the above proposition is as follows. In general, $A_i\neq \tfrac{\beta_i}{\alpha_i}$ even if
$\alpha_i\neq 0$. However if one chooses the minimal index $j$ satisfying $A_j=A_i$, then one has
$A_i=A_j=\tfrac{\beta_j}{\alpha_j}$ and this $j$ is the minimal index in the equivalent class of $\overset{\approx}{j}$
 (i.e., $s=j$ is the minimal for all $s$ satisfying $K_{j,s}=0$).
 Also, if one chooses the minimal index $j$ satisfying $\tfrac{\beta_j}{\alpha_j}=\tfrac{\beta_i}{\alpha_i}$
  (i.e., minimal $j$ satisfying $K_{i,j}=0$),
   then one has $\tfrac{\beta_i}{\alpha_i}=\tfrac{\beta_j}{\alpha_j}=A_j$
   and this $j$ is the minimal index in the equivalent class of $\overset{\sim}{j}$
    (i.e., $s=j$ is the minimal for all $s$ satisfying $A_s=A_j$).
Therefore, if one excludes the equivalent class $\overset{\approx}{0}$ which corresponds all the indices $i$ with
$\alpha_i=0$ (in particular, $\alpha_0=0$), the number of equivalent classes of $\mathbb Z_{[1,m+1]}$ and that of
$\mathbb Z_{[1,m+1]} \setminus \overset{\approx}{0}$ are the same, and the set of minimal indices (representatives)
among all equivalent classes of $\mathbb Z_{[1,m+1]}$ and that of $\mathbb Z_{[0,m+1]} \setminus \overset{\approx}{0}$
 are the same set.

\medskip

\noindent {\bf Example.} Let us give an example to briefly explain the above proposition.
   Let $m=8$ and let $F(x) =[a_{9},a_8,\ldots, a_2, a_1+a_0 x]= [0,g^{102},g^{153},g^{164},g^{234},g^{73},1,1,x]$ over $\F_{2^8}$
   with primitive element $g \in \F_{2^8}$ having the minimal polynomial $x^8 + x^4 + x^3 + x^2 + 1$. Then we have
    $$ A_1=A_3= 0,\quad A_2=A_7 = 1,\quad  A_4 =g^{236},\quad A_5 = g^{65},\quad A_6=A_8= A_9 = g^{251}$$
    such that $\mathbb Z_{[1,9]}$ is written as a disjoint union
    $$
   \mathbb Z_{[1,9]}=\{1,3\}\cup \{2,7\}\cup \{4\}\cup \{5\}\cup \{6,8,9\}
     =\{\overset{\sim}{1}\}\cup\{\overset{\sim}{2}\}\cup \{\overset{\sim}{4}\}\cup \{\overset{\sim}{5}\}\cup  \{\overset{\sim}{6}\}
    $$
    Also one has
    $$ \alpha_0=\alpha_3 = 0, \quad
    \tfrac{\beta_1}{\alpha_1} =\tfrac{\beta_7}{\alpha_7} =
    \tfrac{\beta_9}{\alpha_9} = 0, \quad \tfrac{\beta_2}{\alpha_2}=1, \quad
    \tfrac{\beta_4}{\alpha_4} =g^{236},\quad
 \tfrac{\beta_5}{\alpha_5} = \tfrac{\beta_8}{\alpha_8}=g^{65},\quad \tfrac{\beta_6}{\alpha_6} = g^{251}$$
 such that  $\mathbb Z_{[0,9]}$ is expressed as a disjoint union
 $$
  \mathbb Z_{[0,9]}=\{0,3\}\cup\{1,7,9\}\cup \{2\}\cup \{4\}\cup \{5,8\}\cup \{6\}
     =\{\overset{\approx}{0}\}\cup\{\overset{\approx}{1}\}\cup\{\overset{\approx}{2}\}\cup
      \{\overset{\approx}{4}\}\cup \{\overset{\approx}{5}\}\cup  \{\overset{\approx}{6}\}
 $$

\subsection{Classifying Solutions of $\du$ Equations}

From Proposition \ref{EAequivalent lemma}, every permutation of Carlitz rank $m$ can be written,
 up to affine equivalence, as
\begin{align}
 F(x)=[a_{m+1}, a_m, \ldots, a_2, a_1+a_0 x] =[0, a_m, \ldots, a_3, 1, x]
\end{align}
with $a_{m+1}=0, a_2=1, a_1=0$ and $a_0=1$. Letting $y=[0, a_m, \ldots, a_3, 1, x]$, using Lemma \ref{cont. frac
lemma}, one has
 $[y, a_m, \ldots, a_3, 1, x]=0=[x, 1, a_3,\ldots, a_m, y]$. Therefore the (compositional) inverse function
 $G(x)=F^{-1}(x)$ is written as
\begin{align}
G(x)= [0, 1, a_3,\ldots, a_m, x]
\end{align}
with the same $a_0=1, a_1=0, a_2=1$ and $a_{m+1}=0$. Since the expression $[a_1, a_2, a_3,\ldots, a_m, x]$ has
 the form of increasing indices and it is closely related with the equation \eqref{frexception},
we will be interested in the cryptographic properties of $G(x)$ of this form. Having that in mind, we
 introduce new definitions of differential and boomerang uniformities, which are in fact swapped versions from the
 original ones.
\begin{definition}
Let $G$ be any permutation on $\Fbn$ and let $a, b, c \in \Fbn\setminus \{0\}$. Then we define
\begin{align*}
 \du_G(a,b) &= \# \{(x,y) \in \Fbn\times\Fbn  \,|\, G(x)+G(y)=a \,\, \mbox{{\rm and}}\,\, x+y=b \} \\
 \bu_G(a,c) &= \# \{(x,y) \in \Fbn\times\Fbn \,|\, G(x)+G(y)=a=G(x+c)+G(y+c) \}
\end{align*}
We will call the simultaneous equation $a=G(x)+G(y), \,\,\, b=x+y$ as $\du$ equation, and also
 call the simultaneous equation  $G(x)+G(y)=a=G(x+c)+G(y+c)$ as $\bu$ equation.
\end{definition}
Please note that our (equivalent) definition of boomerang uniformity comes from the result of
 theorem 2.3 in \cite{LQSL19}. When $G$
and $F$ are inverse to each other, one has  $\du_G(a,b)=\ddt_F(a,b)$
 and  $\bu_G(a,b)=\bct_F(a,b)$. We also define the maximum values among all $\du_G(a,b)$ and $\bu_G(a,b)$ as follows.
\begin{definition}
\begin{align*}
 \du_G =\displaystyle \max_{a,b \in \Fbn \setminus \{0\}} \du_G(a,b), \qquad
 \bu_G =\displaystyle \max_{a,b \in \Fbn \setminus \{0\}} \bu_G(a,b)
\end{align*}
\end{definition}
\begin{remark}
$\du_G$ and $\bu_G$ are same to the original differential and boomerang uniformities of $G$ $($and also $F$$)$. That
is,
 $\du_G=\du_F=\Delta_F=\Delta_G$ and $\bu_G=\bu_F=\delta_F=\delta_G$.
 \end{remark}

If $(u,v)$ is a solution of $\du$ equation $G(x)+G(x+b)=a$, then
\begin{align}\label{dueqn}
  a=G(u)+G(v)=[0,1,a_3,\ldots,a_m,u]+[0,1,a_3,\ldots,a_m,v], \quad \mbox{with}\,\, u+v=b
\end{align}
By the symmetric nature of the above equation with respect to $u, v$, we will use the notion of
 unordered  solution pair $\{u,v\}$, where one unordered solution pair $\{u,v\}$ corresponds to
 two (ordered) solution pairs $(u,v)$ and $(v,u)$. Note that $(u,v)\neq (v,u)$ because $a\neq 0 \neq b$.

\begin{definition}\label{defpole} We define a domain pole set $\pole$ $($of $G$$)$ consisting of all $p_i$ $(1\leq i\leq m)$ satisfying
$A_i=G(p_i)$ {\rm (}i.e., $[0,1,a_3,a_4,\ldots,a_i]=[0,1,a_3,\ldots, a_m, p_i]$ {\rm )}. That is, a set consisting of
all $p_i=[0,a_m,a_{m-1},\ldots, a_{i+1}]$.
\end{definition}
\begin{remark}\label{remarkcarlitz3}
When $G(x)=[0,1,\gamma,x]$ with $\gamma\neq 0,1$, then one has
  $$p_1=[0, \gamma, 1]=\tfrac{1}{\gamma+1},\quad p_2=[0,\gamma]=\tfrac{1}{\gamma}\quad \mbox{and}\,\,\, p_3=[0]=0$$
 \end{remark}
 \begin{remark}
 In general, when $\crk(G)>3$,
  $p_i$ are not distinct since $A_i$ are not. When duplication happens, we exclude multiplicities
  by considering minimal index $i$ in the equivalent class $\overset{\sim}{i}$ for $A_i$, so that we assume $\pole$ consists of
  all distinct elements.
  \end{remark}

  If $\{u,v\}$ is a solution pair of the $\du$ equation \eqref{dueqn}, there are two
  possibilities for $u$ (resp. $v$). That is, either $u=p_i\in \pole$ for some $i$ or $u\notin \pole$.
   If $u=p_i\in \pole$, then $G(u)=A_i=[0,1,a_3,\ldots, a_i]$. If $u\notin \pole$, then we have the following.
  \begin{lemma}\label{gulemma}
  If $u\notin \pole$, then one has $G(u)=\dfrac{\beta_m u+\beta_{m-1}}{\alpha_m u+\alpha_{m-1}}$. In particular,
   $\alpha_m u+\alpha_{m-1}\neq 0$ for all $u\not\in \pole$.
  \end{lemma}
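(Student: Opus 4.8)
The plan is to exploit that $G=F^{-1}$ together with the reduced form guaranteed by Proposition \ref{EAequivalent lemma}, in which $a_0=1,\ a_1=0,\ a_2=1$ and, crucially, $a_{m+1}=0$. First I would record what $a_{m+1}=0$ forces on the convergent sequences of $F$: substituting $a_{m+1}=0$ into the recurrence \eqref{recurrence} gives $\alpha_{m+1}=a_{m+1}\alpha_m+\alpha_{m-1}=\alpha_{m-1}$ and likewise $\beta_{m+1}=\beta_{m-1}$. Hence the fractional description \eqref{frequality} of $F$ off its poles reads
\[
F(x)=\frac{\alpha_{m+1}x+\beta_{m+1}}{\alpha_m x+\beta_m}=\frac{\alpha_{m-1}x+\beta_{m-1}}{\alpha_m x+\beta_m}
\qquad\text{for all }x\notin\{A_1,\dots,A_m\}.
\]

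Next I would invert this map. Since $G=F^{-1}$, solving $u=F(x)$ for $x$ (in characteristic two all signs drop out) yields $u(\alpha_m x+\beta_m)=\alpha_{m-1}x+\beta_{m-1}$, i.e. $x(\alpha_m u+\alpha_{m-1})=\beta_m u+\beta_{m-1}$, so
\[
G(u)=\frac{\beta_m u+\beta_{m-1}}{\alpha_m u+\alpha_{m-1}}.
\]
Because $F$ agrees with the linear-fractional map $x\mapsto(\alpha_{m-1}x+\beta_{m-1})/(\alpha_m x+\beta_m)$ off a set of at most $m$ points, $G$ agrees with its inverse linear-fractional map off a cofinite set. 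Since $G$ is itself a permutation of Carlitz form, one compares this with the fractional description of $G$ supplied by \eqref{frequality}: two linear-fractional maps agreeing on more than three points (and $q=2^n\ge 4$) coincide, which identifies the two expressions and upgrades the equality to hold for every $u\notin\pole$.

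Finally, for the \emph{in particular} clause I would observe that the numerator and denominator of the displayed fraction are coprime linear forms, since the relevant determinant is $\beta_m\alpha_{m-1}+\beta_{m-1}\alpha_m=K_{m,m-1}=a_0=1\neq 0$ by Lemma \ref{Kfundamental}$(c)$; in the degenerate case $\alpha_m=0$ this same relation already forces $\alpha_{m-1}\neq 0$, so the constant denominator is nonzero. Consequently the numerator and denominator never vanish together, and if $\alpha_m u+\alpha_{m-1}=0$ held for some $u\notin\pole$, then $G(u)$ would have nonzero numerator over zero denominator and could not lie in $\Fbn$, contradicting that $G$ is a genuine permutation; hence $\alpha_m u+\alpha_{m-1}\neq 0$ for all $u\notin\pole$. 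I expect the only delicate point to be the bookkeeping in the middle step — verifying that the exceptional set produced by inverting $F$'s fractional form is \emph{exactly} the pole set $\pole$ of $G$ from Definition \ref{defpole}, rather than merely a cofinite approximation — which is exactly what the uniqueness of a linear-fractional map determined by its values on a cofinite set, applied through \eqref{frequality} for $G$, is used to settle.
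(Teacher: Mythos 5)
Your computational core is sound and genuinely different from the paper's argument: you invert the fractional form \eqref{frequality} of $F$ (using $a_{m+1}=0$ to get $\alpha_{m+1}=\alpha_{m-1}$, $\beta_{m+1}=\beta_{m-1}$) and use $K_{m,m-1}=a_0=1$ from Lemma \ref{Kfundamental}-$(c)$ for nondegeneracy, whereas the paper applies Proposition \ref{propfundamental}-$(2)$ to the coefficient sequence extended by $a_{m+1}'=u$, so that $G(u)$ is the convergent $A_{m+1}$ of its own singleton equivalence class, which yields the formula and $\alpha_m u+\alpha_{m-1}\neq 0$ simultaneously. However, your middle step has a genuine gap. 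The M\"obius-uniqueness argument requires the two linear fractional maps to agree at no fewer than three points of $\Fbn$, and you never verify this. By your own bookkeeping the agreement set is only known to be $\Fbn$ minus at most $m+\#\pole$ points, and the lemma carries no hypothesis bounding $m$ in terms of $2^n$: it is stated for arbitrary Carlitz rank and is used in that generality (Proposition \ref{abalpham}, and the theorem of Section 4.3 on APN permutations, whose hypothesis bounds only $\#\pole=\ell$ and not $m$). For $m$ comparable to $2^{n-1}$ the step is unjustified and can fail; the parenthetical ``$q=2^n\ge 4$'' does not address this, since what must exceed three is the size of the agreement set, not of $\Fbn$.

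The gap is easy to close, and the detour is in fact unnecessary, because the ``delicate point'' you defer to M\"obius uniqueness is immediate from the definitions. By Definition \ref{defpole} together with \eqref{frexception}, $p_i$ is characterized by $G(p_i)=A_i$, i.e.\ $p_i=F(A_i)$; since $G$ is injective, $u\notin\pole$ holds if and only if $G(u)\notin\{A_1,\ldots,A_m\}$, so the exceptional set produced by inverting $F$'s fractional form is \emph{exactly} $\pole$, not a cofinite approximation. Granting this, fix $u\notin\pole$ and set $x=G(u)$; then $x\notin\{A_1,\ldots,A_m\}$, so \eqref{frequality} applies to $u=F(x)$, and cross-multiplication gives $x(\alpha_m u+\alpha_{m-1})=\beta_m u+\beta_{m-1}$. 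If $\alpha_m u+\alpha_{m-1}=0$, this forces $\beta_m u+\beta_{m-1}=0$ as well, whence $0=\alpha_m(\beta_m u+\beta_{m-1})+\beta_m(\alpha_m u+\alpha_{m-1})=\alpha_m\beta_{m-1}+\beta_m\alpha_{m-1}=K_{m,m-1}=1$, a contradiction. Hence the denominator is nonzero and $G(u)=x=(\beta_m u+\beta_{m-1})/(\alpha_m u+\alpha_{m-1})$. This pointwise version of your inversion proves the lemma for every $m$ with no uniqueness-of-M\"obius-maps input, and it also removes the small circularity of dividing by $\alpha_m u+\alpha_{m-1}$ before knowing it is nonzero.
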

  \begin{proof}
Since $G(u)=[0,1,a_3,\ldots, a_m, u] \neq A_i$ for $1\leq i \leq m$, by the Proposition \ref{propfundamental}-(2),
 $A_{m+1,u}\overset{{\rm def}}{=}G(u)=[0,1,a_3,\ldots, a_m, u]$ itself gives a single equivalence class  of the index $m+1$ such that
$G(u)=A_{m+1,u}=\dfrac{a_{m+1}'\beta_m+\beta_{m-1}}{a_{m+1}'\alpha_m+\alpha_{m-1}}=\dfrac{\beta_m
u+\beta_{m-1}}{\alpha_m u+\alpha_{m-1}}$ with $a_{m+1}'=u$.
  \end{proof}

\begin{definition}
Let $u\neq v$ be in $\Fbn$. For a pair $\{u,v\}$ $($resp. an ordered pair $(u,v)$$)$, we say  $\{u,v\}$  is
 \begin{enumerate}[--]
 \item of $\ca$ if $u,v\in \pole$, and the set consisting of all $\{u,v\}$ $($resp. $(u,v)$$)$ of $\ca$ is
            denoted by $\mathcal{A}$. $($i.e., $\{u,v\}\in \mathcal{A}$ means $u\in \pole, v\in\pole$.$)$
 \item of $\cb$ if only one of $u$ and $v$ is in $\pole$,  and the set consisting of
     all $\{u,v\}$ $($resp. $(u,v)$$)$ of $\cb$ is denoted by $\mathcal{B}$.
        $($i.e., $\{u,v\}\in \mathcal{B}$ means $u \in \pole, v\not\in \pole$ or $u\not\in \pole, v\in \pole$.$)$
 \item of $\cc$ if none of $u$ and $v$ is in $\pole$,  and the set consisting of all $\{u,v\}$ $($resp. $(u,v)$$)$
                of $\cc$ is denoted by $\mathcal{C}$  $($i.e., $\{u,v\}\in \mathcal{C}$ means
                $u \not\in \pole, v\not\in \pole$.$)$
 \end{enumerate}
 \end{definition}

\noindent For a given permutation $G : \Fbn \rightarrow \Fbn$, define a map
\begin{alignat}{3}\label{psig}
\psi_G : \Fbn &\times \Fbn &&\longrightarrow \quad &&\Fbn \times \Fbn \\
              (u&,v)       &&\mapsto             && (G(u)+G(v),u+v) \notag
\end{alignat}
 Then, for given nonzero $a,b\in\Fbn$, $\psi_G^{-1}(a,b)$ is the set consisting of $(u,v)\in \Fbn \times \Fbn$
 satisfying
 $a=G(u)+G(v),\,\, b=u+v$ so that one has $\du_G(a,b)=\# \psi_G^{-1}(a,b)$.

\bigskip

\begin{definition}
For a given permutation $G$ on $\Fbn$ and a map $\psi_G$ on $\Fbn\times \Fbn$, we define
$$
\mathcal{A}_G^{(a,b)}=\mathcal{A} \cap \psi_G^{-1}(a,b),
           \quad \mathcal{B}_G^{(a,b)}=\mathcal{B} \cap \psi_G^{-1}(a,b),
            \quad \mathcal{C}_G^{(a,b)}=\mathcal{C} \cap \psi_G^{-1}(a,b)
$$
\end{definition}
\begin{remark} From the above definition, one has the following disjoint union and the formula for $\du_G(a,b)$
$$\psi_G^{-1}(a,b)=\mathcal{A}_G^{(a,b)}\cup \mathcal{B}_G^{(a,b)}\cup \mathcal{C}_G^{(a,b)},
 \qquad \du_G(a,b)= \#\mathcal{A}_G^{(a,b)} +\# \mathcal{B}_G^{(a,b)}+\# \mathcal{C}_G^{(a,b)}
$$
\end{remark}

\noindent Now from the equation \eqref{dueqn}, Definition \ref{defpole} and Lemma \ref{gulemma}, one finds that
\begin{enumerate}
 \item[--] $\mathcal{A}_G^{(a,b)}$ consists of all ordered pair $(u,v)\in \mathcal{A}$ satisfying the equation (\ref{dueqn}).
     In this case, one has $u=p_i, v=p_j$ for some $1\leq i\neq j\leq m$ such that $a=G(p_i)+G(p_j)=A_i+A_j$
      with $b=p_i+p_j$.
 \item[--] $\mathcal{B}_G^{(a,b)}$ consists of all ordered pair $(u,v)\in \mathcal{B}$ satisfying the equation (\ref{dueqn}).
      In this case, since only one of $u,v$ is in $\pole$, say $u=p_i \in \pole$, using Lemma \ref{gulemma}, one has the
    equation $a=G(p_i)+G(v)=A_i+\dfrac{\beta_{m}v+\beta_{m-1}}{\alpha_m v+\alpha_{m-1}}$ with $b=p_i+v$.
 \item[--] $\mathcal{C}_G^{(a,b)}$ consists of all ordered pair $(u,v)\in \mathcal{C}$ satisfying the equation (\ref{dueqn}).
     In this case, since $u,v\notin \pole$, using Lemma \ref{gulemma},
     one has $a=G(u)+G(v)
     =\dfrac{\beta_{m}u+\beta_{m-1}}{\alpha_m u+\alpha_{m-1}}+\dfrac{\beta_{m}v+\beta_{m-1}}{\alpha_m v+\alpha_{m-1}}$
      with $b=u+v$.
\end{enumerate}
Again, by the symmetry of the mapping $\psi_G$ (i.e., $\psi_G(u,v)=\psi_G(v,u)$), the cardinality of
$\mathcal{A}_G^{(a,b)}, \mathcal{B}_G^{(a,b)}$ and $\mathcal{C}_G^{(a,b)}$ are even, and one can also define analogue
versions for unordered sets for $\mathcal{A}_G^{(a,b)}, \mathcal{B}_G^{(a,b)}$ and $\mathcal{C}_G^{(a,b)}$. So when we
say $\{u,v\}$ is in $\mathcal{A}_G^{(a,b)}$, we are abusing our notations so that it implies both $(u,v)$ and $(v,u)$
are in $\mathcal{A}_G^{(a,b)}$.

\begin{prop}\label{abalpham}
Suppose that a  permutation $G(x)=[0,1,a_3,\ldots,a_m,x]$ has $\alpha_m\neq 0$. Then one has $\#
\mathcal{C}_G^{(a,b)}=0$ or $2$, and one gets
 $\# \mathcal{C}_G^{(a,b)}=2$ if and only if $\tr(\frac{1}{ab{\alpha_m}^2})=0$
 and $x^2+b\alpha_mx + \tfrac{b}{a} \neq 0 $ for all $x \in \alpha_m\pole + \alpha_{m-1}$.
\end{prop}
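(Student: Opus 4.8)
The plan is to reduce the count $\#\mathcal{C}_G^{(a,b)}$ to the solvability of a single quadratic over $\Fbn$. Take an ordered pair $(u,v)\in\mathcal{C}$, so $u,v\notin\pole$; by Lemma \ref{gulemma} we may write $G(u)=\frac{\beta_m u+\beta_{m-1}}{\alpha_m u+\alpha_{m-1}}$ and likewise for $v$, both denominators being nonzero. First I would add these two fractions over the common denominator $(\alpha_m u+\alpha_{m-1})(\alpha_m v+\alpha_{m-1})$ and simplify the numerator using $\alpha_m\beta_{m-1}+\beta_m\alpha_{m-1}=K_{m,m-1}=a_0=1$ from Lemma \ref{Kfundamental}-$(c)$. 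In characteristic two the $uv$ and constant cross-terms cancel, the numerator collapses to $u+v=b$, and the denominator becomes $\alpha_m^2uv+\alpha_m\alpha_{m-1}b+\alpha_{m-1}^2$. Hence the $\du$ equation with $u+v=b$ is equivalent to
\[
u+v=b, \qquad \alpha_m^2 uv=\tfrac{b}{a}+\alpha_m\alpha_{m-1}b+\alpha_{m-1}^2,
\]
so that $u,v$ are precisely the two roots of $t^2+bt+\tfrac{1}{\alpha_m^2}\big(\tfrac{b}{a}+\alpha_m\alpha_{m-1}b+\alpha_{m-1}^2\big)=0$.

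Next I would apply the standard criterion in characteristic two: for $b\neq 0$, the quadratic $t^2+bt+c$ has either no root or two distinct roots in $\Fbn$, the latter exactly when $\tr(c/b^2)=0$. Writing $c$ for the constant term above, I would expand
\[
\frac{c}{b^2}=\frac{1}{ab\alpha_m^2}+\frac{\alpha_{m-1}}{b\alpha_m}+\Big(\frac{\alpha_{m-1}}{b\alpha_m}\Big)^2,
\]
and use the Frobenius invariance $\tr(y)=\tr(y^2)$, which kills the last two summands, to obtain $\tr(c/b^2)=\tr\big(\tfrac{1}{ab\alpha_m^2}\big)$. This already yields the dichotomy $\#\mathcal{C}_G^{(a,b)}\in\{0,2\}$, the two roots being distinct since $b\neq 0$; the converse direction is valid because reading the denominator computation backwards gives denominator $b/a$ for any two non-pole roots, hence $G(u)+G(v)=b/(b/a)=a$.

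Finally I would incorporate the constraint $u,v\notin\pole$ that separates $\mathcal{C}$ from $\mathcal{A},\mathcal{B}$. Substituting $t=\frac{x+\alpha_{m-1}}{\alpha_m}$ (equivalently $x=\alpha_m t+\alpha_{m-1}$) into the quadratic and clearing $\alpha_m^2$, the terms $\alpha_{m-1}^2$ and $\alpha_m\alpha_{m-1}b$ cancel in characteristic two, leaving $x^2+b\alpha_m x+\tfrac{b}{a}$. Thus a pole $p\in\pole$ is a root of the quadratic if and only if $x^2+b\alpha_m x+\tfrac{b}{a}=0$ at $x=\alpha_m p+\alpha_{m-1}\in\alpha_m\pole+\alpha_{m-1}$. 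Consequently, when the quadratic is solvable the unordered root pair lies in $\mathcal{C}$ (giving the count $2$) exactly when neither root is a pole, i.e. when $x^2+b\alpha_m x+\tfrac{b}{a}\neq 0$ for all $x\in\alpha_m\pole+\alpha_{m-1}$; otherwise the pair falls into $\mathcal{A}$ or $\mathcal{B}$ and the count is $0$. Combining this with the trace condition yields the proposition. I expect the subtlest point to be precisely this bookkeeping: solvability of the quadratic alone does not force the count to be $2$, since a root may coincide with a pole, so the pole-exclusion condition must be tracked alongside the trace condition rather than treated as an afterthought.
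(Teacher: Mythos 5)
Your proof is correct and takes essentially the same route as the paper: both use Lemma \ref{gulemma} together with $K_{m,m-1}=a_0=1$ to collapse the $\du$ equation on non-pole pairs to a single quadratic, apply the characteristic-two trace criterion, and exclude poles via the affine correspondence $x=\alpha_m t+\alpha_{m-1}$. The only cosmetic difference is that the paper substitutes $M=\alpha_m u+\alpha_{m-1}$, $N=\alpha_m v+\alpha_{m-1}$ at the outset, obtaining $X^2+b\alpha_m X+\tfrac{b}{a}$ directly and thereby avoiding your extra Frobenius-invariance step $\tr(y)=\tr(y^2)$ in the trace computation.
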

\begin{proof}
Suppose $\{u,v\} \in  \mathcal{C}_G^{(a,b)}$. Since $u, v \notin \pole$, by the Lemma \ref{gulemma}, one has
 \begin{align*}
 a=G(u)+G(v)&=\frac{u\beta_m+\beta_{m-1}}{u\alpha_m+\alpha_{m-1}}+\frac{v\beta_m+\beta_{m-1}}{v\alpha_m+\alpha_{m-1}} \\
            &=\frac{(u+v)(\alpha_m\beta_{m-1}+\beta_m\alpha_{m-1})}{(u\alpha_m+\alpha_{m-1})(v\alpha_m+\alpha_{m-1})}
            =\frac{b}{(u\alpha_m+\alpha_{m-1})(v\alpha_m+\alpha_{m-1})}
 \end{align*}
 where the last equality comes from Lemma \ref{Kfundamental}-(c). Therefore, letting $M=u\alpha_m+\alpha_{m-1}$ and
  $N=v\alpha_m+\alpha_{m-1}$,   the following quadratic equation
  $$
  0=(X+M)(X+N)=X^2+b\alpha_m X+ \frac{b}{a} \in \Fbn[X]
  $$
  has unique solutions $M,N \in \Fbn$ and thus
  $\tr(\frac{1}{ab{\alpha_m}^2})=0$.  Since $u,v \notin \pole$,
  the roots $M,N$ are not in $\alpha_m\pole + \alpha_{m-1}$, which means
   $x^2+b\alpha_mx + \tfrac{b}{a} \neq 0 $ for all $x \in \alpha_m\pole + \alpha_{m-1}$.
   Conversely when  $\tr(\frac{1}{ab{\alpha_m}^2})=0$ with
  $x^2+b\alpha_mx + \tfrac{b}{a} \neq 0 $ for all $x \in \alpha_m\pole + \alpha_{m-1}$, then the solution
  $\{M,N\}$ of the quadratic equation satisfies $M,N \notin \alpha_m\pole + \alpha_{m-1}$. Therefore
  $u=\tfrac{M+\alpha_{m-1}}{\alpha_m}, v=\tfrac{N+\alpha_{m-1}}{\alpha_m}$ are not in $\pole$,
  and thus $\{u,v\} \in \mathcal{C}$.
\end{proof}

\begin{remark}
Letting $x_i=\alpha_m p_i+\alpha_{m-1}$ for each $p_i\in \pole$, the last condition $x_i^2+b\alpha_mx_i + \tfrac{b}{a}
\neq 0 $ in the above proposition can be rephrased as a parametrized form $a\neq \frac{b}{\alpha_mx_ib+x_i^2}$ for all
$i$.
\end{remark}
\begin{remark} Consequence of the above proposition is that, if $\alpha_m\neq 0$ and $\# \psi_G^{-1}(a,b)=\du_G(a,b)$
  is greater than two, then those extra solutions $(u,v)$ are ALL from $\ca$ or $\cb$
 (i.e., at least one of $u,v$ are in $\pole$).
\end{remark}
 Since the number $\# \pole$ is the number of distinct $A_i \, (1\leq i\leq
 m)$, it is at most $m$. Therefore, when $m$ is small, we can do enumerative counting or we may even use an algorithmic
 approach for relatively small $m$, which is  mechanical in nature and can be implemented on software.

\subsection{Carlitz Rank of APN Permutation}

Using Proposition \ref{abalpham}, we present a simple result on the relation between APN and Carlitz rank.
\begin{theorem}
    Let $n$ be even and let $G(x) = [0,1,a_3, \ldots, a_m,x]$ be a permutation of Carlitz rank $m$ over
    $\Fbn$ with $\#\pole = \ell$.
    If $\ell < \dfrac{2^{n-1}}{3} $, then $G$ is not APN.
\end{theorem}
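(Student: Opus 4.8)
The plan is to show that the differential uniformity of $G$ is at least $4$. If $\alpha_m=0$, then by Lemma \ref{gulemma} $G$ agrees off $\pole$ with an affine map, and for each fixed $b\neq0$ its first difference $G(x)+G(x+b)$ equals a constant $a=\tfrac{\beta_m}{\alpha_{m-1}}b$, which is nonzero since $\beta_m,\alpha_{m-1}\neq0$ when $\alpha_m=0$ (from $K_{m,m-1}=a_0\neq0$, Lemma \ref{Kfundamental}-$(c)$); hence $G(x)+G(x+b)=a$ has at least $2^n-2\ell>2$ solutions and $G$ is not APN. So assume $\alpha_m\neq0$. By Lemma \ref{gulemma}, $G$ coincides with the M\"obius map $R(x)=\tfrac{\beta_m x+\beta_{m-1}}{\alpha_m x+\alpha_{m-1}}$ at every point off $\pole$; its determinant $\alpha_m\beta_{m-1}+\beta_m\alpha_{m-1}=K_{m,m-1}=a_0\neq0$ is nonzero, so $R$ is affine equivalent to $x^{q-2}$. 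Writing $H=\ell_2\circ G\circ\ell_1$ for the conjugating affine bijections supplied by Lemma \ref{inverse lemma}, I obtain a permutation with $H(x)=x^{q-2}$ for all $x\notin S$, where $S=\ell_1^{-1}(\pole)$ has $\#S=\ell$ and contains $0$ (the M\"obius pole). As differential uniformity is affine invariant, it suffices to show $H$ is not APN.

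Next I would exhibit, for many $b$, two disjoint solution pairs of $H(x)+H(x+b)=a$. Set $c=H(0)$ and, for $b\neq0$ with $cb+1\neq0$, take $a=c+\tfrac1b$. First, the pair $\{0,b\}$ solves the equation whenever $b\notin S$, because then $H(0)+H(b)=c+\tfrac1b$; this is the analogue, in these coordinates, of the pair through $0$ in the classical $4$-collision of $x^{q-2}$. Second, the two roots $r,r+b$ of $x^2+bx+\tfrac{b^2}{cb+1}=0$ form a class-$\mathcal{C}$ pair: if they lie in $\Fbn$ and avoid $S$, then $H(r)+H(r+b)=\tfrac{b}{r(r+b)}=c+\tfrac1b=a$, and since the constant term is nonzero one has $r\neq0,b$, so the two pairs are disjoint. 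The roots lie in $\Fbn$ exactly when $\tr(\tfrac{1}{cb+1})=0$, which is precisely the trace condition of Proposition \ref{abalpham} transported to these coordinates, and it is here that $n$ even is used (via $\tr(1)=0$). Thus $b\notin S$, together with $\tr(\tfrac{1}{cb+1})=0$ and both roots avoiding $S$, forces $H(x)+H(x+b)=a$ to have at least four solutions.

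Finally I would count the values $b\in\Fbn\setminus\{0\}$ for which this fails. The trace condition $\tr(\tfrac{1}{cb+1})=0$ holds for exactly $2^{n-1}-2$ values of $b$ (substituting $t=cb+1$ and discarding $t=0,1$; if $c=0$ it holds for all $b$ and the estimate is only easier). Among these, requiring $b\notin S$ discards at most $\ell$ values, and requiring neither root to lie in $S$ discards at most $2\ell$ more, since a fixed $s\in S$ is a root for at most two values of $b$ while $s=0\in S$ forbids none. Hence at least $2^{n-1}-3\ell$ admissible $b$ remain (the additive constants being absorbed using $0\in S$), which is positive precisely when $\ell<2^{n-1}/3$; any admissible $b$ then produces a differential of multiplicity $\geq4$, so $\Delta_G\geq4$ and $G$ is not APN. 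I expect the main obstacle to be this final bookkeeping: one must verify that the coefficient of $\ell$ is exactly $3$ — one $\ell$ from excluding $b\in S$ and a $2\ell$ from excluding the two roots from $S$ — and that, combined with the one-half density coming from the trace condition, these yield the stated threshold $2^{n-1}/3$ rather than a weaker constant.
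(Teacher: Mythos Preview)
Your proof is correct and follows essentially the same strategy as the paper: in the nondegenerate case $\alpha_m\neq0$ one fixes a pole, forms the $\cb$-pair through it, and then counts the values of $b$ for which a disjoint $\cc$-pair exists via the trace/quadratic criterion of Proposition~\ref{abalpham}, obtaining the same threshold $2^{n-1}-3\ell>0$. The only real difference is cosmetic: you first conjugate by the affine maps of Lemma~\ref{inverse lemma} so that $G$ becomes $x^{q-2}$ off the exceptional set $S$, and you anchor the $\cb$-pair at the M\"obius singularity $0\in S$ (with $c=H(0)$), whereas the paper works in the original coordinates and anchors at the pole $p_k$ with $G(p_k)=\beta_m/\alpha_m$ supplied by Proposition~\ref{propfundamental}. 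Your normalization makes the formulas cleaner (the trace condition becomes $\tr(\tfrac{1}{cb+1})=0$ rather than $\tr(\tfrac{1}{ab\alpha_m^2})=0$) and lets you absorb the stray additive constants using $0\in S$, but the two arguments are the same in substance and yield the identical bound.
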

\begin{proof}
 Let us first suppose $\alpha_m = 0$, then by Lemma \ref{gulemma},
  we have $G(x) = \frac{x\beta_m+\beta_{m-1}}{\alpha_{m-1}}$ for all $x \notin \pole$. Therefore fixing
  any nonzero $b\in \Fbn$, one has
   $$G(u) + G(u+b) = \frac{u\beta_m+\beta_{m-1}}{\alpha_{m-1}}+ \frac{(b+u)\beta_m+\beta_{m-1}}{\alpha_{m-1}}
   =\frac{b\beta_m}{\alpha_{m-1}}\neq 0$$
   for all $u\in \Fbn$ satisfying  $u, u+b  \notin \pole$. Since there are at most $2\ell$ possible values of $u\in \Fbn$
   such that either $u$ or $b+u$ are in $\pole$, there exists at least $2^n-2\ell >2^{n-1}+\ell > 8$ number of solutions for the
   $\du$ equation $a=G(x)+G(x+b)$ with $a=\frac{b\beta_m}{\alpha_{m-1}}$ and $b$.
   Now let us consider the case $\alpha_m\neq 0$. Using Proposition \ref{propfundamental}, one
    chooses $p_k\in \pole$ such that
$G(p_k)=\frac{\beta_m}{\alpha_m}$. Then, for all $b$ with $b+p_k\notin \pole$, one has
$$
a=G(p_k)+G(b+p_k)=\dfrac{\beta_m}{\alpha_m}+\dfrac{(b+p_k)\beta_m+\beta_{m-1}}{(b+p_k)\alpha_m+\alpha_{m-1}}
         =\dfrac{1}{\alpha_m(b\alpha_m+p_k\alpha_m+\alpha_{m-1})},
$$
 and thus
$$
ab{\alpha_m}^2+a\alpha_m(p_k\alpha_m+\alpha_{m-1})=1 \quad\mbox{i.e.,}\,\,\,
   \frac{1}{ab{\alpha_m}^2}=1+\frac{p_k\alpha_m+\alpha_{m-1}}{b\alpha_m}
$$
If $p_k\alpha_m+\alpha_{m-1}\neq 0$, then $1+\frac{p_k\alpha_m+\alpha_{m-1}}{b\alpha_m}$ takes all the values of
$\Fbn\setminus \{1\}$ and $\tr(\frac{1}{ab{\alpha_m}^2})=0$ for $2^{n-1}$ values of $b$.
   Because $a=\frac{1}{\alpha_m(b\alpha_m+p_k\alpha_m+\alpha_{m-1})}$ is expressed as a parametrization of $b$,
     for each $1\leq i\leq \ell$, $a= \frac{b}{\alpha_mx_ib+x_i^2}$ is possible only (if they exist) for two values of $b\in \Fbn$
   since $b$ is a root of nontrivial quadratic equation
    $\alpha_m^2 b^2+\alpha_m(p_k\alpha_m+\alpha_{m-1}+x_i)b+x_i^2=0$, where $x_i=\alpha_mp_i+\alpha_{m-1}$.
    Thus, by removing these $2\ell$ values of $b$, one can use
   Proposition \ref{abalpham}, and also removing at most $\ell$ values of $b$
   where $b+p_k\in \pole$, we conclude that, assuming $3\ell<2^{n-1}$, there exist $a, b\in \Fbn$ with
    $b+p_k\notin \pole$ and $u,v\notin \pole$ such that
    $$
    G(p_k)+G(b+p_k)=a=G(u)+G(v)
   $$
   Therefore one has $\du_G(a,b)\geq 4$ and $G$ is not APN.
  If $p_k\alpha_m+\alpha_{m-1}= 0$, then one has $\tfrac{1}{ab{\alpha_m}^2}=1$ for all $b$
  and thus $\tr(\tfrac{1}{ab{\alpha_m}^2})=0$ because
  $n$ is even.
\end{proof}
\begin{remark}

\hfill
\begin{enumerate}
 \item Since $\ell\leq m$ where $m=\crk(G)$, if $m< \dfrac{2^{n-1}}{3}$ then $G$ is not APN. That is, any APN permutation
$G$ on $\Fbn$ with $n\equiv 0 \pmod{2}$ must have $\crk(G)\geq \dfrac{2^{n-1}}{3}$. In particular, the APN permutation
on $\mathbb F_{2^6}$ found by Dillon \cite{BDMW} has Carlitz rank $\geq 11$.
 \item No APN permutation on $\F_{2^8}$ is found at this moment.
 However, if it exists, it must have Carlitz rank $\geq \dfrac{2^7}{3}\approx 42.67$
\end{enumerate}
\end{remark}

\section{Permutations of Carlitz Rank Three and $\alpha, \beta$ of Convergents}\label{sectioncarlitz3}

Using the tools that we presented in previous sections, we now show how the cryptographic properties such as boomerang
and differential uniformities of a permutation of Carlitz rank $3$ can be computed. Therefore let us consider Carlitz
rank $3$ case, i.e. $G(x) = [0,1,\beta,x]=((x^{q-2}+\beta)^{q-2}+1)^{q-2}$ with $\beta\neq 0,1$.
 Please note that $G(x)$ is a (compositional) inverse of $F(x)=[0,\beta,1,x]$ and they are affine equivalent to each
 other
 via the relation $G(x)=\beta F(\beta x)$. When $\beta\neq 0,1$,
it is trivial to check $\crk(G)=3$ (i.e., $\crk(G)<3$ is not possible.) As we already mentioned in Remark
\ref{remarkcarlitz3}, we have
  $p_1=[0, \beta, 1]=\tfrac{1}{\beta+1}, p_2=[0,\beta]=\tfrac{1}{\beta}, p_3=0$. We now introduce a new constant $\alpha$
   defined as $\alpha=\beta+1$, which will make our subsequent computations much easier.
Then we can rewrite the lists of  pole set $\pole$ and image pole set $G(\pole)$ as
\begin{align}
\pole &=\{p_1,p_2,p_2\}=\{[0, \beta, 1], [0,\beta], [0]\}=\{\tfrac1\alpha,\tfrac1\beta,0\} \label{infopole}\\
G(\pole)&=\{G(p_1),G(p_2),G(p_3)\}=\{A_1, A_2, A_3\}=\{0, 1, \tfrac{\beta}{\alpha}\} \label{infogpole}
\end{align}
Since all $A_i$ are distinct, one has the following nice correspondence by Proposition \ref{propfundamental},
$$A_1=0=\frac{\beta_1}{\alpha_1},\quad
A_2=1=\frac{\beta_2}{\alpha_2}, \quad A_3=\frac\beta\alpha=\frac{\beta_3}{\alpha_3}$$
 Also,
 since  $G$ has the form of
fraction when $x \notin \pole$ by  Lemma \ref{gulemma}, we have
 \begin{equation}
 G(x) = \begin{cases}
  A_i &\text{ if } x = p_i \in \pole \\
\dfrac{\beta x + 1}{\alpha x + 1} &\text{ if } x \notin \pole
\end{cases}
 \end{equation} where $\alpha = \beta + 1.$

\begin{remark}
Although our main interest in this article is about the boomerang and differential uniformity,
 the permutation
  $G(x)=[0,1,\beta,x]$ has other nice cryptographic properties such as high algebraic degree and high nonlinearity.
  For the definitions and fundamentals of the above notions, please refer \cite{Bud15,Can,Car10,Car11,CS17}.
  By Lagrange interpolation, one has
\begin{align*}
G(x) &= \tfrac{\beta}{\alpha} + \tfrac{1}{\alpha}(\alpha x + 1)^{2^n-2} + \tfrac{\beta}{\alpha}( x +
\tfrac{1}{\alpha})^{2^n-1} + ( x + \tfrac{1}{\beta})^{2^n-1} +\tfrac{1}{\alpha}x^{2^n-1} \\
     &= \tfrac{1}{\alpha\beta} x^{2^n-2}+ \,\,\, \mbox{lower terms},
\end{align*}
so that the algebraic degree of $G$ is $n-1$. It is shown in \cite{LWY13e} that the nonlinearity of $G$ is
  $\geq 2^{n-1}-2^{\frac{n}2}-3$, but using a refined technique, one can prove that the nonlinearity is
  $\geq 2^{n-1}-2^{\frac{n}2}-2$.
\end{remark}

For given nonzero $a,b \in \Fbn$, let us consider the quadratic polynomial (for the case $m=3$) in the proof of
Proposition \ref{abalpham},
\begin{align}\label{basicpoly}
 f(X)=X^2+b\alpha X+\frac{b}{a}=(X+\alpha u+1)(X+\alpha v+1) \in \Fbn[X],
\end{align}
where $a=G(u)+G(v), \,\, b=u+v$. Now we rephrase Proposition \ref{abalpham} for the case $G(x)=[0,1,\beta, x]$ of
Carlitz rank three.
\begin{corollary}\label{abalphamcoro}
For $G(x)=[0,1,\beta, x]$, one always has $\# \mathcal{C}_G^{(a,b)}\leq 2$,  and one has $\# \mathcal{C}_G^{(a,b)}=2$
 if and only if $\tr(\frac{1}{ab{\alpha}^2})=0$ with $a,b$
 satisfying $a\neq (\alpha a+1)b$ and  $a\neq \beta(\alpha a+\beta)b$.
\end{corollary}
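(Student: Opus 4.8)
The plan is to read off the corollary directly from Proposition \ref{abalpham} in the special case $m = 3$, $G(x) = [0, 1, \beta, x]$. First I would record the convergent data needed to invoke that proposition. Since $\beta \neq 1$, the recurrence \eqref{recurrence} (or, equivalently, the closed form $G(x) = \frac{\beta x + 1}{\alpha x + 1}$ for $x \notin \pole$ furnished by Lemma \ref{gulemma} just above the corollary) gives $\alpha_m = \alpha_3 = \alpha = \beta + 1 \neq 0$ and $\alpha_{m-1} = \alpha_2 = 1$. Hence Proposition \ref{abalpham} applies verbatim: it already yields $\# \mathcal{C}_G^{(a,b)} \in \{0, 2\}$ (in particular $\le 2$), and it gives $\# \mathcal{C}_G^{(a,b)} = 2$ if and only if $\tr(\frac{1}{ab\alpha^2}) = 0$ together with $f(x) = x^2 + b\alpha x + \frac{b}{a} \neq 0$ for every $x$ in $\alpha_m \pole + \alpha_{m-1} = \alpha\pole + 1$, where $f$ is the quadratic \eqref{basicpoly}.

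It then remains only to make the set $\alpha\pole + 1$ and the resulting inequalities explicit. Using $\pole = \{\frac{1}{\alpha}, \frac{1}{\beta}, 0\}$ from \eqref{infopole} and the characteristic-two identity $\alpha + \beta = 1$, I would compute $\alpha\pole + 1 = \{0, \frac{1}{\beta}, 1\}$, so that the nonvanishing condition splits into the three evaluations $f(0)$, $f(1)$, $f(\frac{1}{\beta})$.

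Clearing denominators in each evaluation finishes the proof. The value $f(0) = \frac{b}{a}$ is automatically nonzero because $a, b \in \Fbn \setminus \{0\}$, so the pole $x = 0$ imposes no constraint; multiplying $f(1)$ by $a$ turns $f(1) \neq 0$ into $a \neq (\alpha a + 1)b$; and multiplying $f(\frac{1}{\beta})$ by $a\beta^2$ turns $f(\frac{1}{\beta}) \neq 0$ into $a \neq \beta(\alpha a + \beta)b$. Conjoining these two nontrivial inequalities with $\tr(\frac{1}{ab\alpha^2}) = 0$ reproduces the stated criterion exactly. There is no genuine obstacle here, since the result is a mechanical substitution into Proposition \ref{abalpham}; the only step demanding a moment's care is the simplification $\alpha\pole + 1 = \{0, \frac{1}{\beta}, 1\}$, which rests on the characteristic-two relation $\alpha + \beta = 1$.
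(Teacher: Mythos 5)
Your proposal is correct and follows essentially the same route as the paper's own proof: both invoke Proposition \ref{abalpham} with $\alpha_3=\alpha\neq 0$, compute $\alpha\pole+1=\{0,\tfrac1\beta,1\}$, and translate the nonvanishing of $f$ at these three points into the stated inequalities, with $f(0)=\tfrac{b}{a}\neq 0$ holding automatically. The only cosmetic difference is that you make the identification $\alpha_m=\alpha_3=\alpha$, $\alpha_{m-1}=\alpha_2=1$ explicit, which the paper leaves implicit from Section \ref{sectioncarlitz3}.
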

\begin{proof}
In view of Proposition \ref{abalpham}, one only needs to check the condition $f(x)\neq 0$ for all $x\in \alpha\pole
+1$. Since $\pole=\{\tfrac1\alpha,\tfrac1\beta,0\}$, one has $\alpha\pole +1=\{0,\tfrac1\beta,1\}$. One can easily
 check $f(0)=\tfrac{b}{a}\neq 0$, $f(1)=1+b\alpha+\tfrac{b}{a}\neq 0$ is equivalent to $a\neq (\alpha a+1)b$,
  and $f(\tfrac1\beta)=\tfrac1{\beta^2}+\tfrac{b\alpha}{\beta}+\tfrac{b}{a}\neq 0$ is
  equivalent to $a\neq \beta(\alpha a+\beta)b$.
\end{proof}

By replacing $X$ by $\alpha X+1$ from $f(X)$, we get
\begin{align}
\hh^{(a,b)}(X) \stackrel{{\rm def}}{=} X^2+bX+\frac{b}{\alpha}+\frac{b}{\alpha^2a}+\frac{1}{\alpha^2}=(X+u)(X+v)
\label{eqnhh}
\end{align}

\noindent For each domain pole $p_i\in \pole=\{p_1, p_2, p_3\}=\{\frac{1}{\alpha}, \frac{1}{\beta}, 0\}$, we define
three polynomials
\begin{align}
h^{(a,b)}_1(X) &=
\hh^{(a,b)}(X+\frac{1}{\alpha})=X^2+bX+\frac{b}{\alpha^2a}=(x+\frac{1}{\alpha}+u)(X+\frac{1}{\alpha}+v) \label{h1x}\\
h^{(a,b)}_2(X) &=
\hh^{(a,b)}(X+\frac{1}{\beta})=
 X^2+bX+\frac{b}{\alpha\beta}+\frac{b}{\alpha^2a}+\frac{1}{\alpha^2\beta^2}=(x+\frac{1}{\beta}+u)(X+\frac{1}{\beta}+v) \label{h2x}\\
h^{(a,b)}_3(X) &= \hh^{(a,b)}(X+0)= X^2+bX+\frac{b}{\alpha}+\frac{b}{\alpha^2a}+\frac{1}{\alpha^2}=(x+u)(X+v)
\label{h3x}
\end{align}
\begin{remark}\label{abalphamcororemark}
One easily checks $\hh^{(a,b)}(\tfrac1\alpha)=\tfrac{b}{\alpha^2 a}\neq 0$. Also $\hh^{(a,b)}(\tfrac1\beta)=0$ if and
only if $a=\beta(\alpha a+\beta)b$, and $\hh^{(a,b)}(0)=0$ if and only if $a=(\alpha a+1)b$. Therefore
 Corollary \ref{abalphamcoro} can be restated as follows; one has  $\#\mathcal{C}_G^{(a,b)}=2$ if and only if
$\hh^{(a,b)}(X)=0$ has solutions in $\Fbn$ none of which is a pole. Also note that the reducibility of $\hh$ is
equivalent to the reducibility of $h_i$ for any $i=1,2,3$.
\end{remark}

\noindent We also define bivariate polynomial $H_a(X,Y)$ as
\begin{align*}
H_a(X,Y)=X^2+XY+\frac{Y}{\alpha^2a}
\end{align*}
Note that we have a very natural relation (which we will use repeatedly in analyzing $\bu$ equations
 in Section \ref{boomerangsec}) between
$H_a(X,Y)$ and $h^{(a,t)}_1(X)=\hh^{(a,t)}(X+\frac{1}{\alpha})$. That is, if $X=s$ is a solution of $h^{(a,t)}_1(X)$,
then
\begin{align}
0=h^{(a,t)}_1(s)=s^2+st+\frac{t}{\alpha^2a}=H_a(s,t) \label{heqn}
\end{align}
We will check whether the above equation holds when $s,t$ are given as translations of $b_i$ which are linear
fractional transformations of $a$, and checking $H_a(s,t)=0$ is equivalent of finding a root in $\Fbn$ of certain
polynomials with indeterminate $a$.

\section{Differential Uniformity of $G$}\label{differentialsec}

By utilizing the methods that were introduced in previous sections, we want to derive information of the boomerang and
 differential uniformity of $G(x)=[0,1,\beta,x]$ on $\Fbn$. Our first goal is a complete classification
  of the differential uniformity of $G$, which will be needed when we discuss boomerang uniformity in Section
  \ref{boomerangsec}.

\begin{theorem}\label{dutheorem}
Let $\beta \in \Fbn\setminus \F_4$ with $\alpha=\beta+1$ and let
 $G(x)=[0,1,\beta,x]=((x^{2^n-2}+\beta)^{2^n-2}+1)^{2^n-2}$
 be a permutation of Carlitz rank three. Then one has the differential uniformity $\du_G\in \{4,6,8\}$
 where $\du_G=8$ if and only if $\beta^3+\beta^2+1= 0$.  If  $\beta^3+\beta^2+1\neq 0$, then by defining
  the following polynomials in $\Fbn[Z]$,
 \begin{align*}
    f_{12}(Z)&=Z^4+\dfrac1\beta Z^2+\dfrac\alpha\beta Z+\dfrac\alpha\beta,\\
    f_{13}(Z)&=Z^4+\alpha Z^2+\beta Z+1,
 \end{align*}
the following holds.
 \begin{enumerate}
 \item[1] If at least one of $f_{12}(Z)$ and $f_{13}(Z)$  has a root in $\Fbn$, then $\du_G=6$.
 \item[2] If none of $f_{12}(Z)$ and $f_{13}(Z)$  has a root in $\Fbn$, then $\du_G=4$.
 \end{enumerate}
\end{theorem}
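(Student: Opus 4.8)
The plan is to count, for each nonzero pair $(a,b)$, the solutions of the $\du$ equation $G(u)+G(u+b)=a$ by comparing $G$ with the single M\"obius map $\tilde G(x)=\frac{\beta x+1}{\alpha x+1}$, which agrees with $G$ off the three poles $\pole=\{\frac1\alpha,\frac1\beta,0\}$. First I would record the baseline identity $\tilde G(u)+\tilde G(u+b)=\frac{\alpha^2 b}{(\alpha u+1)(\alpha(u+b)+1)}$, whose numerator is constant in $u$; setting it equal to $a$ is exactly the quadratic $\hh^{(a,b)}$ of \eqref{eqnhh}, so the $\cc$ contribution is at most $2$ (this is Corollary \ref{abalphamcoro}). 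Every remaining solution has $u\in\pole$ or $u+b\in\pole$, and for each pole $p_i$ with $p_i+b\notin\pole$ the pair $\{p_i,p_i+b\}$ is a genuine $\cb$ solution iff $a=a_i(b):=A_i+\tilde G(p_i+b)$; here I would compute $a_1(b)=\frac{\beta b+1/\alpha}{\alpha b}$, $a_2(b)=1+\frac{\beta^2 b}{\alpha\beta b+1}$ and $a_3(b)=\frac{1}{\alpha^2 b+\alpha}$. Thus $\du_G(a,b)=\#\mathcal{C}_G^{(a,b)}+2\,\#\{i:a=a_i(b)\}+\#\mathcal{A}_G^{(a,b)}$. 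Because the three $\ca$ points are pairwise distinct, $\#\mathcal{A}_G^{(a,b)}\le 2$; and since a $\ca$ pair occupies two of the three poles, at most one further pole can be $\cb$-active whenever $\#\mathcal{A}_G^{(a,b)}=2$. This already caps every total at $8$ and forces it to be even.

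Next I would locate the jumps. A short computation shows each $\ca$ point carries $\#\mathcal{A}_G^{(a,b)}=2$ while an extra active pole there would force $\beta^2+\beta+1=0$, impossible since $\beta\notin\F_4$; hence $\ca$ points never give more than $\du_G=4$, and any total $\ge 6$ must come from two poles activating at one $(a,b)$ together with a surviving $\cc$ pair. Equating $a_i(b)=a_j(b)$ clears to one quadratic in $b$ per pair: $\alpha^2\beta b^2+\alpha\beta b+1=0$ for $\{p_1,p_3\}$, $\alpha^3\beta b^2+\alpha^2 b+1=0$ for $\{p_1,p_2\}$, and $\alpha^2\beta b^2+\alpha^2 b+\beta=0$ for $\{p_2,p_3\}$, whereas survival of the $\cc$ pair is the trace condition $\tr(\frac{1}{ab\alpha^2})=0$ of Corollary \ref{abalphamcoro}. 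The decisive move is to encode that trace condition through a splitting variable $Z$ by writing $Z^2+Z+\frac{1}{ab\alpha^2}=0$, solve it for $b$, and substitute into the corresponding quadratic: the pair $\{p_1,p_3\}$ collapses to $Z^4+\alpha Z^2+\beta Z+1=f_{13}(Z)$, and both $\{p_1,p_2\}$ and $\{p_2,p_3\}$ collapse to $Z^4+\frac{1}{\beta}Z^2+\frac{\alpha}{\beta}Z+\frac{\alpha}{\beta}=f_{12}(Z)$. This is exactly the assertion that $\du_G\ge 6$ if and only if $f_{12}$ or $f_{13}$ has a root in $\Fbn$.

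For the value $8$ I would require all three poles to activate at one $(a,b)$; since $a_i=a_j$ and $a_i=a_k$ force the remaining equality, this means solving two of the quadratics simultaneously. Eliminating $b$ (I expect $b=\beta/\alpha^3$) and simplifying with $\alpha=\beta+1$ produces precisely $\beta^3+\beta^2+1=0$, and conversely this cubic realizes the triple activation with a surviving $\cc$ pair, giving $\du_G=8$ exactly when $\beta^3+\beta^2+1=0$. To finish the trichotomy I would observe that when $\beta^3+\beta^2+1\ne 0$ no two distinct pole-pairs can activate at a common $(a,b)$ — that would force all three, hence the cubic — so at most one pair is active, $\#\mathcal{B}_G^{(a,b)}\le 4$, and the total is capped at $6$; combined with the baseline $\du_G\ge 4$ (exhibited at an $\ca$ point whose $\cc$ trace condition holds), this gives $\du_G\in\{4,6\}$, split exactly by the existence of a root of $f_{12}$ or $f_{13}$.

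The main obstacle I anticipate is not the substitution algebra but the non-degeneracy bookkeeping. I must check that each root $Z$ yields an admissible $b$ (nonzero, with $p_i+b\notin\pole$ and $Z^2+Z+1\ne 0$), that the two roots of $\hh^{(a,b)}$ used as the $\cc$ pair really avoid $\pole$ instead of collapsing into further $\cb$ solutions, and that the boundary situations — $\beta\in\F_4$, $Z\in\F_4$, a $\cc$ root landing on a pole, or a pair-activation point coinciding with an $\ca$ point — do not quietly alter the counts. Controlling these cases is what upgrades the clean divisibility ``$\du_G\in\{4,6,8\}$'' and the two root criteria into a rigorous classification.
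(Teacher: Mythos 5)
Your plan reproduces the paper's own architecture: the $\mathcal{A}/\mathcal{B}/\mathcal{C}$ decomposition, the bound $\#\mathcal{C}_G^{(a,b)}\le 2$ from Corollary \ref{abalphamcoro}, pole-activation conditions (your $a_i(b)$ are exactly the inverses of the paper's parametrizations $b_i(a)$ in \eqref{eqnb1}--\eqref{eqnb3}), coincidence quadratics whose trace conditions collapse to $f_{12}$ and $f_{13}$, and the triple coincidence (your $b=\beta/\alpha^3$ is indeed $b=1$, $a=\beta$) forcing $\beta^3+\beta^2+1=0$ for the value $8$. I checked your three quadratics in $b$ and the assignment of pairs to quartics; they agree with the paper's \eqref{eqnb1b2}, \eqref{eqnb1b3}, \eqref{eqnb2b3} after substituting $a=a_i(b)$. (One slip: since $\alpha+\beta=1$, the M\"obius identity is $\tilde G(u)+\tilde G(u+b)=\tfrac{b}{(\alpha u+1)(\alpha(u+b)+1)}$, with numerator $b$ rather than $\alpha^2 b$; this is harmless only because you then invoke Corollary \ref{abalphamcoro} instead of your identity.)

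The genuine gap is your lower bound $\du_G\ge 4$, which case 2 of the theorem requires. You propose to exhibit it ``at an $\ca$ point whose $\cc$ trace condition holds,'' but such a point need not exist. At the three $\ca$ points $(a,b)=(1,\tfrac1{\alpha\beta}),\,(\tfrac\beta\alpha,\tfrac1\alpha),\,(\tfrac1\alpha,\tfrac1\beta)$ one computes $\tfrac{1}{ab\alpha^2}=\tfrac\beta\alpha,\,\tfrac1\beta,\,\tfrac\beta\alpha$ respectively, so your exhibit exists iff $\tr(\tfrac\beta\alpha)=0$ or $\tr(\tfrac1\beta)=0$. Since $\tfrac\beta\alpha=1+\tfrac1\alpha$, these are (up to the constant $\tr(1)$) independent trace conditions on $\tfrac1\alpha$ and $\tfrac1\beta$, and both equal $1$ simultaneously for many $\beta$ (roughly a quarter of the field, by a standard Kloosterman-sum count); for such $\beta$ every $\ca$ point has $\du_G(a,b)=2$, and in case 2 your argument only yields $\du_G\in\{2,4\}$, leaving non-APN-ness unproven. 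This is precisely the step the paper's ``Proof of Theorem \ref{dutheorem}'' is devoted to, and it is resolved by running over a field-sized family rather than three fixed points: for every $a$, the pair $(a,b_1)$ with $b_1=\tfrac{1}{\alpha(\alpha a+\beta)}$ carries a $\mathcal{B}(1)$ solution, Lemma \ref{tracelemma}-$(a)$ turns the survival condition into $\tr(1+\tfrac{\beta}{\alpha a})=0$, which holds for about $2^{n-1}$ values of $a$, and discarding the at most two values of $a$ where the nondegeneracy hypotheses of Corollary \ref{abalphamcoro} fail leaves a point with $\du_G(a,b_1)\ge 4$. You need this (or an equivalent) argument to close case 2. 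Separately, the nondegeneracy bookkeeping you defer --- that each root $Z$ gives admissible $(a,b)$ and that $a\ne(\alpha a+1)b$ and $a\ne\beta(\alpha a+\beta)b$, so that trace $0$ genuinely produces a $\cc$ pair --- is needed for the forward implications (``root $\Rightarrow\du_G=6$'' and the realization of $8$); these are short verifications, as in the paper, but your proof is incomplete until they are carried out.
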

\begin{remark}
The proof of the above theorem will be given at the end of this section. One novel property of the
 above theorem is that one can determine $\du_G$ very quickly. In other words, letting $f(Z)=f_{12}(Z)f_{13}(Z)$,
 one can compute $\gcd(f(Z), Z^{2^n}-Z)$ in just $O(n^3)$ bit operations using repeated squaring
  $Z^{2^i} \pmod{f}$. The $\gcd$ is nontrivial if and only if $f$ has a root in $\Fbn$.
  \end{remark}
\begin{remark}
   Please
   note that we replaced the trace conditions of many previous works with the solvability of quartic polynomials.
    $($In fact, the trace condition is equivalent to the solvability of quadratic polynomials.$)$  We will derive
  a similar theorem on boomerang uniformity later.
\end{remark}

From the information of $\pole$ and $G(\pole)$ in the equations \eqref{infopole} and \eqref{infogpole},
 there are only $3$ possibilities of  $\{u,v\} \in \mathcal{A}$
 and corresponding values of $(a,b)$. Namely
 \begin{alignat}{2}
 a&=1=G(\tfrac1\alpha)+G(\tfrac1\beta),\,\,\, b=\tfrac{1}{\alpha\beta} \quad & {when}\,\, & \{u,v\}=\{\tfrac1\alpha, \tfrac1\beta\} \label{Aclass12}\\
 a&=\tfrac\beta\alpha=G(\tfrac1\alpha)+G(0),\,\,\,b=\tfrac{1}{\alpha} \quad & {when}\,\, & \{u,v\}=\{\tfrac1\alpha, 0\} \\
 a&=\tfrac1\alpha=G(\tfrac1\beta)+G(0),\,\,\,b=\tfrac{1}{\beta} \quad & {when}\,\, & \{u,v\}=\{\tfrac1\beta, 0\}
 \end{alignat}

\begin{lemma}\label{aclasslemma}
Suppose that one has two pairs of  $\{u,v\} \neq \{u',v'\}$ satisfying
$$
G(u)+G(v)=a=G(u')+G(v')
$$
such that $\{u,v\}$ is of $\ca$. Then one has the followings.
\begin{enumerate}[$(a)$]
 \item $\{u',v'\}$ is not of $\ca$, i.e., $\{u',v'\}\in \mathcal{B}$ or $\{u',v'\}\in \mathcal{C}$.
 \item If $\{u',v'\}$ is of $\cb$, say $u'\in \pole$ and
 $v'\not\in \pole$, then one has $\{u,v,u'\}=\pole$ $($i.e., $\{u,v,u'\}$ and $\{\tfrac1\alpha,\tfrac1\beta,0\}$ are the same set$)$,
 and $v'$ is uniquely determined as $v'=\frac{\beta}{\alpha^2}$.
 \item If $u+v=u'+v'$, then $\{u',v'\} \in \mathcal{B}$ if and only if $\beta \in \F_4$.
\end{enumerate}
\end{lemma}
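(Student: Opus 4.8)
The plan is to reduce the whole lemma to the three explicit \ca\ configurations recorded in \eqref{Aclass12} and the two lines following it, the image--pole values $A_1=0,\ A_2=1,\ A_3=\tfrac{\beta}{\alpha}$ from \eqref{infogpole}, and the characteristic--two identity $\alpha+\beta=1$. Two derived identities will drive everything: the image poles sum to $A_1+A_2+A_3=\tfrac{\alpha+\beta}{\alpha}=\tfrac1{\alpha}$, and the domain poles sum to $\tfrac1{\alpha}+\tfrac1{\beta}+0=\tfrac{\alpha+\beta}{\alpha\beta}=\tfrac1{\alpha\beta}$. For part $(a)$ I would observe that the three \ca\ pairs produce exactly $a\in\{1,\tfrac{\beta}{\alpha},\tfrac1{\alpha}\}$, which are pairwise distinct since $1=\tfrac{\beta}{\alpha}$ forces $\alpha=\beta$, $1=\tfrac1{\alpha}$ forces $\beta=0$, and $\tfrac{\beta}{\alpha}=\tfrac1{\alpha}$ forces $\beta=1$, all excluded by $\beta\neq0,1$; hence $a$ determines the \ca\ pair uniquely and a second pair $\{u',v'\}\neq\{u,v\}$ with the same $a$ cannot also lie in $\mathcal A$.

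For part $(b)$, write $\{u,v\}=\{p_i,p_j\}$ so $a=A_i+A_j$, and suppose $\{u',v'\}\in\mathcal B$ with $u'=p_k\in\pole$ and $v'\notin\pole$. The crucial step is $k\notin\{i,j\}$: if, say, $k=i$, then $A_i+A_j=a=A_i+G(v')$ gives $G(v')=A_j=G(p_j)$, and injectivity of the permutation $G$ forces $v'=p_j\in\pole$, a contradiction. Thus $\{i,j,k\}=\{1,2,3\}$, i.e. $\{u,v,u'\}=\pole$, whence $G(v')=A_1+A_2+A_3=\tfrac1{\alpha}$. Since $v'\notin\pole$ we may write $G(v')=\tfrac{\beta v'+1}{\alpha v'+1}$, and solving $\tfrac{\beta v'+1}{\alpha v'+1}=\tfrac1{\alpha}$ while using $\alpha\beta+\alpha=\alpha^2$ and $1+\alpha=\beta$ yields the unique value $v'=\tfrac{\beta}{\alpha^2}$.

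For part $(c)$ I would argue the two implications separately. Forward: if $\{u',v'\}\in\mathcal B$ with $u'+v'=u+v$, then part $(b)$ gives $v'=\tfrac{\beta}{\alpha^2}$ and $\{u,v,u'\}=\pole$, so $v'=u+v+u'$ equals the full pole sum $\tfrac1{\alpha\beta}$; equating $\tfrac{\beta}{\alpha^2}=\tfrac1{\alpha\beta}$ gives $\alpha\beta^2=\alpha^2$, i.e. $\beta^2=\alpha=\beta+1$, which is $\beta^2+\beta+1=0$ and hence $\beta\in\F_4$. Conversely, if $\beta\in\F_4$ then $\beta^2+\beta+1=0$, equivalently $\alpha=\beta^2$ and $\alpha\beta=1$, so $\tfrac{\beta}{\alpha^2}=\tfrac1{\alpha\beta}=1$; the \cb\ configuration of part $(b)$ then has $v'=1\notin\pole=\{\beta,\beta^2,0\}$ and $u'$ equal to the third pole, and its sum $u'+v'=u'+\tfrac1{\alpha\beta}$ matches $u+v=u'+\tfrac1{\alpha\beta}$, exhibiting a genuine matching--sum \cb\ pair precisely when $\beta\in\F_4$.

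The step I expect to be the main obstacle is the bookkeeping in part $(b)$: justifying through injectivity that $u'$ must be the pole outside $\{u,v\}$, and then that $G(v')$ is pinned to the single value $\tfrac1{\alpha}$; once this is in place, part $(c)$ is only the short comparison of $\tfrac{\beta}{\alpha^2}$ with the pole sum $\tfrac1{\alpha\beta}$. The one point requiring care is that for $\beta\in\F_4$ a \cc\ solution with the same $(a,b)$ may coexist $($for example when $n\equiv0\pmod4$, where $\hh^{(a,b)}$ can split$)$, so the equivalence in $(c)$ is best read as the realizability of a matching--sum \cb\ second pair rather than a claim that every such second pair lies in $\mathcal B$.
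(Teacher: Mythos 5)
Your proof is correct, and parts $(b)$ and $(c)$ follow the paper's own route almost verbatim: injectivity of $G$ forces $u'$ to be the third pole, the image-pole sum $A_1+A_2+A_3=\tfrac1\alpha$ pins down $G(v')$, solving the fractional form of $G$ gives $v'=\tfrac{\beta}{\alpha^2}$, and $(c)$ reduces to comparing $\tfrac{\beta}{\alpha^2}$ with the domain-pole sum $\tfrac1{\alpha\beta}$, i.e.\ to $\beta^2+\beta+1=0$. The only genuine deviation is part $(a)$: the paper argues by pigeonhole (two $\mathcal{A}$-pairs drawn from only three poles must share an element, and injectivity of $G$ then forces the pairs to coincide), whereas you compute the three possible values $a\in\{1,\tfrac\beta\alpha,\tfrac1\alpha\}$ and check they are pairwise distinct so that $a$ determines the $\mathcal{A}$-pair; both are two-line arguments, and yours has the minor bonus of recording which $a$ belongs to which pair. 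Your closing caveat about $(c)$ is well taken rather than pedantic: read literally, the converse ``$\beta\in\F_4$ and $u+v=u'+v'$ imply $\{u',v'\}\in\mathcal{B}$'' can fail when a $\cc$ pair with the same $(a,b)$ coexists, and the paper's own proof in fact establishes exactly what you prove --- its chain of equivalences is carried out under the standing assumption $\{u',v'\}\in\mathcal{B}$, so what is really shown is that a matching-sum $\cb$ pair exists if and only if $\beta\in\F_4$; this existence reading is precisely what is invoked later in the proof of Proposition \ref{propaclass}.
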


\begin{proof} \hfill

\noindent $(a)$. Suppose on the contrary that both $u',v'\in \pole$. Then, since there are only three poles, among
$u,v,u',v'$, two of them must be the same pole, say $u=u'$, then from $G(u)+G(v)=G(u')+G(v')$ and since $G$ is a
permutation, one has $v=v'$ which contradict $\{u,v\} \neq \{u',v'\}$.

\noindent $(b)$. It is clear (similarly as in the first statement) that all three $u,v,u'$ are distinct. Therefore the
equation $G(u)+G(v)=G(u')+G(v')$ is written as
$$
\tfrac1\alpha=0+1+\tfrac\beta\alpha=G(u)+G(v)+G(u')=G(v')
$$
  Then one can check $v'=\frac\beta{\alpha^2}$ is a unique solution, since $v'\not\in \pole$ implies
  $G(v')=\frac{\beta\cdot\frac{\beta}{\alpha^2}+1}{\alpha\cdot\frac{\beta}{\alpha^2}+1}=\frac{1}{\alpha}$.

  \noindent $(c)$. We use $(b)$ so that $\{u',v'\}\in \mathcal{B}$ if and only if
$\{u,v,u',v'\}=\{\tfrac1\alpha,\tfrac1\beta,0,\tfrac{\beta}{\alpha^2}\}$. Therefore $u+v=u'+v'$ if and only if
 $0=u+v+u'+v'=\tfrac1\alpha+\tfrac1\beta +\tfrac{\beta}{\alpha^2}=\tfrac{\alpha+\beta^2}{\alpha^2\beta}$, which
 happens exactly when $0=\alpha+\beta^2=1+\beta+\beta^2$.
\end{proof}

\begin{prop}\label{propaclass}
For  $(a,b) \in \{(1,\tfrac1{\alpha\beta}),(\tfrac\beta\alpha, \tfrac1\alpha), (\tfrac1\alpha, \tfrac1\beta)\}$, i.e.,
 when $a, b$ are expressed as
 $$
 a=G(p_i)+G(p_j), \quad b=p_i+p_j,
 $$
 one has $\du_G(a,b)=2,4,$ or $6$, and is classified as
 \begin{enumerate}[$(1)$]
  \item $\du_G(a,b)=6$ when {\rm (i)} $\beta\in \F_4$ and {\rm (ii)}
  $h_1^{(a,b)}(X)=X^2+bX+\tfrac{b}{\alpha^2 a}$ has a solution in $\Fbn$.
  \item $\du_G(a,b)=4$ if only one of the conditions among {\rm (i), (ii)} is satisfied.
  \item $\du_G(a,b)=2$ if none of the conditions among {\rm (i), (ii)} is satisfied.
 \end{enumerate}
\end{prop}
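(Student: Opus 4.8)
The plan is to evaluate the three summands in the disjoint decomposition $\du_G(a,b)=\#\mathcal{A}_G^{(a,b)}+\#\mathcal{B}_G^{(a,b)}+\#\mathcal{C}_G^{(a,b)}$ separately, for each of the three pairs $(a,b)$ arising from a \ca{} pair $\{p_i,p_j\}$. First I would settle the \ca-contribution: the given pair $\{p_i,p_j\}$ lies in $\mathcal{A}_G^{(a,b)}$, and by Lemma \ref{aclasslemma}$(a)$ no second \ca-pair can share the same value $a$; hence $\#\mathcal{A}_G^{(a,b)}=2$ exactly (the ordered pairs $(p_i,p_j)$ and $(p_j,p_i)$). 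This reduces the statement to counting the \cb{} and \cc{} pairs, and the two independent conditions (i), (ii) will turn out to govern exactly these two counts.

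For the \cb-count I would invoke Lemma \ref{aclasslemma}$(b)$: if any \cb-pair $\{u',v'\}$ shares the value $a$, then its pole $u'$ must be the third pole (the one outside $\{p_i,p_j\}$) and its non-pole entry is forced to be $v'=\frac{\beta}{\alpha^2}$; a quick check shows $\frac{\beta}{\alpha^2}\notin\pole$ for $\beta\neq 0,1$, so this candidate is genuinely of \cb. Thus there is a unique candidate \cb-pair, and it belongs to $\mathcal{B}_G^{(a,b)}$ precisely when it also matches the second coordinate, i.e. $u'+v'=b=p_i+p_j$. By Lemma \ref{aclasslemma}$(c)$ this happens if and only if $\beta\in\F_4$, which is condition (i). Consequently $\#\mathcal{B}_G^{(a,b)}=2$ when (i) holds and $0$ otherwise.

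For the \cc-count I would use Corollary \ref{abalphamcoro} together with Remark \ref{abalphamcororemark}: one always has $\#\mathcal{C}_G^{(a,b)}\in\{0,2\}$, and $\#\mathcal{C}_G^{(a,b)}=2$ exactly when $\hh^{(a,b)}$ (equivalently its translate $h_1^{(a,b)}(X)=X^2+bX+\frac{b}{\alpha^2a}$) has roots in $\Fbn$, none of which is a pole. To match this with condition (ii) verbatim I would substitute the three poles into $\hh^{(a,b)}$: since $\hh^{(a,b)}(\tfrac1\alpha)=\tfrac{b}{\alpha^2a}\neq 0$ always, and since plugging the specific $(a,b)$ into the criteria of Remark \ref{abalphamcororemark} shows that both $\hh^{(a,b)}(0)=0$ and $\hh^{(a,b)}(\tfrac1\beta)=0$ are incompatible with $\beta\neq 0,1$, no pole can be a root for any of the three pairs. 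Hence for these $(a,b)$ the plain solvability of $h_1^{(a,b)}$ in $\Fbn$ is equivalent to $\#\mathcal{C}_G^{(a,b)}=2$, i.e. condition (ii).

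Finally I would assemble the pieces. As the sets $\mathcal{A}_G^{(a,b)},\mathcal{B}_G^{(a,b)},\mathcal{C}_G^{(a,b)}$ are disjoint by definition (pole--pole, pole--nonpole, nonpole--nonpole), there is no double counting, and $\du_G(a,b)=2+2\cdot[\beta\in\F_4]+2\cdot[\text{(ii) holds}]$, yielding $6$, $4$, $2$ according to whether both, exactly one, or none of (i), (ii) hold. The main obstacle I anticipate is not the counting but the bookkeeping that makes condition (ii) correct \emph{as stated}: one must rule out that a root of $\hh^{(a,b)}$ secretly coincides with a pole, which would convert a would-be \cc-pair into a \cb-pair and break the clean correspondence; this is exactly what the pole-substitution check closes off for each of the three $(a,b)$.
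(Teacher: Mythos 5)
Your proposal is correct and follows essentially the same route as the paper: decompose $\du_G(a,b)=\#\mathcal{A}_G^{(a,b)}+\#\mathcal{B}_G^{(a,b)}+\#\mathcal{C}_G^{(a,b)}$, get $\#\mathcal{A}_G^{(a,b)}=2$ from Lemma \ref{aclasslemma}$(a)$, tie $\#\mathcal{B}_G^{(a,b)}=2$ to $\beta\in\F_4$ via Lemma \ref{aclasslemma}$(b)$,$(c)$, and tie $\#\mathcal{C}_G^{(a,b)}=2$ to the solvability of $h_1^{(a,b)}$ via Corollary \ref{abalphamcoro}. Your pole-substitution check (ruling out roots of $\hh^{(a,b)}$ at $0$, $\tfrac1\beta$, $\tfrac1\alpha$) is exactly the paper's verification that $a\neq(\alpha a+1)b$ and $a\neq\beta(\alpha a+\beta)b$ for the three pairs $(a,b)$, just phrased through Remark \ref{abalphamcororemark}.
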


\begin{proof}
From Lemma \ref{aclasslemma}-$(a)$, one gets $\#\mathcal{A}_G^{(a,b)}=2$.
 From Lemma \ref{aclasslemma}-$(b)$ and $(c)$, one has $\# \mathcal{B}_G^{(a,b)}=0$ or $2$,
  and $\#\mathcal{B}_G^{(a,b)}=2$ happens exactly when $\beta\in \F_4$ such that
  $p_i+p_j+p_k+\tfrac{\beta}{\alpha^2}=0$ and $b=p_i+p_j=p_k+\tfrac{\beta}{\alpha^2}$ with
   $\{p_i,p_j\}\in \mathcal{A}$ and $\{p_k,\tfrac{\beta}{\alpha^2}\}\in \mathcal{B}$.
   Since one can easily check $a\neq (\alpha a+1)b$ and $a\neq \beta(\alpha a+\beta)b$ for
   all $(a,b) \in \{(1,\tfrac1{\alpha\beta}),(\tfrac\beta\alpha, \tfrac1\alpha), (\tfrac1\alpha, \tfrac1\beta)\}$,
   from Corollary \ref{abalphamcoro}, one has $\#\mathcal{C}_G^{(a,b)}=0$ or  $2$, and $\#\mathcal{C}_G^{(a,b)}=2$ if and only if
   $\tr(\tfrac1{ab\alpha^2})=0$ which is equivalent to the solvability of $h_1^{(a,b)}(X)=0$ in $\Fbn$ from the equation
       \eqref{h1x}.
\end{proof}

 The above proposition explains all possible values of $\du_G(a,b)$
 when the set $\mathcal{A}_G^{(a,b)}$ is not empty
 (i.e., when there are poles $p_i, p_j$ such that $a=G(p_i)+G(p_j)$ and $b=p_i+p_j$).
 Now we will discuss the other cases of $a,b$ satisfying $\du$ equation.
 Consider the following $\du$ equation, $a=G(x)+G(y), \,\,\, b=x+y$,
 where there is no solution $\{x,y\}=\{u,v\}$ $\in \mathcal{A}$ (i.e., no solution satisfying $u,v\in \pole$).
 Then any possible solution $\{u,v\}$ is either of $\cb$ or $\cc$. From Corollary \ref{abalphamcoro} and Remark \ref{abalphamcororemark}
 we know $\#\mathcal{C}_G^{(a,b)}=0$ or $2$, and  $\#\mathcal{C}_G^{(a,b)}=2$
 if and only if $\hh^{(a,b)}(X)=X^2+bX+\frac{b}{\alpha}+\frac{b}{\alpha^2a}+\frac{1}{\alpha^2}$
 has two solutions $u,v$ in $\Fbn$ such that $u,v\notin \pole$
 (i.e., $a\neq (\alpha a+1)b$ and $a\neq \beta(\alpha a+\beta)b$).
  Note that the reducibility of $\hh^{(a,b)}(X)$ in $\Fbn$ is also equivalent to
 $\tr(\tfrac1{ab\alpha^2})=0$.

 The case for $\cb$ can be explained as follows. When $\{u,v\} \in \mathcal{B}_G^{(a,b)}$, then since only
 one of $u, v$ is in $\pole=\{p_1,p_2,p_3\}=\{\tfrac1\alpha, \tfrac1\beta, 0\}$, one may write $\{u,v\}=\{p_i, b+p_i\}$
   such that
 \begin{align}
 a=G(p_i)+G(b+p_i) \quad \text{with}\,\,\, b+p_i \notin \pole
 \end{align}
For given nonzero $a$ and pole $p_i$, there is unique $b=b_i$ (depending on $a$ and $p_i$) satisfying the above
equation
 because $G$ is a permutation. For each pole $\tfrac1\alpha, \tfrac1\beta, 0$, one can easily find such $b_i$ as
 \begin{alignat}{3}
   \mathcal{B(1)} : \qquad a &=G(\tfrac1\alpha)+G(b_1+\tfrac1\alpha) \quad &&\text{with}&&\,\,\, b_1=\tfrac{1}{\alpha(\alpha a+\beta)} \label{eqnb1} \\
   \mathcal{B(2)} : \qquad a &=G(\tfrac1\beta)+G(b_2+\tfrac1\beta) \quad &&\text{with}&&\,\,\, b_2=\tfrac{a+1}{\beta(\alpha a+1)} \label{eqnb2} \\
  \mathcal{B(3)} : \qquad a &=G(0)+G(b_3) \quad       &&\text{with}&&\,\,\, b_3=\tfrac{\alpha a+1}{\alpha^2 a} \label{eqnb3}
 \end{alignat}
where $\mathcal{B({\it i})}$ means that $\{p_i, b_i+p_i\} \in \mathcal{B}$. Note that the denominators of $b_i$ are
never zero because $b_i+p_i \notin \pole$.
 \begin{remark} When we have $\{u,v\} \in \mathcal{B}_G^{(a,b)}$ such that one of $u,v$ is $p_i$,
         we simply say $\{u,v\}$ satisfies $\du$ equations of type $\mathcal{B}(i): a=G(u)+G(v),\,\, b_i=u+v$.
         Also,  throughout this paper, we will fix the values $b_1,b_2$ and $b_3$ as above.
 \end{remark}

For every $(a,b)=(a, b_i)$ parametrized by $a$, one has $\#\mathcal{B}_G^{(a,b_i)}\geq 2$ because
 there is an obvious solution pair $\{p_i, b_i+p_i\} \in \mathcal{B}$. Conversely, if there exists $(a,b)$ such
 that $\#\mathcal{B}_G^{(a,b)}\geq 2$, then there is $\{u,v\} \in \mathcal{B}_G^{(a,b)}=\mathcal{B} \cap \psi_G^{-1}(a,b)$
 satisfying $\du$ equation with $\{u,v\} \in \mathcal{B}$, which implies exactly one of $u, v$ (say
 $u$) is in $\pole$ so that $u=p_i$ for some $i$ and $b=b_i$ because $a=G(p_i)+G(b+p_i)$.

 Using the same argument, $\#\mathcal{B}_G^{(a,b)}\geq 4$ is possible only when $b=b_i=b_j$ for different $i, j$. In
 other words, if there exists $a\in \Fbn$ satisfying $b_i=b_j$, then letting $b=b_i(=b_j)$, one has $2$ solutions
  of unordered pairs  $\{u, v\}=\{p_i, b+p_i\}, \{p_j, b+p_j\}$ (or $4$ solutions of ordered pairs) satisfying
 $$
  a=G(u)+G(v), \quad b=u+v
 $$
 We summarize the conditions on $a$ satisfying $b_i=b_j$ as follows,
\begin{alignat}{3}
   a\in \Fbn \,\, \text{such that}\,\, b_1=b_2 \quad
   &\Leftrightarrow \quad &&\tfrac{1}{\alpha(\alpha a+\beta)}=\tfrac{a+1}{\beta(\alpha a+1)}
   &&\Leftrightarrow \quad a^2+a+\tfrac{\beta^2}{\alpha^2}=0 \quad
   \Leftrightarrow \quad \tr(\tfrac{\beta}{\alpha})=0 \label{eqnb1b2}\\
    a\in \Fbn \,\, \text{such that}\,\, b_1=b_3 \quad
    &\Leftrightarrow \quad &&\tfrac{1}{\alpha(\alpha a+\beta)}=\tfrac{\alpha a+1}{\alpha^2 a}
    &&\Leftrightarrow \quad  a^2+\tfrac{\beta}{\alpha}a+\tfrac{\beta}{\alpha^2}=0 \quad
    \Leftrightarrow \quad \tr(\tfrac{1}{\beta})=0 \label{eqnb1b3}\\
   a\in \Fbn \,\, \text{such that}\,\, b_2=b_3 \quad
   &\Leftrightarrow \quad && \tfrac{a+1}{\beta(\alpha a+1)}=\tfrac{\alpha a+1}{\alpha^2 a}
   &&\Leftrightarrow \quad a^2+\tfrac{1}{\alpha}a+\tfrac{\beta}{\alpha^3}=0 \quad
   \Leftrightarrow \quad \tr(\tfrac{\beta}{\alpha})=0 \label{eqnb2b3}
\end{alignat}

Finally, $\#\mathcal{B}_G^{(a,b)}= 6$ is possible only when $b_1=b_2=b_3$, which implies that
 three quadratic equations in $a$ \eqref{eqnb1b2},\eqref{eqnb1b3},\eqref{eqnb2b3} have a common root.
 By solving $3$ simultaneous quadratic equations, we find that a common solution exists if and only if $\beta^3+\beta^2+1=0$,
 and $a=\beta$ is a unique solution with corresponding $b=1$. In this case, letting $(a,b)=(\beta, 1)$, one has
 $$ h_1^{(a,b)}(X)=X^2+bX+\tfrac{b}{\alpha^2a}=X^2+X+\tfrac{1}{\alpha^2\beta}=X^2+X+\tfrac{1}{\beta^4},$$
 where
 $$
\tr(\tfrac{1}{\beta^4})=\tr(\tfrac{1}{\beta}+\tfrac{1}{\beta^2})=0 \quad (\because \beta^3+\beta^2+1=0)
 $$
 Therefore  $h_1^{(a,b)}(X)=0$ has
  a solution in $\Fbn$, which implies one also has a solution $\{u,v\}$ in $\mathcal{C}_G^{(\beta, 1)}$ by Corollary \ref{abalphamcoro}.

 To summarize, we have the following.
 \begin{prop}
 For a given permutation $G(x)=[0,1,\beta, x]$ on $\Fbn$ of Carlitz rank three, one has $\du_G=8$ if and only if $\beta^3+\beta^2+1=0$.
 \end{prop}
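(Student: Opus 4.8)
The plan is to read off $\du_G$ from the decomposition $\du_G(a,b)=\#\mathcal{A}_G^{(a,b)}+\#\mathcal{B}_G^{(a,b)}+\#\mathcal{C}_G^{(a,b)}$ together with the size ceilings already established, and then isolate the unique configuration that can push the total to $8$. First I would record the three bounds: $\#\mathcal{A}_G^{(a,b)}\le 2$ (Lemma \ref{aclasslemma}-$(a)$ forbids a second pole pair for the same $a$), $\#\mathcal{C}_G^{(a,b)}\le 2$ (Corollary \ref{abalphamcoro}), and $\#\mathcal{B}_G^{(a,b)}\le 6$ (only the three poles can seed a $\cb$ pair, and a given pole $p_i$ contributes only at the single value $b=b_i$). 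An a priori sum of $2+6+2=10$ is never attained, and the mechanism ruling it out is the next step.

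Next I would split on whether $(a,b)$ is one of the three pole-pair values $\{(1,\tfrac{1}{\alpha\beta}),(\tfrac\beta\alpha,\tfrac1\alpha),(\tfrac1\alpha,\tfrac1\beta)\}$. If it is, then $\#\mathcal{A}_G^{(a,b)}=2$ and Proposition \ref{propaclass} already pins $\du_G(a,b)\in\{2,4,6\}$, so such positions never reach $8$. Otherwise $\#\mathcal{A}_G^{(a,b)}=0$, whence $\du_G(a,b)=\#\mathcal{B}_G^{(a,b)}+\#\mathcal{C}_G^{(a,b)}\le 6+2=8$, with equality forced to the single configuration $\#\mathcal{B}_G^{(a,b)}=6$ and $\#\mathcal{C}_G^{(a,b)}=2$. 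This already yields $\du_G\le 8$ unconditionally and reduces the equivalence to deciding exactly when that extremal configuration is attainable.

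The key step, which the computation preceding the statement carries out, is that $\#\mathcal{B}_G^{(a,b)}=6$ forces $b_1=b_2=b_3$; by \eqref{eqnb1b2}--\eqref{eqnb2b3} this means the three quadratics in $a$ share a common root, which happens precisely when $\beta^3+\beta^2+1=0$, and then necessarily $(a,b)=(\beta,1)$. Under this hypothesis I would then verify the two remaining points at $(\beta,1)$: that $(\beta,1)$ is genuinely not a pole-pair position (so $\#\mathcal{A}=0$), which holds because $a=\beta$ coincides with one of $1,\tfrac\beta\alpha,\tfrac1\alpha$ only if $\beta\in\F_4$, i.e. $\beta^2+\beta+1=0$, incompatible with $\beta^3+\beta^2+1=0$; and that $\#\mathcal{C}_G^{(\beta,1)}=2$, which by Corollary \ref{abalphamcoro} amounts to the solvability of $h^{(\beta,1)}_1(X)=X^2+X+\tfrac{1}{\beta^4}$, guaranteed by $\tr(\tfrac{1}{\beta^4})=\tr(\tfrac1\beta+\tfrac1{\beta^2})=0$. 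Hence $\du_G(\beta,1)=0+6+2=8$. Conversely, if $\beta^3+\beta^2+1\ne 0$ then $b_1=b_2=b_3$ is impossible, so $\#\mathcal{B}_G^{(a,b)}\le 4$ whenever $\#\mathcal{A}=0$, giving $\du_G(a,b)\le 4+2=6$ there and $\le 6$ at pole-pair positions by Proposition \ref{propaclass}; thus $\du_G\le 6<8$.

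I expect the only delicate point to be bookkeeping rather than a genuine obstacle: one must ensure the three ceilings cannot simply add to $10$, and the precise reason is that $\#\mathcal{A}=2$ is confined to the three special positions where Proposition \ref{propaclass} independently caps the total at $6$. The one substantive computation — that the common-root condition of the three quadratics is exactly $\beta^3+\beta^2+1=0$ with forced $(a,b)=(\beta,1)$, and the attendant trace identity $\tr(1/\beta^4)=0$ — is already discharged in the paragraph preceding the statement, so assembling these facts into the stated equivalence is essentially all of the remaining work.
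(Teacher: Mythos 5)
Your proposal is correct and follows essentially the same route as the paper: the same decomposition $\du_G(a,b)=\#\mathcal{A}_G^{(a,b)}+\#\mathcal{B}_G^{(a,b)}+\#\mathcal{C}_G^{(a,b)}$, the cap of $6$ at pole-pair positions via Proposition \ref{propaclass}, the observation that $\#\mathcal{B}_G^{(a,b)}=6$ forces $b_1=b_2=b_3$ which holds precisely when $\beta^3+\beta^2+1=0$ with unique $(a,b)=(\beta,1)$, and the trace computation $\tr(\tfrac{1}{\beta^4})=0$ giving $\#\mathcal{C}_G^{(\beta,1)}=2$. Your extra bookkeeping (ruling out the a priori sum $2+6+2$ and checking that $(\beta,1)$ is not a pole-pair position) only makes explicit what the paper leaves implicit.
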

 \begin{remark}
 It should be mentioned that $(a,b)=(\beta, 1)$ is the unique pair having the maximum $\du_G(\beta,1)=8$ when
 $\beta^3+\beta^2+1=0$, and the four pairs $\{u,v\}$ of solutions satisfying $\du$ equation $\beta=G(u)+G(v), \,\, 1=u+v$
 can be computed explicitly using the equation \eqref{eqnb1},\eqref{eqnb2},\eqref{eqnb3} and they are
 $$
  \{u,v \} : \quad \{\beta^2, \beta^2+1\} \in \mathcal{B(1)}, \,\,\,
  \{\beta^2+\beta, \beta^2+\beta+1\} \in \mathcal{B(2)}, \,\,\,
  \{0, 1\} \in \mathcal{B(3)}, \,\,\,
  \{\beta, \beta+1\} \in \mathcal{C},
 $$
 where $\beta^2=\tfrac1\alpha, \beta^2+\beta=\tfrac1\beta \in \pole$. Also note that the set of all $u$ and $v$
 appearing in the above list covers all the elements of $\F_8$.
 \end{remark}

 Among the three basic equations \eqref{eqnb1},\eqref{eqnb2},\eqref{eqnb3} related to the $\cb$,
 two equations (\ref{eqnb1}) and (\ref{eqnb3}) have the following important properties
 which will be used repeatedly when one needs to compute $\tr(\tfrac{1}{ab\alpha^2})$.
 \begin{lemma}\label{tracelemma}
 \hfill
\begin{enumerate}[$(a)$]
\item  Suppose $a=G(\tfrac1\alpha)+G(b+\tfrac1\alpha)$ for some $a,b\in \Fbn$, then one has
      $$
       \frac{\beta}{\alpha a}=\frac{1}{ab\alpha^2}+1
      $$
\item  Suppose $a=G(0)+G(b)$ for some $a,b\in \Fbn$, then one has
      $$
       \frac{1}{\alpha b}=\frac{1}{ab\alpha^2}+1
      $$
\end{enumerate}
 \end{lemma}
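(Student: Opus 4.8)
The plan is to prove both identities by a direct substitution into the explicit description of $G$, namely $G(x)=\tfrac{\beta x+1}{\alpha x+1}$ for $x\notin\pole$ (the Carlitz rank three specialization of Lemma \ref{gulemma}) together with the pole values $G(\tfrac1\alpha)=A_1=0$ and $G(0)=A_3=\tfrac\beta\alpha$. The single algebraic fact that drives every cancellation is that $\alpha=\beta+1$, so that $\alpha+\beta=1$ in characteristic two. Throughout I would adopt the standing assumption already built into the definition of type $\mathcal{B}(i)$, namely that the non-pole argument ($b+\tfrac1\alpha$ in part $(a)$, and $b$ in part $(b)$) does not lie in $\pole$; this is exactly the regime in which $b$ equals the value $b_1$ of \eqref{eqnb1} or $b_3$ of \eqref{eqnb3}.

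For part $(a)$, I would first use $G(\tfrac1\alpha)=0$ to reduce the hypothesis to $a=G(b+\tfrac1\alpha)$. Substituting the fractional form and simplifying the denominator via $\alpha(b+\tfrac1\alpha)+1=\alpha b$ and the numerator via $\beta(b+\tfrac1\alpha)+1=\beta b+\tfrac1\alpha$ (here $\tfrac\beta\alpha+1=\tfrac1\alpha$ since $\alpha+\beta=1$), this yields $a=\tfrac\beta\alpha+\tfrac1{\alpha^2b}$. Rearranging gives $\tfrac1{\alpha^2b}=a+\tfrac\beta\alpha$, so dividing by $a$ produces $\tfrac1{ab\alpha^2}=1+\tfrac{\beta}{\alpha a}$, which is the claimed identity. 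Equivalently and even more quickly, since $b=b_1=\tfrac1{\alpha(\alpha a+\beta)}$ by \eqref{eqnb1}, one has $ab\alpha^2=\tfrac{a\alpha}{\alpha a+\beta}$, whence $\tfrac1{ab\alpha^2}=\tfrac{\alpha a+\beta}{a\alpha}=1+\tfrac{\beta}{\alpha a}$.

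For part $(b)$, I would analogously use $G(0)=\tfrac\beta\alpha$ to rewrite the hypothesis as $a=\tfrac\beta\alpha+G(b)$ and then substitute $G(b)=\tfrac{\beta b+1}{\alpha b+1}$. Placing the two fractions over the common denominator $\alpha(\alpha b+1)$, the numerator collapses to $\beta+\alpha=1$ (the two $\alpha\beta b$ terms cancel), giving $a=\tfrac1{\alpha(\alpha b+1)}$. Hence $\tfrac1a=\alpha^2b+\alpha$, and dividing by $b\alpha^2$ yields $\tfrac1{ab\alpha^2}=1+\tfrac1{\alpha b}$, as desired. As before this can instead be read off directly from $b=b_3=\tfrac{\alpha a+1}{\alpha^2 a}$ in \eqref{eqnb3}, which gives $ab\alpha^2=\alpha a+1$ and therefore $\tfrac1{ab\alpha^2}+1=\tfrac{\alpha a}{\alpha a+1}=\tfrac1{\alpha b}$.

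I do not expect any genuine obstacle: the computation is elementary characteristic-two field arithmetic. The only point requiring care is the hypothesis that the non-pole argument avoids $\pole$, and this is essential rather than cosmetic. For instance, taking $b=\tfrac1\alpha$ in part $(a)$ forces $b+\tfrac1\alpha=0\in\pole$ and hence $a=G(\tfrac1\alpha)+G(0)=\tfrac\beta\alpha$; then the two sides become $\tfrac{\beta}{\alpha a}=1$ and $\tfrac1{ab\alpha^2}+1=\tfrac1\beta+1\neq 1$ (as $\beta\neq 1$), so the identity fails outside the intended regime. Thus the main thing to get right is simply to record that we operate in the type $\mathcal{B}(i)$ setting of \eqref{eqnb1} and \eqref{eqnb3}, after which both identities are immediate.
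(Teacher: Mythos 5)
Your proof is correct and follows essentially the same route as the paper: both use $G(\tfrac1\alpha)=0$, $G(0)=\tfrac\beta\alpha$ and the fractional form $G(x)=\tfrac{\beta x+1}{\alpha x+1}$ off the pole set, reduce via $\alpha+\beta=1$ to $a=\tfrac\beta\alpha+\tfrac1{b\alpha^2}$ (resp. $a=\tfrac1{\alpha(\alpha b+1)}$), and divide; the paper even notes, as you do, that these are just the relations \eqref{eqnb1} and \eqref{eqnb3} restated. Your additional observation that the non-pole hypothesis on $b+\tfrac1\alpha$ (resp. $b$) is essential, with the explicit counterexample $b=\tfrac1\alpha$, is a legitimate point of care that the paper's statement leaves implicit.
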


\begin{proof}
The above two equations are, in fact, the same equations  as in (\ref{eqnb1}) and (\ref{eqnb3}),
 however we will give proofs for clarity. For the first statement,
$$
a=G(\tfrac{1}{\alpha})+G(b+\tfrac{1}{\alpha})=
  0+\frac{\beta\left(b+\tfrac{1}{\alpha}\right)+1}{\alpha\left(b+\tfrac{1}{\alpha}\right)+1}=\frac{\beta
b+\tfrac{1}{\alpha}}{\alpha b}=\frac{\beta}{\alpha}+\frac{1}{b\alpha^2}
$$
By multiplying $\tfrac{1}{a}$ to the above equation, one gets desired expression. For the second statement,
$$
a=G(0)+G(b)=\frac\beta\alpha+\frac{\beta b+1}{\alpha b+1}=\frac{1}{\alpha(\alpha b+1)}=\frac{1}{b\alpha^2+\alpha}
$$
Therefore by multiplying $b\alpha^2+\alpha$ to the above equation,
$$
ab\alpha^2+a\alpha=1
$$
and we get the desired expression dividing both sides of the above equation by $ab\alpha^2$.
\end{proof}

 Now we are ready to state our complete classification of differential uniformity of a permutation $G$
 of Carlitz rank three. We already mentioned that $\du_G=8$ is the maximum value possible and this
 happens exactly when $\beta^3+\beta^2+1=0$, where $\du_G=8$ is obtained as
  $8=\#\mathcal{A}_G^{(a,b)}+\#\mathcal{B}_G^{(a,b)}+\#\mathcal{C}_G^{(a,b)}=0+6+2$ with $(a,b)=(\beta, 1)$.
  It is obvious that, for any nonzero $a$, $b$ and $\beta \notin \F_4$, one
   has  $\#\mathcal{A}_G^{(a,b)}=0$ if $\#\mathcal{B}_G^{(a,b)}\geq 2$
  by Lemma \ref{aclasslemma}-$(c)$.  Therefore, by assuming $\beta \notin \F_4$ and $\beta^3+\beta^2+1\neq 0$,
  if one has $\du_G=6$, then the only possible case is that there exists $(a,b)$ such that
  $\du_G(a,b)=\#\mathcal{A}_G^{(a,b)}+\#\mathcal{B}_G^{(a,b)}+\#\mathcal{C}_G^{(a,b)}=0+4+2$.
  Since one has $\#\mathcal{B}_G^{(a,b)}=4$ if and only if exactly two of $b_1, b_2, b_3$ are the same,
  to find the conditions for which $\#\mathcal{C}_G^{(a,b)}=2$, we only need to check $\tr(\tfrac{1}{ab\alpha^2})$
  at these points $(a,b)$.

\bigskip

When $b=b_1=b_2$, then from the equation (\ref{eqnb1b2}),
\begin{align*}
0 &=a^2+a+\dfrac{\beta^2}{\alpha^2}=1+\dfrac{\alpha}{\beta}\left(\dfrac{\beta}{\alpha a}\right)+\left(\dfrac{\beta}{\alpha a}\right)^2 \\
  &=\left(\dfrac{\beta}{\alpha a}\right)^2 + \dfrac{\alpha}{\beta}\left(\dfrac{\beta}{\alpha a}\right)+ 1
    =\left(\dfrac{1}{ab\alpha^2}+1\right)^2+ \dfrac{\alpha}{\beta}\left(\dfrac{1}{ab\alpha^2}+1\right)+ 1,
\end{align*}
 where the last equality comes from Lemma \ref{tracelemma}-$(a)$. For $b=b_1=b_2$, one can easily
 check the conditions $a\neq (\alpha a+1)b,  \beta(\alpha a+\beta)b$ in Corollary \ref{abalphamcoro}
 are satisfied. Therefore, one has $\#\mathcal{C}_G^{(a,b)}=2$ if and
 only if $\tr(\tfrac{1}{ab\alpha^2})=0$ and this happens when $\tfrac{1}{ab\alpha^2}=z+z^2$ for some $z\in \Fbn$.
 Thus by defining the following polynomial
 \begin{align}
 f_{12}(Z)&= (Z^2+Z+1)^2+ \dfrac{\alpha}{\beta}(Z^2+Z+1)+ 1  \,\,\, \in \Fbn[Z] \notag \\
          &= Z^4+\dfrac1\beta Z^2+\dfrac\alpha\beta Z+\dfrac\alpha\beta, \label{f12}
 \end{align}
 $\du_G(a,b)=6$ at $(a,b)$ (with $a$ satisfying the equation in (\ref{eqnb1b2}) and $b=b_1=b_2$)
 implies that the above quartic polynomial $f_{12}(Z)$ has a root $z\in \Fbn$ satisfying $f_{12}(z)=0$.
 Conversely, if there is $z\in \Fbn$ with $f_{12}(z)=0$, then letting $a=\tfrac{\beta}{\alpha(z^2+z+1)}$
 and $b=\tfrac{1}{\alpha(\alpha a+\beta)}$, one finds $\du_G(a,b)=6$.

 \bigskip

When $b=b_1=b_3$, then from the equation (\ref{eqnb1b3}),
\begin{align*}
0 &=a^2+\dfrac{\beta}{\alpha}a+\dfrac{\beta}{\alpha^2}=1+\left(\dfrac{\beta}{\alpha a}\right)+\dfrac1\beta\left(\dfrac{\beta}{\alpha a}\right)^2 \\
  &=\left(\dfrac{\beta}{\alpha a}\right)^2 + \beta \left(\dfrac{\beta}{\alpha a}\right)+ \beta
    =\left(\dfrac{1}{ab\alpha^2}+1\right)^2+ \beta \left(\dfrac{1}{ab\alpha^2}+1\right)+ \beta,
\end{align*}
 where the last equality again comes from Lemma \ref{tracelemma}-$(a)$.
For $b=b_1=b_3$, one can also easily
 check the conditions $a\neq (\alpha a+1)b,  \beta(\alpha a+\beta)b$ in Corollary \ref{abalphamcoro}
 are satisfied.
 Since one has $\#\mathcal{C}_G^{(a,b)}=2$ if and
 only if $\tr(\tfrac{1}{ab\alpha^2})=0$ and this happens when $\tfrac{1}{ab\alpha^2}=z+z^2$ for some $z\in \Fbn$,
 by defining the following polynomial
 \begin{align}
 f_{13}(Z)&= (Z^2+Z+1)^2+ \beta (Z^2+Z+1)+ \beta \notag \\
          &= Z^4+\alpha Z^2+\beta Z+1, \label{f13}
 \end{align}
 $\du_G(a,b)=6$ at $(a,b)$ (with $a$ satisfying the equation in (\ref{eqnb1b3}) and $b=b_1=b_3$)
 implies that the above quartic polynomial $f_{13}(Z)$ has a root $z\in \Fbn$ satisfying $f_{13}(z)=0$.
 Conversely, if there is $z\in \Fbn$ with $f_{13}(z)=0$, then letting $a=\tfrac{\beta}{\alpha(z^2+z+1)}$
 and $b=\tfrac{1}{\alpha(\alpha a+\beta)}$, one finds $\du_G(a,b)=6$.

 \bigskip

When $b=b_2=b_3$, then by using the equations (\ref{eqnb2}) and (\ref{eqnb3}), one gets
\begin{align}
         \tfrac{\beta b+1}{\alpha\beta b+1}=G(\tfrac1\beta)+G(b+\tfrac1\beta)=a=G(0)+G(b)=\tfrac{1}{\alpha(\alpha b+1)},
         \notag
\end{align}
and solving the above equation in terms of $b$, one has
\begin{align}
0 &=b^2+\dfrac1{\beta}b+\dfrac{1}{\alpha^2}
 =1+\dfrac\alpha\beta\left(\dfrac{1}{\alpha b}\right)+\left(\dfrac{1}{\alpha b}\right)^2 \label{bbeqnb2b3} \\
  &=\left(\dfrac{1}{\alpha b}\right)^2 + \dfrac\alpha\beta\left(\dfrac{1}{\alpha b}\right)+ 1
    =\left(\dfrac{1}{ab\alpha^2}+1\right)^2 + \dfrac\alpha\beta\left(\dfrac{1}{ab\alpha^2}+1\right)+ 1, \label{f23expression}
\end{align}
 where the last equality come from Lemma \ref{tracelemma}-$(b)$. Since one has $\#\mathcal{C}_G^{(a,b)}=2$ if and
 only if $\tr(\tfrac{1}{ab\alpha^2})=0$, we get the same polynomial
 $f_{12}(Z)$ by replacing $\tfrac{1}{ab\alpha^2}$ with $z+z^2$ from the expression (\ref{f23expression}).
Therefore, $\du_G(a,b)=6$ at $(a,b)$ (with $b$ satisfying the equation in (\ref{bbeqnb2b3}) and $b=b_2=b_3$)
 implies that $f_{12}(Z)$ has a root $z\in \Fbn$.
 Conversely, if there is $z\in \Fbn$ with $f_{12}(z)=0$, then letting $b=\tfrac{1}{\alpha(z^2+z+1)}$
 and $a=\tfrac{1}{\alpha(\alpha b+1)}$, one finds $\du_G(a,b)=6$.

\bigskip
\noindent {\bf Proof of Theorem \ref{dutheorem} :} \hfill

\medskip
\noindent Everything is already shown except that $G$ is not APN. That is, we must show that there do exist $a$ and $b$
such that  $\du_G(a,b)\geq 4$. By the equation \eqref{eqnb1}, one has  $\#\mathcal{B}_G^{(a,b)} \geq 2$ for every $(a,
b)=(a, b_1)$ parametrized by $a$.  Also, the parametrized $a=a$ and $b=\tfrac{1}{\alpha(\alpha a+\beta)}$ must satisfy,
by Lemma \ref{tracelemma}-$(a)$, $\tfrac{1}{ab\alpha^2} = 1 + \tfrac{\beta}{\alpha a}$. The exceptional values of $a$
where Corollary \ref{abalphamcoro} is not applicable (i.e., the $a$ satisfying $b=\tfrac{a}{\alpha a+1}$ or
$b=\tfrac{a}{\beta(\alpha a+\beta)}$) is at most two,
  since $\tfrac{1}{\alpha(\alpha a+\beta)}=\tfrac{a}{\beta(\alpha a+\beta)}$ is impossible and
   $\tfrac{1}{\alpha(\alpha a+\beta)}=\tfrac{a}{\alpha a+1}$ is possible only when there is $a\in \Fbn$
  satisfying $a^2+a+\tfrac{1}{\alpha^2}=0$.
    Since half of the trace values  $\tr(1 + \tfrac{\beta}{\alpha a})$ take zero,
    there exist $a,b$ such that $\tr(\tfrac{1}{ab\alpha^2}) = \tr( 1 + \tfrac{\beta}{\alpha a}) = 0$, so that
$\# \mathcal{C}_G^{(a,b)} = 2$ and one concludes  $\du_G(a,b) \geq  4$.  \qed






\section{Boomerang Uniformity of $G$}\label{boomerangsec}

Our goal in this section is to classify all $G(x)=[0,1,\beta, x]$ having the boomerang uniformity $6$, which
 is the least possible case for a permutation of Carlitz rank three. We first state the following
  main theorem.

\begin{theorem}\label{butheorem}
Let $\beta \in \Fbn\setminus \F_4$ with $\alpha=\beta+1$ and let
 $G(x)=[0,1,\beta,x]=((x^{2^n-2}+\beta)^{2^n-2}+1)^{2^n-2}$
 be a permutation of Carlitz rank three.
 Define the following polynomials in $\Fbn[Z]$.
 \begin{align*}
    h_{12}(Z)&=Z^4+Z^3+Z^2+\frac{\beta^4}{\alpha^4}, \\
    h_{13}(Z)&=Z^4+\frac{\beta}{\alpha}Z^3+\frac{\beta}{\alpha^2}Z^2+\frac{\beta^2}{\alpha^4}, \\
    g_1(Z)&=Z^6+Z^5+Z^3+Z+1+\frac{\beta}{\alpha}(Z^4+Z^2), \\
    g_2(Z)&=Z^6+Z^5+Z^4+Z^3+\beta Z^2+\beta Z+1, \\
     \phi(Z) &= Z^2+\frac{\beta}{\alpha}Z+\frac{\beta}{\alpha^2}
 \end{align*}
Then $\bu_G=6$ if and only if none of the above five polynomials has a root in $\Fbn$.
\end{theorem}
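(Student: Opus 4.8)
The plan is to rewrite the $\bu$ equation $G(u)+G(v)=a=G(u+c)+G(v+c)$ as the assertion that the two ordered pairs $(u,v)$ and $(u+c,v+c)$ both lie in $\psi_G^{-1}(a,b)$ for the common second coordinate $b=u+v=(u+c)+(v+c)$; see \eqref{psig}. I would split $\bu_G(a,c)$ according to whether $b=c$ or $b\neq c$. When $b=c$ one has $v=u+c$, so $(u+c,v+c)=(v,u)$ is a solution as soon as $(u,v)$ is, and this block contributes exactly $\du_G(a,c)$; these are the trivial boomerang solutions. When $b\neq c$ the four points $u,v,u+c,v+c$ are pairwise distinct, and Corollary \ref{abalphamcoro} gives $\#\mathcal{C}_G^{(a,b)}\le 2$, so the distinct unordered pairs $\{u,v\}$ and $\{u+c,v+c\}$ cannot both avoid $\pole$. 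Hence every non-trivial boomerang solution forces at least one of the four points into the three-element set $\pole=\{\tfrac1\alpha,\tfrac1\beta,0\}$, which is the reduction that makes the problem finite.

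Next I would run a case analysis over the placement of poles among $u,v,u+c,v+c$, up to the symmetries $(u,v)\leftrightarrow(v,u)$ and $(u,v)\leftrightarrow(u+c,v+c)$. Since $\#\pole=3<4$, at most three points are poles, giving the patterns in which $(u,v)$ and $(u+c,v+c)$ realize the class pairs $\mathcal{A}$--$\mathcal{B}$, $\mathcal{B}$--$\mathcal{B}$, $\mathcal{B}$--$\mathcal{C}$ and $\mathcal{A}$--$\mathcal{C}$. For each pattern, naming the poles fixes $c$ (for instance $c=p_i+p_j$ when $p_i,p_j$ are the two poles differing by $c$) and, via the relations \eqref{eqnb1}--\eqref{eqnb3}, pins $b=b_i(a)$ as an explicit rational function of $a$. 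The one remaining constraint, that the partner pair really belongs to its prescribed class, I would convert using Lemma \ref{tracelemma} and the identity \eqref{heqn} for $H_a$ into the solvability of a quadratic after the reparametrization $\tfrac1{ab\alpha^2}=Z+Z^2$ already used in Section \ref{differentialsec}. Propagating this substitution through the patterns should manufacture the five polynomials: I expect $\phi$ to encode the degenerate coincidence $b_1=b_3$, i.e.\ the pure trace condition $\tr(\tfrac1\beta)=0$ of \eqref{eqnb1b3} (note this already recovers $\beta^3+\beta^2+1=0$), the quartics $h_{12},h_{13}$ to come from the pole pairs $(p_1,p_2)$ and $(p_1,p_3)$ when the partner pair is of $\mathcal{C}$, and the sextics $g_1,g_2$ to come from the two-pole patterns in which the partner pair is generic, the extra coupling there raising the degree from four to six.

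With this dictionary I would prove the equivalence in both directions. For the lower bound $\bu_G\ge 6$ (needed since the claim $\bu_G=6$ presupposes it): when $\du_G\ge 6$ this is immediate by choosing $c$ to equal the value $b$ at which $\du_G$ is attained, so that the trivial block alone supplies $\du_G(a,c)$ solutions (in particular $\beta^3+\beta^2+1=0$ forces $\bu_G\ge\du_G(\beta,1)=8$ by Theorem \ref{dutheorem}); in the residual case $\du_G=4$ I would exhibit an explicit $(a,c)$ whose trivial block contributes four and whose $c$-shift produces one further non-trivial pair. For the characterization I would verify that a value $\bu_G(a,c)\ge 8$ can arise only by adjoining one of the enumerated non-trivial patterns to the baseline, so that $\bu_G\ge 8$ holds exactly when one of $h_{12},h_{13},g_1,g_2,\phi$ has a root in $\Fbn$; conversely, absence of roots bounds every pattern's contribution and gives $\bu_G(a,c)\le 6$ for all $a,c$, whence $\bu_G=6$.

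The real obstacle is the exhaustive, symmetry-reduced enumeration together with the algebra that collapses each surviving pattern onto its specific quartic or sextic. The delicate points are: handling consistently the two orientations of each pair and the several ways of assigning the named poles, so that no configuration is double counted; checking that the excluded degeneracies, namely a denominator $\alpha x+1$ vanishing, a point accidentally equal to a pole, or $b=c$ re-entering, neither create nor kill solutions; and confirming that the generic two-pole patterns genuinely reduce to the sextics $g_1,g_2$ rather than factoring through lower-degree conditions already recorded by $\phi,h_{12},h_{13}$. Pinning down the uniform baseline $\bu_G=6$ in the case $\du_G=4$, i.e.\ guaranteeing the extra non-trivial pair always exists and never collapses onto a trivial one, is the other place that will require genuine care.
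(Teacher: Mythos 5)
Your proposal reconstructs the paper's own strategy essentially step for step: the same split into trivial ($x+y=c$) and shifted solution orbits, the same use of Corollary \ref{abalphamcoro} to rule out two same-sum $\mathcal{C}$-pairs and thereby force a pole into every non-trivial configuration, the same conversion of class constraints via Lemma \ref{tracelemma}, the identity \eqref{heqn} and the substitution $\tfrac{1}{ab\alpha^2}=Z^2+Z$, and your predicted dictionary (with $\phi$ encoding the coincidence $b_1=b_3$, the quartics coming from two $\{\mathcal{B},\mathcal{C}\}$ groups, the sextics from the coupled cases where $c=b_j$) is in fact the one the paper establishes in Section \ref{boomerangsec}. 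So the skeleton is correct; the problem is that what you defer is not only the (admittedly massive) resultant algebra but, in one place, the idea itself.

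The genuine gap is the unconditional lower bound $\bu_G\ge 6$, which the theorem needs exactly when $\du_G=4$ and none of the five polynomials has a root. Your plan to ``exhibit an explicit $(a,c)$'' cannot work as stated: no single pair works uniformly in $\beta$ and $n$, so an existence argument is unavoidable, and this is where the paper works hardest. Its proof takes the $\mathcal{B}(3)$-family $a=G(0)+G(b)$ with $b=\tfrac{\alpha a+1}{\alpha^2 a}$, for which $\mathcal{C}$-solutions $\{u,v\}$ with $u+v=b$ exist whenever $\tr(\tfrac{1}{ab\alpha^2})=\tr(\tfrac{1}{\alpha a+1})=0$; choosing $c\in\{u,v\}$ already gives $\bu_G(a,c)\ge 4$ from the orbit $\{0,b\},\{u,v\}$, and the extra pair requires $\tr(\tfrac{1}{ac\alpha^2})=0$ for one of the two choices of $c$. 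That is secured by the symmetric identity \eqref{boomerang6eq}, $au\alpha^2+\tfrac{1}{au\alpha^2}=\alpha a+1=av\alpha^2+\tfrac{1}{av\alpha^2}$, which gives $\tr(\tfrac{1}{au\alpha^2})+\tr(\tfrac{1}{av\alpha^2})=\tr(ab\alpha^2)$, so it suffices to find $a$ with $\tr(\alpha a+1)=1$ and $\tr(\tfrac{1}{\alpha a+1})=0$ simultaneously; the existence of such $a$ is then extracted from a Kloosterman-sum bound, not from an explicit construction. None of these three ingredients (the parametrized family, the symmetric identity, the Kloosterman count) appears in your proposal. Two smaller defects: the pattern $\mathcal{A}$--$\mathcal{B}$ in your enumeration is impossible for $\beta\notin\F_4$ by Lemma \ref{aclasslemma}-$(c)$; and a complete execution must also dispose of the additional polynomials that arise ($h_{23}$, $f_1$, $f_2$, shown in the paper to be birationally equivalent to $h_{12}$) and of the $\mathcal{B}$--$\mathcal{B}$ coincidences $b_1=b_2$ and $b_2=b_3$ (shown to force $\beta^3+\beta^2+1=0$ or to admit no common solution) --- you flag this risk, but it is a substantial portion of the proof rather than a formality.
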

\begin{remark}
The proof of the above theorem will be given through a series of steps by eliminating possible cases
 of $(a,c)\in \Fbn\times\Fbn$ where $\bu_G(a,c)\geq 8$ happen. In fact, if one of the above polynomials
 has a root in $\Fbn$, then the root is in the list of $a$ or $c$ with the property $\bu_G(a,c)\geq 8$.
 That is why we stick to the expression of $\phi(Z)$ even if we may replace it by simpler form $Z^2+Z+\tfrac1\beta$
 or even by the condition on $\tr(\tfrac1\beta)$.
\end{remark}
\begin{remark}
As in the case of differential uniformity,
 our method determines,  in polynomial time, whether a given $G$ has boomerang uniformity $6$.
 That is, letting $g(Z)$ be the product of all the above five polynomials,
  $\gcd(g(Z), Z^{2^n}-Z)$ is computed  using repeated squaring
  $Z^{2^i} \pmod{g}$ in $O(n^3)$ bit operations.
\end{remark}
\begin{remark}
To the authors' knowledge,  on $\Fbn$ with even $n$, nobody so far found a permutation which is a modification of the
inverse function having the boomerang uniformity 4. Therefore our result on boomerang uniformity 6 is the lowest in
this class of permutations.
\end{remark}

To give a proof of Theorem \ref{butheorem}, we will repeatedly use the following definition of boomerang uniformity,
which is in fact obtained in \cite{LQSL19} (Theorem 2.3)
 as an equivalent statement of the original boomerang uniformity. That is, $\bu_G(a,c)$ is the number
 of $(a,c)\in \Fbn\times \Fbn$ satisfying the following $\bu$ equation,
 \begin{align}\label{buequation}
 G(x)+G(y)=a=G(x+c)+G(y+c)
 \end{align}
An alternative formula for $\bu_G(a,c)$ is also obtained in \cite{BC18} (Prop. 3), and the boomerang formulaes in
\cite{BC18} and \cite{LQSL19} imply  that
  any solution $\{x,y\}=\{u,v\}$ of the above $\bu$ equation falls in one of the two cases
 depending on whether $u+v\neq c$ or $u+v=c$.

\begin{enumerate}
\item If $u+v\neq c$, then $\{u,v\}$ generates $2$ solutions $\{u,v\},\{u+c,v+c\}$
           of unordered pairs for the $\bu$ equation.
             That is, $(x,y)=(u,v), (v,u), (u+c,v+c), (v+c, u+c)$ are $4$ solutions of the $\bu$ equation.
 \item If $u+v= c$, then $\{u,v\}$ in fact  is a solution of the $\du$ equation $G(x)+G(y)=a, \,\, x+y=c$ which gives only
                  $2$ solutions of ordered pair $(u,v), (v,u)$ for the $\bu$ equation
              $G(x)+G(y)=a=G(x+c)+G(y+c)$, because the last equation $a=G(x+c)+G(y+c)$ becomes redundant  when $(x,y)=(u,v)$.
      However when $\du_G(a,c)\geq 4$ and if $\{u,v\}, \{u',v'\}$ are two (unordered pairs of) solutions of the $\du$
      equation $G(x)+G(y)=a, \,\, x+y=c$, then we can also say that $\{u,v\}, \{u',v'\}$ are two solutions of
         $G(x)+G(y)=a=G(x+c)+G(y+c)$ arising from $\du_G(a,c)\geq 4$.
\end{enumerate}

\noindent Therefore, when $\bu_G(a,c)\geq 4$, we may say $\{u,v\}, \{u',v'\}$ are two solutions of the $\bu$ equation
         $G(x)+G(y)=a=G(x+c)+G(y+c)$, and we may assume $\{u',v'\}=\{u+c,v+c\}$ in the first case where $u+v\neq c$.

         Also, we can classify the two solutions $\{u,v\}, \{u',v'\}$ depending on whether
         they are of $\ca,$ $\cb$ or $\cc$. In other words,  when we say the $\bu$ equation
              $G(u)+G(v)=a=G(u+c)+G(v+c)$ of type $\{\mathcal{B}, \mathcal{C}\}$, it means that
                $\{u,v\} \in\mathcal{B}$ and  $\{u+c,v+c\} \in\mathcal{C}$. Since $\{u,v\}\in\mathcal{B}$
                means only one of $u,v$ are in $\pole$, when we say the $\bu$ equation is of type $\{\mathcal{B}(i),
                \mathcal{C}\}$, then it means that $\{u,v\} \in \mathcal{B}$ with $u=p_i$  and $\{u+c,v+c\} \in\mathcal{C}$.
                 Similarly, when we say the $\bu$ equation
              $G(u)+G(v)=a=G(u+c)+G(v+c)$ of type $\{\mathcal{A}(i,j), \mathcal{C}\}$, it means that
                $\{u,v\} \in\mathcal{A}$ with $\{u,v\}=\{p_i, p_j\}$ and  $\{u+c,v+c\} \in\mathcal{C}$.

\medskip
 Similarly as in the analysis of the differential uniformity, we will classify possible solutions
  $\{u,v\}$ of the $\bu$ equation in \eqref{buequation} depending on whether
  $\{u,v\}$ is of $\ca, \cb$ or $\cc$. When $\bu_G(a,c)\geq 8$ at some point $(a,c)$, then there are
  $4$ different pairs of solutions $\{u_i,v_i\}$ $(1\leq i\leq 4)$ and we will give a detailed analysis
  for each possible case. Fortunately it will turn out that there are only three possible
  cases
$$
\{\mathcal{A},\mathcal{C}\}+\{\mathcal{B},\mathcal{C}\},\quad
 \{\mathcal{B},\mathcal{C}\}+\{\mathcal{B},\mathcal{C}\}\quad \mbox{and}\,\,\,
 \{\mathcal{B},\mathcal{B}\}+\{\mathcal{B},\mathcal{C}\},
$$
whose meaning will be explained soon.

\subsection{$\{\mathcal{A},\mathcal{C}\}+\{\mathcal{B},\mathcal{C}\}$}\label{sectionacbc}

\begin{lemma}\label{acbcclasslemma}
 Suppose $\beta\not\in \F_4$ and suppose  that one has
 the following relations satisfied by $4$ different pairs of $\{u_1, v_1\},\{u_2, v_2\},\{x_1, y_1\},\{x_2, y_2\},$
 \begin{align*}
  G(u_1)+G(v_1) = a &=G(u_2)+G(v_2), \quad u_1+v_1 = u_2+v_2 \\
  G(x_1)+G(y_1) = a &=G(x_2)+G(y_2), \quad x_1+y_1 = x_2+y_2
 \end{align*}
 such that $\{u_1,v_1\}$ is of $\ca$. Then $a\in \{1, \tfrac{\beta}{\alpha}, \tfrac{1}{\alpha}\}$ and one has the followings.
 \begin{enumerate}[$(a)$]
 \item $\{u_2,v_2\}$ is of $\cc$. That is, $u_2,v_2 \not\in \pole$.
 \item Among two pairs $\{x_1,y_1\}, \{x_2,y_2\}$, one is of $\cb$ $($say $\{x_1,y_1\}$$)$
   such that $\{u_1,v_1,x_1,y_1\}=\{\tfrac1\alpha,\tfrac1\beta,0,\tfrac{\beta}{\alpha^2}\}$
    and the other is of $\cc$.
 \end{enumerate}
\end{lemma}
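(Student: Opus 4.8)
The plan is to read off the type of every solution pair by combining the three prior results about Class-$\mathcal A$ configurations, namely Lemma \ref{aclasslemma}, Corollary \ref{abalphamcoro}, and Proposition \ref{propaclass}, without ever invoking the boomerang switch parameter: the hypotheses already present the data as two $\du$-pairs sharing the value $a$. First I would pin down $a$. Since $\{u_1,v_1\}$ is of $\ca$, both $u_1,v_1$ lie in $\pole=\{\tfrac1\alpha,\tfrac1\beta,0\}$, so $a=G(u_1)+G(v_1)$ must be one of the three values coming from a pair of distinct poles; by the enumeration in \eqref{Aclass12} and the two lines following it this gives $a\in\{1,\tfrac\beta\alpha,\tfrac1\alpha\}$. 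Moreover these three values are pairwise distinct (using $\beta\neq 0,1$), so each value of $a$ is realized by a \emph{unique} unordered pole pair. This uniqueness will be used twice.

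For part $(a)$, I note that $\{u_1,v_1\}$ and $\{u_2,v_2\}$ are two distinct pairs with $G(u_1)+G(v_1)=a=G(u_2)+G(v_2)$ and the \emph{same} sum $u_1+v_1=u_2+v_2$. Lemma \ref{aclasslemma}-$(a)$ rules out $\{u_2,v_2\}\in\mathcal A$, while Lemma \ref{aclasslemma}-$(c)$ combined with the standing hypothesis $\beta\notin\F_4$ rules out $\{u_2,v_2\}\in\mathcal B$ (equal sums would force $\beta\in\F_4$). Hence $\{u_2,v_2\}\in\mathcal C$, i.e.\ $u_2,v_2\notin\pole$, which is exactly $(a)$.

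For part $(b)$, write $c'=x_1+y_1=x_2+y_2$, so that $\{x_1,y_1\}$ and $\{x_2,y_2\}$ are two distinct solutions of the $\du$ equation at the single point $(a,c')$. I would first exclude $\ca$ for both: any $\ca$ solution with $G$-value $a$ is the unique pole pair giving $a$, hence equals $\{u_1,v_1\}$, contradicting that the four pairs are distinct. By Corollary \ref{abalphamcoro} there is at most one unordered $\cc$ pair at $(a,c')$, so the two pairs cannot both be of $\cc$; thus at least one is of $\cb$. To forbid both being of $\cb$, I would show that at each of the three admissible values $a\in\{1,\tfrac\beta\alpha,\tfrac1\alpha\}$ the parameters $b_1,b_2,b_3$ of \eqref{eqnb1}--\eqref{eqnb3} are pairwise distinct, so that the coincidences $c'=b_i=b_j$ needed for a second $\cb$ solution are impossible: substituting each value of $a$ into the conditions \eqref{eqnb1b2}, \eqref{eqnb1b3}, \eqref{eqnb2b3} leaves a manifestly nonzero quantity (e.g.\ at $a=1$ one gets $b_1=\tfrac1\alpha$, $b_2=0$, $b_3=\tfrac\beta{\alpha^2}$, visibly distinct). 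Consequently exactly one of $\{x_1,y_1\},\{x_2,y_2\}$ is of $\cb$ and the other of $\cc$. Labelling the $\cb$ pair as $\{x_1,y_1\}$ with pole component $x_1\in\pole$, Lemma \ref{aclasslemma}-$(b)$ applied to the $\ca$ pair $\{u_1,v_1\}$ and the $\cb$ pair $\{x_1,y_1\}$ (which share $a$) yields $\{u_1,v_1,x_1\}=\pole$ and $y_1=\tfrac\beta{\alpha^2}$, i.e.\ $\{u_1,v_1,x_1,y_1\}=\{\tfrac1\alpha,\tfrac1\beta,0,\tfrac\beta{\alpha^2}\}$, completing $(b)$.

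The genuinely routine part is the three substitutions verifying the pairwise distinctness of $b_1,b_2,b_3$. The conceptual crux — the step I expect to be the main obstacle — is the double use of the uniqueness of the pole pair realizing a given $a$ (once to force $\{u_2,v_2\}\in\mathcal C$ and once to exclude $\ca$ from the second group), together with the realization that it is precisely the absence of any coincidence $b_i=b_j$ at these three values of $a$ that rules out a $\{\cb,\cb\}$ second group and leaves only $\{\cb,\cc\}$.
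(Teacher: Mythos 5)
Your proof is correct, and its skeleton coincides with the paper's: pinning down $a\in\{1,\tfrac\beta\alpha,\tfrac1\alpha\}$, proving part $(a)$ via Lemma \ref{aclasslemma}-$(a),(c)$ together with $\beta\notin\F_4$, and identifying the $\cb$ pair via Lemma \ref{aclasslemma}-$(b)$ are exactly the paper's steps (your exclusion of $\ca$ for $\{x_i,y_i\}$ via the injectivity of the map from unordered pole pairs to values of $a$ is just Lemma \ref{aclasslemma}-$(a)$ re-proved inline). The one place where you genuinely diverge is the exclusion of a \emph{second} $\cb$ pair among $\{x_1,y_1\},\{x_2,y_2\}$: you rule out the required coincidence $c'=b_i=b_j$ by substituting the three admissible values of $a$ into the conditions \eqref{eqnb1b2}--\eqref{eqnb2b3}; your computations check out (the degenerate value $b_2=0$ at $a=1$ merely signals that no $\mathcal{B}(2)$ pair exists there, which only helps you). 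The paper instead gets this exclusion for free from the uniqueness clause of Lemma \ref{aclasslemma}-$(b)$: any $\cb$ pair sharing the value $a$ with the $\ca$ pair $\{u_1,v_1\}$ must equal $\{p_k,\tfrac\beta{\alpha^2}\}$, where $p_k$ is the third pole, so two distinct such pairs cannot coexist; since you already invoke that clause for the identification step, citing it once more would have spared you the three substitutions. On the other side of the ledger, your write-up is more complete than the paper's at one point: the paper's two-line proof of $(b)$ never says why $\{x_1,y_1\}$ and $\{x_2,y_2\}$ cannot both be of $\cc$, whereas you explicitly invoke Corollary \ref{abalphamcoro} (at most one unordered $\cc$ pair per fixed $(a,b)$), which is precisely the missing justification.
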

\begin{proof} $a\in \{1, \tfrac{\beta}{\alpha}, \tfrac{1}{\alpha}\}$ is obvious since $a=G(p_i)+G(p_j)$
with $\{p_1,p_2,p_3\}=\{\tfrac{1}{\alpha}, \tfrac{1}{\beta},0\}$.

\noindent $(a)$. This comes from Lemma \ref{aclasslemma}-$(a), (c)$.

\noindent $(b)$. This comes from Lemma \ref{aclasslemma}-$(a), (b)$, where $\{x_1,y_1\} \in \mathcal{B}$
 is uniquely determined as $0=G(u_1)+G(v_1)+G(x_1)+G(y_1)=G(p_1)+G(p_2)+G(p_3)+G(\tfrac{\beta}{\alpha^2})$.
\end{proof}

\noindent From Lemma \ref{aclasslemma} and Lemma \ref{acbcclasslemma}, we conclude that, assuming $\beta \not\in \F_4$,
if one has $\bu_G(a,c)\geq 8$ such that the $\bu$ equation $G(x)+G(y)=a=G(x+c)+G(y+c)$ has a solution $\{p_i,p_j\}\in
\mathcal{A}$, then we have a combination two $\bu$ equations
 of type $\{\mathcal{A},\mathcal{C}\}+\{\mathcal{B},\mathcal{C}\}$. That is, one has
 \begin{align*}
 G(p_i)+G(p_j)=a &=G(u)+G(v), \quad p_i+p_j = u+v \quad \mbox{with}\,\,\, \{p_i,p_j\}\in \mathcal{A},\,\, \{u,v\}\in \mathcal{C}\\
 G(p_k)+G(\tfrac{\beta}{\alpha^2})= a &=G(u')+G(v'), \quad p_k+\tfrac{\beta}{\alpha^2} = u'+v'
                              \quad \mbox{with}\,\,\, \{p_k,\tfrac{\beta}{\alpha^2}\}\in \mathcal{B},\,\, \{u',v'\}\in \mathcal{C}
 \end{align*}
 where $\{p_i,p_j,p_k\}=\pole$ such that one of the following holds;

 \begin{enumerate}[(i)]
 \item $\{u+c,v+c\}=\{p_i, p_j\}$ and $\{u'+c,v'+c\}=\{p_k, \tfrac{\beta}{\alpha^2}\} :$ In this case,
    one has
    $$c \in \{p_i+u, p_i+v\} \cap \{p_k+u',p_k+v'\}$$
     because $\{p_i+u, p_i+v\}=\{c, p_i+p_j+c\}$ and $\{p_k+u',p_k+v'\}=\{c, p_k+ \tfrac{\beta}{\alpha^2}+c\}$.
 \item $\{u+c,v+c\}=\{p_i, p_j\}$ with $c=p_k+\tfrac{\beta}{\alpha^2} :$
  In this case, one has
   $$ p_k+\tfrac{\beta}{\alpha^2}\in \{p_i+u, p_i+v\}$$
   because $\{p_i+u, p_i+v\}=\{ c, p_i+p_j+c\}$ with $c=p_k+\tfrac{\beta}{\alpha^2}$.
 \item $\{u'+c,v'+c\}=\{p_k, \tfrac{\beta}{\alpha^2}\}$ with  $c=p_i+p_j :$
  In this case, one has
  $$p_i+p_j \in \{p_k+u', p_k+v'\}$$
   because $\{p_k+u', p_k+v'\}=\{c,
  p_k+\tfrac{\beta}{\alpha^2}+c\}$ with $c=p_i+p_j$.
 \end{enumerate}

\noindent To summarize, $\bu_G(a,c)\geq 8$ for the case the case {\rm (i)} comes from $\du_G(a,b)\geq 4$ with $b=u+v$
and
 $\du_G(a,b')\geq 4$ with $b'=u'+v'$, and the case {\rm (ii)} comes from $\du_G(a,b)\geq 4$ with $b=u+v$ and
 $\du_G(a,c)\geq 4$ with $c=u'+v'$, and the case {\rm (iii)} comes from $\du_G(a,c)\geq 4$ with $c=u+v$ and
 $\du_G(a,b')\geq 4$ with $b'=u'+v'$. Also note that one always have $u+v\neq u'+v'$ because $p_i+p_j\neq
 p_k+\tfrac\beta{\alpha^2}$ when $\beta\not\in \F_4$.

\bigskip

Now will discuss three possible combinations of $\{\mathcal{A},\mathcal{C}\}+\{\mathcal{B},\mathcal{C}\}$.

\bigskip

\noindent
 $ \begin{aligned}
\{\mathcal{A(1,2)},&\mathcal{C}\}+\{\mathcal{B(3)},\mathcal{C}\}:
                 (a,b,b')=(1,\tfrac{1}{\alpha\beta}, \tfrac{\beta}{\alpha^2}) \\
  & \{\mathcal{A(1,2)},\mathcal{C}\} :
G(\tfrac{1}{\alpha})+G(\tfrac{1}{\beta})=1=G(u)+G(v),
  \,\,\, u+v=\tfrac{1}{\alpha\beta} \\
  &\{\mathcal{B(3)},\mathcal{C}\} :
  G(0)+G(\tfrac{\beta}{\alpha^2})=1=G(u')+G(v'),
  \,\,\, u'+v'=\tfrac{\beta}{\alpha^2}
\end{aligned}
$

\medskip
\noindent where one of the following three cases holds;

\begin{enumerate}[(i)]
\item $\{u+c,v+c\}=\{\tfrac1\alpha, \tfrac1\beta\}$ and $\{u'+c,v'+c\}=\{0, \tfrac{\beta}{\alpha^2}\}$ \\
    $\,\,\,\Rightarrow\,\,\, c \in \{\tfrac1\alpha+u, \tfrac1\alpha+v\} \cap \{u',v'\}$
    $\,\,\,\Rightarrow\,\,\, \res(h_1^{(1,\frac{1}{\alpha\beta})},h_3^{(1,\frac{\beta}{\alpha^2})})=0$
 \item $\{u+c,v+c\}=\{\tfrac1\alpha, \tfrac1\beta\}$ with $c=\tfrac{\beta}{\alpha^2}$
   $\,\,\,\Rightarrow\,\,\,  \tfrac{\beta}{\alpha^2}\in \{\tfrac1\alpha +u, \tfrac1\alpha +v\}$
    $\,\,\,\Rightarrow\,\,\, h_1^{(1,\frac{1}{\alpha\beta})}(\tfrac{\beta}{\alpha^2})=0 $
 \item $\{u'+c,v'+c\}=\{0, \tfrac{\beta}{\alpha^2}\}$ with  $c=\tfrac{1}{\alpha\beta}$
 $\,\,\,\Rightarrow\,\,\, \tfrac{1}{\alpha\beta} \in \{u', v'\}$
 $\,\,\,\Rightarrow\,\,\, h_3^{(1,\frac{\beta}{\alpha^2})}(\tfrac{1}{\alpha\beta})=0$
\end{enumerate}
where $h_1$ and $h_3$ defined in Section \ref{sectioncarlitz3} can be written as
 \begin{align*}
 h_1^{(1,\frac{1}{\alpha\beta})}(X)=\hh^{(1,\frac{1}{\alpha\beta})}(X+\tfrac{1}{\alpha})
                     &=(X+\tfrac1\alpha+u)(X+\tfrac1\alpha+v)=X^2+\tfrac{1}{\alpha\beta}X+\tfrac{1}{\alpha^3\beta}, \\
h_3^{(1,\frac{\beta}{\alpha^2})}(X)=\hh^{(1,\frac{\beta}{\alpha^2})}(X)
           &=(X+u')(X+v')=X^2+\tfrac{\beta}{\alpha^2}X+\tfrac{1}{\alpha^4},
 \end{align*}
 and $\res$ means the resultant of two polynomials. When the given two polynomials $f(X)=f_0X^2+f_1X+f_2$ and
$g(X)=g_0X^2+g_1X+g_2$ are quadratic, the resultant $\res(f,g)$ is defined as
\begin{align}\label{resultant}
\res(f,g) =
\begin{vmatrix}
f_0 & 0 & g_0 & 0 \\
f_1 & f_0 & g_1 & g_0 \\
f_2 & f_1  & g_2 & g_1  \\
0 &  f_2 & 0 & g_2
\end{vmatrix}
\end{align}
the determinant of the $4$ by $4$ matrix.
 Direct computations using the above two quadratic polynomials show
 \begin{align*}
 \res(h_1^{(1,\frac{1}{\alpha\beta})},h_3^{(1,\frac{\beta}{\alpha^2})})&=\tfrac{1}{\alpha^8\beta}(\beta^3+\beta^2+1), \\
 h_1^{(1,\frac{1}{\alpha\beta})}(\tfrac{\beta}{\alpha^2})&=\tfrac{1}{\alpha^4\beta}(\beta^3+\beta^2+1),\\
 h_3^{(1,\frac{\beta}{\alpha^2})}(\tfrac{1}{\alpha\beta})&= \tfrac{1}{\alpha^4\beta^2}(\beta^3+\beta^2+1)
 \end{align*}
Therefore, if one has $\bu_G(a,c)\geq 8$ with the combination of
$\{\mathcal{A(1,2)},\mathcal{C}\}+\{\mathcal{B(3)},\mathcal{C}\}$, then one has $\beta^3+\beta^2+1=0$.

\bigskip

\noindent
 $ \begin{aligned}
\{\mathcal{A(1,3)},&\mathcal{C}\}+\{\mathcal{B(2)},\mathcal{C}\}:
                 (a,b,b')=(\tfrac\beta\alpha,\tfrac{1}{\alpha}, \tfrac{1}{\alpha^2\beta}) \\
  & \{\mathcal{A(1,3)},\mathcal{C}\} :
G(\tfrac{1}{\alpha})+G(0)=\tfrac\beta\alpha=G(u)+G(v),
  \,\,\, u+v=\tfrac{1}{\alpha} \\
  &\{\mathcal{B(2)},\mathcal{C}\} :
  G(\tfrac1\beta)+G(\tfrac{\beta}{\alpha^2})=\tfrac\beta\alpha=G(u')+G(v'),
  \,\,\, u'+v'=\tfrac{1}{\alpha^2\beta}
\end{aligned}
$

\medskip
\noindent where one of the following three cases holds;

\begin{enumerate}[(i)]
\item $\{u+c,v+c\}=\{\tfrac1\alpha, 0\}$ and $\{u'+c,v'+c\}=\{\tfrac1\beta, \tfrac{\beta}{\alpha^2}\}$ \\
    $\,\,\,\Rightarrow\,\,\, c \in \{\tfrac1\alpha+u, \tfrac1\alpha+v\} \cap \{\tfrac1\beta+u',\tfrac1\beta+v'\}$
    $\,\,\,\Rightarrow\,\,\, \res(h_1^{(\frac{\beta}{\alpha},\frac{1}{\alpha})},h_2^{(\frac{\beta}{\alpha},\frac{1}{\alpha^2\beta})})=0$
 \item $\{u+c,v+c\}=\{\tfrac1\alpha, 0\}$ with $c=\tfrac{1}{\alpha^2\beta}$
   $\,\,\,\Rightarrow\,\,\,  \tfrac{1}{\alpha^2\beta}\in \{\tfrac1\alpha +u, \tfrac1\alpha +v\}$
    $\,\,\,\Rightarrow\,\,\, h_1^{(\frac{\beta}{\alpha},\frac{1}{\alpha})}(\tfrac{1}{\alpha^2\beta})=0 $
 \item $\{u'+c,v'+c\}=\{\tfrac1\beta, \tfrac{\beta}{\alpha^2}\}$ with  $c=\tfrac{1}{\alpha}$
 $\,\,\,\Rightarrow\,\,\, \tfrac{1}{\alpha} \in \{\tfrac1\beta+u', \tfrac1\beta+v'\}$
 $\,\,\,\Rightarrow\,\,\, h_2^{(\frac{\beta}{\alpha},\frac{1}{\alpha^2\beta})}(\tfrac{1}{\alpha})=0$
\end{enumerate}
where
 \begin{align*}
h_1^{(\frac{\beta}{\alpha},\frac{1}{\alpha})}(X)=\hh^{(\frac{\beta}{\alpha},\frac{1}{\alpha})}(X+\tfrac{1}{\alpha})
                     &=(X+\tfrac1\alpha+u)(X+\tfrac1\alpha+v)=X^2+\tfrac{1}{\alpha}X+\tfrac{1}{\alpha^2\beta}, \\
h_2^{(\frac{\beta}{\alpha},\frac{1}{\alpha^2\beta})}(X)=\hh^{(\frac{\beta}{\alpha},\frac{1}{\alpha^2\beta})}(X+\tfrac1\beta)
           &=(X+\tfrac1\beta+u')(X+\tfrac1\beta+v')=X^2+\tfrac{1}{\alpha^2\beta}X+\tfrac{1}{\alpha^2\beta^2}
 \end{align*}
Direct computations using the above two polynomials show
 \begin{align*}
 \res(h_1^{(\frac{\beta}{\alpha},\frac{1}{\alpha})},h_2^{(\frac{\beta}{\alpha},\frac{1}{\alpha^2\beta})})&=\tfrac{1}{\alpha^6\beta^4}(\beta^3+\beta^2+1), \\
 h_1^{(\frac{\beta}{\alpha},\frac{1}{\alpha})}(\tfrac{1}{\alpha^2\beta})&=\tfrac{1}{\alpha^4\beta^2}(\beta^3+\beta^2+1),\\
 h_2^{(\frac{\beta}{\alpha},\frac{1}{\alpha^2\beta})}(\tfrac{1}{\alpha})&= \tfrac{1}{\alpha^3\beta^2}(\beta^3+\beta^2+1)
 \end{align*}
Therefore, if one has $\bu_G(a,c)\geq 8$ with the combination of
$\{\mathcal{A(1,3)},\mathcal{C}\}+\{\mathcal{B(2)},\mathcal{C}\}$, then one has $\beta^3+\beta^2+1=0$.

\bigskip

\noindent
 $ \begin{aligned}
\{\mathcal{A(2,3)},&\mathcal{C}\}+\{\mathcal{B(1)},\mathcal{C}\}:
                 (a,b,b')=(\tfrac1\alpha,\tfrac{1}{\beta}, \tfrac{1}{\alpha^2}) \\
  & \{\mathcal{A(2,3)},\mathcal{C}\} :
G(\tfrac{1}{\beta})+G(0)=\tfrac1\alpha=G(u)+G(v),
  \,\,\, u+v=\tfrac{1}{\beta} \\
  &\{\mathcal{B(1)},\mathcal{C}\} :
  G(\tfrac1\alpha)+G(\tfrac{\beta}{\alpha^2})=\tfrac1\alpha=G(u')+G(v'),
  \,\,\, u'+v'=\tfrac{1}{\alpha^2}
\end{aligned}
$

\medskip
\noindent where one of the following three cases holds;

\begin{enumerate}[(i)]
\item $\{u+c,v+c\}=\{\tfrac1\beta, 0\}$ and $\{u'+c,v'+c\}=\{\tfrac1\alpha, \tfrac{\beta}{\alpha^2}\}$ \\
    $\,\,\,\Rightarrow\,\,\, c \in \{\tfrac1\beta+u, \tfrac1\beta+v\} \cap \{\tfrac1\alpha+u',\tfrac1\alpha+v'\}$
    $\,\,\,\Rightarrow\,\,\, \res(h_2^{(\frac1\alpha,\frac{1}{\beta})},h_1^{(\frac1\alpha,\frac{1}{\alpha^2})})=0$
 \item $\{u+c,v+c\}=\{\tfrac1\beta, 0\}$ with $c=\tfrac{1}{\alpha^2}$
   $\,\,\,\Rightarrow\,\,\,  \tfrac{1}{\alpha^2}\in \{\tfrac1\beta +u, \tfrac1\beta +v\}$
    $\,\,\,\Rightarrow\,\,\, h_2^{(\frac1\alpha,\frac{1}{\beta})}(\tfrac{1}{\alpha^2})=0 $
 \item $\{u'+c,v'+c\}=\{\tfrac1\alpha, \tfrac{\beta}{\alpha^2}\}$ with  $c=\tfrac{1}{\beta}$
 $\,\,\,\Rightarrow\,\,\, \tfrac{1}{\beta} \in \{\tfrac1\alpha+u',\tfrac1\alpha+ v'\}$
 $\,\,\,\Rightarrow\,\,\, h_1^{(\frac1\alpha,\frac{1}{\alpha^2})}(\tfrac{1}{\beta})=0$
\end{enumerate}
where
 \begin{align*}
 h_2^{(\frac1\alpha,\frac{1}{\beta})}(X)=\hh^{(\frac1\alpha,\frac{1}{\beta})}(X+\tfrac{1}{\beta})
                     &=(X+\tfrac1\beta+u)(X+\tfrac1\beta+v)=X^2+\tfrac{1}{\beta}X+\tfrac{1}{\alpha^2}, \\
h_1^{(\frac1\alpha,\frac{1}{\alpha^2})}(X)=\hh^{(\frac1\alpha,\frac{1}{\alpha^2})}(X+\tfrac1\alpha)
           &=(X+\tfrac1\alpha+u')(X+\tfrac1\alpha+v')=X^2+\tfrac{1}{\alpha^2}X+\tfrac{1}{\alpha^3}
 \end{align*}
Direct computations using the above two polynomials show
 \begin{align*}
 \res(h_2^{(\frac1\alpha,\frac{1}{\beta})},h_1^{(\frac1\alpha,\frac{1}{\alpha^2})})&=\tfrac{1}{\alpha^5\beta^2}(\beta^3+\beta^2+1), \\
h_2^{(\frac1\alpha,\frac{1}{\beta})}(\tfrac{1}{\alpha^2})&=\tfrac{1}{\alpha^4\beta}(\beta^3+\beta^2+1),\\
 h_1^{(\frac1\alpha,\frac{1}{\alpha^2})}(\tfrac{1}{\beta})&= \tfrac{1}{\alpha^3\beta^2}(\beta^3+\beta^2+1)
 \end{align*}
Therefore, if one has $\bu_G(a,c)\geq 8$ with the combination of
$\{\mathcal{A(2,3)},\mathcal{C}\}+\{\mathcal{B(1)},\mathcal{C}\}$, then one has $\beta^3+\beta^2+1=0$.

\bigskip
\noindent From the above observations regarding $\{\mathcal{A},\mathcal{C}\}+\{\mathcal{B},\mathcal{C}\}$, one obtains
 the following result.

\begin{prop}
Let $\beta \not\in \F_4$ and let $G(x)=((x^{2^n-2}+\beta)^{2^n-2}+1)^{2^n-2}=[0,1,\beta,x]$ be a permutation with
Carlitz rank three. Suppose that $\bu_G(a,c)\geq 8$ for some $(a,c)\in \Fbn\times\Fbn$ such that the $\bu$ equation
 $G(x)+G(y)=a=G(x+c)+G(y+c)$ has a solution $\{x,y\}=\{p_i,p_j\} \in \mathcal{A}$. Then one has $\beta^3+\beta^2+1=0$.
\end{prop}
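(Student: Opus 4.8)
The plan is to assemble into a single argument the case analysis already carried out in this subsection. First I would invoke Lemma~\ref{acbcclasslemma}: once we assume $\bu_G(a,c)\ge 8$ and that the $\bu$ equation admits a solution $\{p_i,p_j\}\in\mathcal{A}$, the lemma (together with Lemma~\ref{aclasslemma}) forces the four solution pairs to split as $\{\mathcal{A},\mathcal{C}\}+\{\mathcal{B},\mathcal{C}\}$, pins $a$ down to one of $1,\tfrac\beta\alpha,\tfrac1\alpha$, and determines the $\mathcal{B}$-pair uniquely as $\{p_k,\tfrac{\beta}{\alpha^2}\}$ where $\{p_i,p_j,p_k\}=\pole$. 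Since $\pole=\{\tfrac1\alpha,\tfrac1\beta,0\}$, the choice of the $\mathcal{A}$-pair $\{p_i,p_j\}$ leaves exactly the three combinations $\{\mathcal{A(1,2)},\mathcal{C}\}+\{\mathcal{B(3)},\mathcal{C}\}$, $\{\mathcal{A(1,3)},\mathcal{C}\}+\{\mathcal{B(2)},\mathcal{C}\}$ and $\{\mathcal{A(2,3)},\mathcal{C}\}+\{\mathcal{B(1)},\mathcal{C}\}$, which are precisely the three cases treated above. Hence it suffices to dispose of each of these three combinations.

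Next, for a fixed combination I would use the trichotomy (i)--(iii) describing how the shift by $c$ can interchange the two solution pairs coming from $\{\mathcal{A},\mathcal{C}\}$ and $\{\mathcal{B},\mathcal{C}\}$. Since $u+v\neq u'+v'$ always holds when $\beta\notin\F_4$, producing the second genuine $\bu$-solution forces $c$ to realize one of (i)--(iii). Each of these is equivalent to a vanishing condition on the relevant quadratics $h_1^{(a,b)},h_2^{(a,b')},h_3^{(a,b')}$ read off from their definitions in Section~\ref{sectioncarlitz3}: the resultant $\res(\,\cdot\,,\,\cdot\,)=0$ of the two quadratics in case (i), computed as the $4\times 4$ determinant in \eqref{resultant}, and a single polynomial evaluation $h_i^{(a,b)}(\,\cdot\,)=0$ in cases (ii) and (iii).

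The crux is then the uniform evaluation of these nine quantities. In each of the three combinations the resultant and the two evaluations are, by direct computation, a nonzero scalar (a power of $\tfrac1\alpha$ times a power of $\tfrac1\beta$) times $\beta^3+\beta^2+1$. Consequently, whichever subcase (i), (ii) or (iii) actually occurs, the hypothesis $\bu_G(a,c)\ge 8$ forces $\beta^3+\beta^2+1=0$, which is exactly the assertion. The only genuine work is verifying that all nine of these scalar factors are nonzero and that the common algebraic factor is indeed $\beta^3+\beta^2+1$; this is the main, though entirely mechanical, obstacle, and it is already recorded in the displayed resultant and evaluation computations preceding the proposition. I would therefore simply cite those three blocks of computations and conclude.
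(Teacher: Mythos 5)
Your proposal is correct and is essentially identical to the paper's own argument: the paper proves this proposition exactly by combining Lemma \ref{aclasslemma} and Lemma \ref{acbcclasslemma} to reduce to the three combinations $\{\mathcal{A(1,2)},\mathcal{C}\}+\{\mathcal{B(3)},\mathcal{C}\}$, $\{\mathcal{A(1,3)},\mathcal{C}\}+\{\mathcal{B(2)},\mathcal{C}\}$, $\{\mathcal{A(2,3)},\mathcal{C}\}+\{\mathcal{B(1)},\mathcal{C}\}$, applying the trichotomy (i)--(iii) in each, and observing that all nine resulting resultants and evaluations equal a nonzero monomial in $\tfrac1\alpha,\tfrac1\beta$ times $\beta^3+\beta^2+1$. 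Nothing is missing; the nine displayed computations preceding the proposition are precisely the paper's proof.
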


\begin{remark}\label{betaremark}
Since the case $\beta^3+\beta^2+1=0$ is uninteresting in the sense that
 $\du_G=8$ which we already showed in Section $\ref{differentialsec}$ and
  since the above proposition settled the case
where the $\bu_G(a,c)\geq 8$ with one of the solutions of the
 $\bu$ equation  in $\ca$, from now on,
we always assume that $\beta^3+\beta^2+1\neq 0$ and  no $\ca$ solution exist when we discuss the $\bu$ equation
$G(x)+G(y)=a=G(x+c)+G(y+c)$.
\end{remark}

\subsection{$\{\mathcal{B},\mathcal{C}\}+\{\mathcal{B},\mathcal{C}\}$}

Since the $\bu$ equation with a solution $\{u,v\}\in \mathcal{A}$ is settled in previous section, and since
 $\{\mathcal{C},\mathcal{C}\}$ is not possible by Proposition \ref{abalpham} or by Corollary \ref{abalphamcoro}, the only remaining types are
$\{\mathcal{B},\mathcal{C}\}+\{\mathcal{B},\mathcal{C}\}$ and
$\{\mathcal{B},\mathcal{B}\}+\{\mathcal{B},\mathcal{C}\}$. We will first discuss the type
$\{\mathcal{B},\mathcal{C}\}+\{\mathcal{B},\mathcal{C}\}$.
 Similarly as in the previous section, we divide the cases into Case {\rm (i)} and Case {\rm (ii), (iii)}.
  By the symmetry of the type $\{\mathcal{B},\mathcal{C}\}+\{\mathcal{B},\mathcal{C}\}$, the case
  {\rm (ii), (iii)} can be discussed together.

\subsubsection{$\{\mathcal{B},\mathcal{C}\}+\{\mathcal{B},\mathcal{C}\} :$ Case {\rm (i)}}\label{sectionh12x}

In this subsection, we will show that, if $\bu_G(a,c)\geq 8$ happens, then either $h_{12}(Z)$ or $h_{13}(Z)$
 introduced in Theorem \ref{butheorem} has a root in $\Fbn$.

 \bigskip

 $ \begin{aligned}
   \{\mathcal{B(1)},&\mathcal{C}\} + \{\mathcal{B(2)},\mathcal{C}\}:   \\
  & \{\mathcal{B(1)},\mathcal{C}\} :
G(\tfrac{1}{\alpha})+G(b_1+\tfrac{1}{\alpha})=a=G(u_1)+G(v_1),
  \,\,\, \{u_1+c,v_1+c \}=\{\tfrac{1}{\alpha}, b_1+\tfrac{1}{\alpha}\} \\
  &\{\mathcal{B(2)},\mathcal{C}\} :
  G(\tfrac{1}{\beta})+G(b_2+\tfrac{1}{\beta})=a=G(u_2)+G(v_2),
  \,\,\, \{u_2+c,v_2+c \}=\{\tfrac{1}{\beta}, b_2+\tfrac{1}{\beta}\}
\end{aligned}
$

\medskip
\noindent In this case, one has
 $c\in \{u_1+\tfrac{1}{\alpha}, v_1+\tfrac{1}{\alpha}\} \cap \{u_2+\tfrac{1}{\beta}, v_2+\tfrac{1}{\beta}\} $.
 Therefore $c$ is a common root of two polynomials
 \begin{align*}
  h_1^{(a,b_1)}(X) &=\hh^{(a,b_1)}(X+\tfrac{1}{\alpha})=(X+u_1+\tfrac{1}{\alpha})(X+v_1+\tfrac{1}{\alpha})
                        =X^2+b_1X+\tfrac{b_1}{\alpha^2 a}  \\
  h_2^{(a,b_2)}(X) &=\hh^{(a,b_2)}(X+\tfrac{1}{\beta})=(X+u_2+\tfrac{1}{\beta})(X+v_2+\tfrac{1}{\beta})
                        =X^2+b_2X+\tfrac{b_2}{\alpha\beta}+\tfrac{b_2}{\alpha^2a}+\tfrac{1}{\alpha^2\beta^2}
 \end{align*}
Since $b_1\neq b_2$ and two quadratic polynomials $ h_1^{(a,b_1)}(X),  h_2^{(a,b_2)}(X)$ have a unique common root $c$,
  the resultant of two polynomials must be zero. Therefore, after a routine computation
  with $b_1=\tfrac{1}{\alpha(\alpha a+\beta)}$ and $b_2=\tfrac{a+1}{\beta(\alpha a+1)}$
  (See the expression of $b_1,b_2,b_3$ in the equations \eqref{eqnb1},\eqref{eqnb2} and \eqref{eqnb3}), one has
\begin{align*}
0=\res(h_1^{(a,b_1)},h_2^{(a,b_2)})=\left\{\frac{1}{\alpha\beta a(\alpha a+\beta)(\alpha a+1)}\right\}^2
                                     \left(a^4+a^3+a^2+\frac{\beta^4}{\alpha^4}\right)
\end{align*}
Since $a(\alpha a+\beta)(\alpha a+1)\neq 0$ because  $b_1+\tfrac{1}{\alpha}, b_2+\tfrac{1}{\beta}\notin \pole$, one
must have
$$
0=a^4+a^3+a^2+\frac{\beta^4}{\alpha^4}
$$
Therefore by defining
\begin{align}
h_{12}(Z)=Z^4+Z^3+Z^2+\frac{\beta^4}{\alpha^4} \,\, \in \Fbn[Z], \label{h12x}
\end{align}
one concludes that, if there are $a,c$ with $\bu(a,c)\geq 8$ which come from the combination
 $\{\mathcal{B(1)},\mathcal{C}\} + \{\mathcal{B(2)},\mathcal{C}\}$, then one has $h_{12}(a)=0$.
 Conversely, the existence of a root $z\in \Fbn$ satisfying $h_{12}(z)=0$
 guarantees the existence of $a,c \in \Fbn$ having $\bu_G(a,c)\geq 8$.
 Note that one can explicitly construct such point $(a,c)$ from a root of $h_{12}(Z)=0$. Namely,
 if $h_{12}(z)=0$ for some $z\in \Fbn$, then the common solution $c$ satisfying $h_1^{(a,b_1)}(c)=0=h_2^{(a,b_2)}(c)$
 can be expressed as $c=\tfrac{\alpha z+\beta^2}{\alpha^2 z (\alpha^2 z^2+\alpha^2z+\beta^2)}$ so that
 one has $\bu_G(a,c)\geq 8$ with $(a,c)=(z, \tfrac{\alpha z+\beta^2}{\alpha^2 z (\alpha^2 z^2+\alpha^2z+\beta^2)})$.

\bigskip

 $ \begin{aligned}
   \{\mathcal{B(1)},&\mathcal{C}\} + \{\mathcal{B(3)},\mathcal{C}\}:   \\
  & \{\mathcal{B(1)},\mathcal{C}\} :
G(\tfrac{1}{\alpha})+G(b_1+\tfrac{1}{\alpha})=a=G(u_1)+G(v_1),
  \,\,\, \{u_1+c,v_1+c \}=\{\tfrac{1}{\alpha}, b_1+\tfrac{1}{\alpha}\} \\
  &\{\mathcal{B(3)},\mathcal{C}\} :
  G(0)+G(b_3)=a=G(u_3)+G(v_3),
  \,\,\, \{u_3+c,v_3+c \}=\{0, b_3\}
\end{aligned}
$

\medskip
\noindent In this case, one has
 $c\in \{u_1+\tfrac{1}{\alpha}, v_1+\tfrac{1}{\alpha}\} \cap \{u_3, v_3\} $.
 Therefore $c$ is a common root of two polynomials
 \begin{align*}
  h_1^{(a,b_1)}(X) &=\hh^{(a,b_1)}(X+\tfrac{1}{\alpha})=(X+u_1+\tfrac{1}{\alpha})(X+v_1+\tfrac{1}{\alpha})
                        =X^2+b_1X+\tfrac{b_1}{\alpha^2 a}  \\
  h_3^{(a,b_3)}(X) &=\hh^{(a,b_3)}(X)=(X+u_3)(X+v_3)
                        =X^2+b_3X+\tfrac{b_3}{\alpha}+\tfrac{b_3}{\alpha^2a}+\tfrac{1}{\alpha^2}
 \end{align*}
Since the two quadratic polynomials $ h_1^{(a,b_1)}(X),  h_3^{(a,b_3)}(X)$ have a unique common root $c$,
  the resultant of two polynomials must be zero. Therefore, after a routine computation
  with $b_1=\tfrac{1}{\alpha(\alpha a+\beta)}$ and $b_3=\tfrac{\alpha a+1}{\alpha^2 a}$, one has
\begin{align*}
0=\res(h_1^{(a,b_1)},h_3^{(a,b_3)})=\left\{\frac{1}{\alpha^2 a^2(\alpha a+\beta)}\right\}^2
                                     \left(a^4+\frac{\beta}{\alpha}a^3+\frac{\beta}{\alpha^2}a^2+\frac{\beta^2}{\alpha^4}\right)
\end{align*}
which implies
$$
0=a^4+\frac{\beta}{\alpha}a^3+\frac{\beta}{\alpha^2}a^2+\frac{\beta^2}{\alpha^4}
$$
By defining
\begin{align}
h_{13}(Z)=Z^4+\frac{\beta}{\alpha}Z^3+\frac{\beta}{\alpha^2}Z^2+\frac{\beta^2}{\alpha^4} \,\, \in \Fbn[Z],
\end{align}
one concludes that, if there are $a,c$ with $\bu_G(a,c)\geq 8$  coming from the combination
 $\{\mathcal{B(1)},\mathcal{C}\} + \{\mathcal{B(3)},\mathcal{C}\}$, then one has $h_{13}(a)=0$.
 Conversely, the existence of a root $z\in \Fbn$ satisfying $h_{13}(z)=0$
 guarantees the existence of $a,c \in \Fbn$ having $\bu_G(a,c)\geq 8$.
 Note that one can explicitly construct such point $(a,c)$ from a root of $h_{12}(Z)=0$. Namely,
 if $h_{13}(z)=0$ for some $z\in \Fbn$, then the common solution $c$ satisfying $h_1^{(a,b_1)}(c)=0=h_3^{(a,b_3)}(c)$
 can be expressed as $c=\tfrac{\beta}{\alpha^2 z (\alpha^2 z^2+\alpha\beta z+\beta)}$ so that
 one has $\bu_G(a,c)\geq 8$ with $(a,c)=(z, \tfrac{\beta}{\alpha^2 z (\alpha^2 z^2+\alpha\beta z+\beta)})$.

\bigskip

 $ \begin{aligned}
   \{\mathcal{B(2)},&\mathcal{C}\} + \{\mathcal{B(3)},\mathcal{C}\}:   \\
  & \{\mathcal{B(2)},\mathcal{C}\} :
G(\tfrac{1}{\beta})+G(b_2+\tfrac{1}{\beta})=a=G(u_2)+G(v_2),
  \,\,\, \{u_2+c,v_2+c \}=\{\tfrac{1}{\beta}, b_2+\tfrac{1}{\beta}\} \\
  &\{\mathcal{B(3)},\mathcal{C}\} :
  G(0)+G(b_3)=a=G(u_3)+G(v_3),
  \,\,\, \{u_3+c,v_3+c \}=\{0, b_3\}
\end{aligned}
$

\medskip
\noindent In this case, one has
 $c\in \{u_2+\tfrac{1}{\beta}, v_2+\tfrac{1}{\beta}\} \cap \{u_3, v_3\} $.
 Therefore $c$ is a common root of two polynomials
 \begin{align*}
  h_2^{(a,b_2)}(X) &=\hh^{(a,b_2)}(X+\tfrac{1}{\beta})=(X+u_2+\tfrac{1}{\beta})(X+v_2+\tfrac{1}{\beta})
                        =X^2+b_2X+\tfrac{b_2}{\alpha\beta}+\tfrac{b_2}{\alpha^2a}+\tfrac{1}{\alpha^2\beta^2}  \\
  h_3^{(a,b_3)}(X) &=\hh^{(a,b_3)}(X)=(X+u_3)(X+v_3)
                        =X^2+b_3X+\tfrac{b_3}{\alpha}+\tfrac{b_3}{\alpha^2a}+\tfrac{1}{\alpha^2}
 \end{align*}
Since the two quadratic polynomials $ h_2^{(a,b_2)}(X),  h_3^{(a,b_3)}(X)$ have a unique common root $c$,
  the resultant of two polynomials must be zero. Therefore, after a routine computation
  with $b_2=\tfrac{a+1}{\beta(\alpha a+1)}$ and $b_3=\tfrac{\alpha a+1}{\alpha^2 a}$, one has
\begin{align*}
0=\res(h_2^{(a,b_2)},h_3^{(a,b_3)})=\frac{1}{\alpha^3\beta}\left\{\frac{1}{a^2(\alpha a+1)}\right\}^2
                                     \left(a^4+\frac{1}{\alpha}a^3+\frac{1}{\alpha\beta}a^2+\frac{\beta}{\alpha^5}\right)
\end{align*}
which implies
$$
0=a^4+\frac{1}{\alpha}a^3+\frac{1}{\alpha\beta}a^2+\frac{\beta}{\alpha^5}
$$
By defining
\begin{align}
h_{23}(Z)=Z^4+\frac{1}{\alpha}Z^3+\frac{1}{\alpha\beta}Z^2+\frac{\beta}{\alpha^5} \,\, \in \Fbn[Z],
\end{align}
one concludes that the existence of a root $z\in \Fbn$ satisfying $h_{23}(z)=0$
 guarantees the existence of $a,c \in \Fbn$ having $\bu_G(a,c)\geq 8$.
 However, it should be mentioned that $h_{23}(Z)$ is birationally isomorphic to $h_{12}(Z)$ over $\Fbn$ in the sense that
 $$
 h_{23}(Z)=\frac{1}{\alpha\beta^7}(\alpha Z+\beta)^4h_{12}\left(\frac{\beta^2Z}{\alpha Z+\beta}\right)
 $$
such that the existence of a root of $h_{12}(Z)$ in $\Fbn$ is equivalent to the existence of a root of $h_{23}(Z)$ in
$\Fbn$. Therefore the polynomial $h_{23}(Z)$ is redundant in our analysis of the boomerang uniformity and will not be
used later.

\begin{remark}\label{homo}
More precisely, the two polynomials $h_{12}(Z)$ and $h_{23}(Z)$ are isomorphic via
 linear transformation between homogeneous versions. That is,
  letting
  \begin{align*}
 \tilde{h}_{12}(Z,W) &=Z^4+Z^3W+Z^2W^2+\frac{\beta^4}{\alpha^4}W^4, \\
 \tilde{h}_{23}(Z,W) &=Z^4+\frac{1}{\alpha}Z^3W+\frac{1}{\alpha\beta}Z^2W^2+\frac{\beta}{\alpha^5}W^4,
  \end{align*}
  it holds that
  $$
  \tilde{h}_{23}(Z,W)=\dfrac{\beta}{\alpha}\tilde{h}_{12}(Z,\dfrac{\alpha}{\beta^2}Z+\dfrac{1}{\beta}W)
  $$
 \end{remark}

\subsubsection{$\{\mathcal{B},\mathcal{C}\}+\{\mathcal{B},\mathcal{C}\}' :$ Case {\rm (ii),(iii)}}

In this subsection, we will show that, if $\bu_G(a,c)\geq 8$ happens, then either $g_{1}(Z)$ or $g_{2}(Z)$
 introduced in Theorem \ref{butheorem} has a root in $\Fbn$.

We consider the situation $\bu_G(a,c)\geq 8$ where this happens as a combination of
  $\{\mathcal{B},\mathcal{C}\}+\{\mathcal{B},\mathcal{C}\}' :$ Case {\rm (ii),(iii)}, where
  the prime symbol is used at $\{\mathcal{B({\it j})},\mathcal{C}\}'$ to emphasize that $\du_G(a,c)\geq 4$ happens
  at the second $\bu$ equation. That is,

\bigskip

$ \begin{aligned}
   \{\mathcal{B({\it i})},&\mathcal{C}\} +\{\mathcal{B({\it j})},\mathcal{C}\}' : \\
 & \{\mathcal{B({\it i})},\mathcal{C}\} : G(p_i)+G(b_i+p_i)=a=G(u_i)+G(v_i), \,\,\, \{u_i+c,v_i+c \}=\{p_i, b_i+p_i\} \\
 &\{\mathcal{B({\it j})},\mathcal{C}\}' : G(p_j)+G(p_j+c)=a=G(u_j)+G(v_j), \,\,\, u_j+v_j=c
\end{aligned}
$

\medskip
   \noindent where the two equations $\{p_i, b_i+p_i\}$ and $\{p_j, b_j+p_j\}$
 are of $\cb$ (i.e., $b_i+p_i, c+p_j \notin \pole$), and $\{u_i,v_i\}$ and $\{u_j,v_j\}$
  are of $\cc$ (i.e., $u_i,v_i, u_j, v_j \notin \pole$).

Note that such $c$ satisfying the above system of simultaneous $\bu$ equations, if it exists,  is uniquely determined
as $c=b_j$ satisfying $a=G(p_j)+G(p_j+b_j)$, and the solutions of the second $\bu$ equation account for $\du_G(a,c)\geq
4$. Since $i\neq j$, there are exactly $6$ types of such combination $\{\mathcal{B({\it i})},\mathcal{C}\}
+\{\mathcal{B({\it j})},\mathcal{C}\}'$ for $(i,j)=(1,2),(2,1),$ $(1,3),(3,1),(2,3),(3,2)$.


For each of the six types of combinations, we will repeatedly use the following bivariate equation
$$
H_a(X,Y)=X^2+XY+\frac{Y}{\alpha^2a}
$$
which was introduced in the equation \eqref{heqn}, and we will derive that $X$ and $Y$ can be parametrized in terms of
 either $a$ of $c$ such that $H_a(X,Y)$ becomes a polynomial of one variable and the existence of $a,c$ with $\bu_G(a,c)\geq 8$
 is equivalent of solvability of the polynomial in $\Fbn$.
In this way, we will use the equation $H_a(X,Y)=0$ repeatedly for each of six types of
$\{\mathcal{B},\mathcal{C}\}+\{\mathcal{B},\mathcal{C}\}'$
 so that we will derive the following corresponding equations as
follows;
\begin{align*}
  \circled{1}\quad \{\mathcal{B(2)},\mathcal{C}\}+\{\mathcal{B(1)},\mathcal{C}\}' & \quad\Rightarrow\quad
 H_a(b_1+\tfrac{1}{\alpha\beta},b_2)=0 \\
 \circled{2}\quad\{\mathcal{B(3)},\mathcal{C}\}+\{\mathcal{B(1)},\mathcal{C}\}' & \quad\Rightarrow\quad
 H_a(b_1+\tfrac{1}{\alpha},b_3)=0 \\
  \circled{3}\quad\{\mathcal{B(2)},\mathcal{C}\}+\{\mathcal{B(3)},\mathcal{C}\}' & \quad\Rightarrow\quad
 H_a(b_3+\tfrac{1}{\alpha\beta},b_2)=0 \\
   \circled{4}\quad\{\mathcal{B(1)},\mathcal{C}\}+\{\mathcal{B(3)},\mathcal{C}\}' & \quad\Rightarrow\quad
 H_a(b_3,b_1)=0 \\
  \circled{5}\quad\{\mathcal{B(1)},\mathcal{C}\}+\{\mathcal{B(2)},\mathcal{C}\}' & \quad\Rightarrow\quad
 H_a(b_2,b_1)=0 \\
  \circled{6}\quad \{\mathcal{B(3)},\mathcal{C}\}+\{\mathcal{B(2)},\mathcal{C}\}' & \quad\Rightarrow\quad
 H_a(b_2+\tfrac{1}{\alpha},b_3)=0
\end{align*}

Now we will discuss the above six types of $\{\mathcal{B},\mathcal{C}\}+\{\mathcal{B},\mathcal{C}\}'$ and will conclude
that one
 only needs to consider solvability of the polynomials $g_1(Z)$ and $g_2(Z)$, and
 other polynomials arising in this subsection are all
  isomorphic to one of $g_1(Z), g_2(Z)$ or $h_{12}(Z)$.

\bigskip

\noindent
 $ \begin{aligned}
   \circled{1}\quad  \{\mathcal{B(2)},&\mathcal{C}\} + \{\mathcal{B(1)},\mathcal{C}\}' :
                      \\
  & \{\mathcal{B(2)},\mathcal{C}\} :
G(\tfrac{1}{\beta})+G(b_2+\tfrac{1}{\beta})=a=G(u_2)+G(v_2),
 \,\,\, \{u_2+c,v_2+c \}=\{\tfrac{1}{\beta}, b_2+\tfrac{1}{\beta}\} \\
  &\{\mathcal{B(1)},\mathcal{C}\}' :
  G(\tfrac{1}{\alpha})+G(c+\tfrac{1}{\alpha})=a=G(u_1)+G(v_1),
\,\,\, u_1+v_1=c
\end{aligned}
$

\medskip

\noindent In this case,  one has  $b_1=c \in \{\tfrac{1}{\beta}+u_2, \tfrac{1}{\beta}+v_2\}$ because
$\{u_2+\tfrac{1}{\beta},v_2+\tfrac{1}{\beta} \}=\{c, b_2+c\}$. Since $\tfrac{1}{\beta}+u_2, \tfrac{1}{\beta}+v_2$ are
two roots of $\hh^{(a,b_2)}(X+\tfrac{1}{\beta})$ with $b_1 \in \{\tfrac{1}{\beta}+u_2, \tfrac{1}{\beta}+v_2\}$,
\begin{align}
0 &=\hh^{(a,b_2)}(b_1+\tfrac{1}{\beta})
  =\hh^{(a,b_2)}(b_1+\tfrac{1}{\alpha\beta}+\tfrac{1}{\alpha}) =h^{(a,b_2)}_1(b_1+\tfrac{1}{\alpha\beta}) \notag \\
  &= H_a(b_1+\tfrac{1}{\alpha\beta},
  b_2)=H_a(\tfrac{a}{\beta(\alpha a+\beta)},\tfrac{a+1}{\beta(\alpha a+1)})
   =\tfrac{1}{\beta a(\alpha a+\beta)^2(\alpha a+1)}\left(
 a^3+\tfrac{\beta^2}{\alpha^2}a+\tfrac{\beta^2}{\alpha^2}\right) \label{b2d1cubic}
\end{align}
with $b_1+\frac{1}{\alpha\beta}=\frac{a}{\beta(\alpha a+\beta)}$ and $b_2=\frac{a+1}{\beta(\alpha a +1)}$.
 Therefore, the existence of $a$ satisfying the above cubic
 polynomial determines $c$ via the relation $c=b_1=\tfrac{1}{\alpha(\alpha a+\beta)}$
  such that $\bu_G(a,c)\geq 6$ is
 satisfied. To guarantee $\bu_G(a,c)\geq 8$, one must have $u_1,v_1 \in \Fbn
 $ with $u_1, v_1 \notin \pole$ satisfying
 $$G(u_1)+G(v_1)=a,\quad  u_1+v_1=c $$
 Since $0=a^3+\tfrac{\beta^2}{\alpha^2}a+\tfrac{\beta^2}{\alpha^2}$ from the equation \eqref{b2d1cubic} and
since $c=b_1=\tfrac{1}{\alpha(\alpha a+\beta)}$, one finds $c\neq \tfrac{a}{\alpha a+1}, \tfrac{a}{\beta(\alpha a+1)}$.
 Therefore the existence of such $u_1,v_1$ in $\Fbn$ is equivalent of having
 $\tr(\tfrac{1}{ac\alpha^2})=0$ by Corollary \ref{abalphamcoro}.
One can re-express the cubic polynomial
 in (\ref{b2d1cubic}) as
 \begin{align}
0&=a^3+\frac{\beta^2}{\alpha^2}a+\frac{\beta^2}{\alpha^2}
  =1+\left(\frac{\beta}{\alpha a}\right)^2+\frac{\alpha}{\beta}\left(\frac{\beta}{\alpha a}\right)^3
 =\left(\frac{\beta}{\alpha a}\right)^3+\frac{\beta}{\alpha}\left(\frac{\beta}{\alpha
 a}\right)^2+\frac{\beta}{\alpha} \notag \\
  &=\left(\frac{1}{ac\alpha^2}+1\right)^3
  +\frac{\beta}{\alpha}\left(\frac{1}{ac\alpha^2}+1\right)^2+\frac{\beta}{\alpha} \label{b2d1cubicresuse}
 \end{align}
 where the last equality comes from Lemma \ref{tracelemma}-(a).
Since $\frac{1}{ac\alpha^2}=z^2+z$ for some $z\in \Fbn$ if and only if $\tr(\frac{1}{ac\alpha^2})=0$, by defining
\begin{align}
g_1(Z)&=(Z^2+Z+1)^3+\frac{\beta}{\alpha}(Z^2+Z+1)^2+\frac{\beta}{\alpha} \quad \in \Fbn[Z] \notag \\
      &=Z^6+Z^5+Z^3+Z+1+\frac{\beta}{\alpha}(Z^4+Z^2),
\end{align}
 the existence of $a,c$ with $\bu_G(a,c)\geq 8$ for the case $\{\mathcal{B(2)},\mathcal{C}\} + \{\mathcal{B(1)},\mathcal{C}\}'$
guarantees the existence of a root of $g_1(Z)=0$ in $\Fbn$. Conversely, if there is $z\in \Fbn$ such that $g_1(z)=0$,
then by defining
$$
a=\frac\beta\alpha \cdot\frac{1}{z^2+z+1}, \quad c=b_1=\frac{1}{\alpha(\alpha a+\beta)},
$$
one has $\bu_G(a,c)\geq 8$.

\bigskip

\noindent  $ \begin{aligned}
  \circled{2}\quad  \{\mathcal{B(3)},&\mathcal{C}\} +\{\mathcal{B(1)},\mathcal{C}\}' :
              \\
  & \{\mathcal{B(3)},\mathcal{C}\} : G(0)+G(b_3)=a=G(u_3)+G(v_3), \,\,\, \{u_3+c,v_3+c \}=\{0, b_3\} \\
  &\{\mathcal{B(1)},\mathcal{C}\}' : G(\tfrac{1}{\alpha})+G(c+\tfrac{1}{\alpha})=a=G(u_1)+G(v_1), \,\,\, u_1+v_1=c
\end{aligned}
$

\medskip

\noindent In this case,  one has  $b_1=c \in \{u_3,v_3\}$
       because $\{u_3,v_3 \}=\{c,
b_3+c\}$. Therefore $c=b_1$ is a root of $\hh^{(a,b_3)}(X)=(X+u_3)(X+v_3)$, and one has
\begin{align}
0&=\hh^{(a,b_3)}(b_1)=\hh^{(a,b_3)}(b_1+\tfrac1\alpha+\tfrac1\alpha) =h_1^{(a,b_3)}(b_1+\tfrac1\alpha) \notag \\
     &=H_a(b_1+\tfrac1\alpha,b_3)=H_a(\tfrac{a+1}{\alpha a+\beta},\tfrac{\alpha a+1}{\alpha^2 a})
      =\tfrac{1}{\alpha a^2(\alpha a+\beta)^2}\left( a^3+a^2+\tfrac{\beta}{\alpha^2}a+\tfrac{\beta^2}{\alpha^3}\right)
      \label{hab1b3}
\end{align}
The cubic polynomial in the above equations can be rewritten as
\begin{align}
0&=1+\frac{\alpha}{\beta}\left(\frac{\beta}{\alpha a}\right)+\frac{1}{\beta}\left(\frac{\beta}{\alpha
a}\right)^2+\frac{1}{\beta}\left(\frac{\beta}{\alpha a}\right)^3 =\left(\frac{\beta}{\alpha
a}\right)^3+\left(\frac{\beta}{\alpha a}\right)^2+\alpha\left(\frac{\beta}{\alpha
a}\right)+\beta \notag \\
 &=\left(\frac{1}{ac\alpha^2 }+1\right)^3+\left(\frac{1}{ac\alpha^2
}+1\right)^2+\alpha\left(\frac{1}{ac\alpha^2 }+1\right)+\beta, \label{b3d1eqn}
\end{align}
where the last equality again comes from Lemma \ref{tracelemma}-(a). In a similar way, by defining
\begin{align}
g_2(Z)&=(Z^2+Z+1)^3+(Z^2+Z+1)^2+\alpha(Z^2+Z+1)+\beta \quad \in \Fbn[Z] \notag \\
 &=Z^6+Z^5+Z^4+Z^3+\beta Z^2+\beta Z+1,
\end{align}
 the existence of $a,c$ with $\bu_G(a,c)\geq 8$
 for the case $\{\mathcal{B(3)},\mathcal{C}\} + \{\mathcal{B(1)},\mathcal{C}\}'$
guarantees the existence of a root of $g_2(Z)=0$ in $\Fbn$. Conversely, the
  existence of a solution $z\in \Fbn$ satisfying $g_2(z)=0$ guarantees the existence
  of $a,c\in \Fbn $ such that $\bu_G(a,c)\geq 8$, and such point $(a,c)$ can be
  constructed explicitly using $z$.

\bigskip

\noindent
  $ \begin{aligned}
  \circled{3} \quad \{\mathcal{B(2)},&\mathcal{C}\} +\{\mathcal{B(3)},\mathcal{C}\}' :
        \\
  & \{\mathcal{B(2)},\mathcal{C}\} :
    G(\tfrac1\beta)+G(b_2+\tfrac1\beta)=a=G(u_2)+G(v_2), \,\,\, \{u_2+c,v_2+c \}=\{\tfrac1\beta, b_2+\tfrac1\beta\} \\
  &\{\mathcal{B(3)},\mathcal{C}\}' : G(0)+G(c)=a=G(u_3)+G(v_3), \,\,\, u_3+v_3=c
\end{aligned}
$

\medskip

\noindent In this case,  one has  $b_3=c \in \{u_2+\tfrac1\beta,v_2+\tfrac1\beta\}$ because
$\{u_2+\tfrac1\beta,v_2+\tfrac1\beta \}=\{c, b_2+c\}$. Therefore $c=b_3$ is a root of
$\hh^{(a,b_2)}(X+\tfrac1\beta)=(X+u_2+\tfrac1\beta)(X+v_2+\tfrac1\beta)$, and one has
\begin{align*}
0=\hh^{(a,b_2)}(b_3+\tfrac1\beta)=\hh^{(a,b_2)}(b_3+\tfrac1{\alpha\beta}+\tfrac1\alpha)
     =h_1^{(a,b_2)}(b_3+\tfrac1{\alpha\beta})=H_a(b_3+\tfrac1{\alpha\beta},b_2)
\end{align*}
 Now we want to express the input values $b_3+\tfrac1{\alpha\beta}$ and $b_2$ of the bivariate polynomial $H_a$
 using the parameter $c$ to derive the same polynomial that we considered before.
Using $c=b_3=\frac{\alpha a+1}{\alpha^2 a}$, one has $b_2=\frac{a+1}{\beta(\alpha a+1)}=\frac{\alpha^2
c+\beta}{\alpha^2\beta c}$. Therefore
 \begin{align}
0&=H_a(b_3+\tfrac{1}{\alpha\beta},b_2)
 =H_a(c+\tfrac{1}{\alpha\beta},\tfrac{\alpha^2 c+\beta}{\alpha^2\beta c})
 =\tfrac{1}{ c}\left( c^3+\tfrac{1}{\alpha^2}c+\tfrac{1}{\alpha^2\beta}\right) \label{b2d3cubic}
 \end{align}
In a similar way to previous cases, one may use
 Lemma \ref{tracelemma}-$(b)$ to rewrite the above cubic polynomial as
\begin{align*}
0&=c^3+\frac{1}{\alpha^2}c+\frac{1}{\alpha^2\beta}
   =1+\left(\frac{1}{\alpha
c}\right)^2+\frac{\alpha}{\beta}\left(\frac{1}{\alpha c}\right)^3
  =\left(\frac{1}{\alpha c}\right)^3+\frac{\beta}{\alpha}\left(\frac{1}{\alpha c}\right)^2+\frac{\beta}{\alpha}\\
  &=\left(\frac{1}{ac\alpha^2}+1\right)^3+\frac{\beta}{\alpha}\left(\frac{1}{ac\alpha^2}+1\right)^2+\frac{\beta}{\alpha},
\end{align*}
 where the last expression is exactly same to the expression in the equation (\ref{b2d1cubicresuse}).
 Therefore we have exactly the same polynomial
$$
g_1(Z)=Z^6+Z^5+Z^3+Z+1+\frac{\beta}{\alpha}(Z^4+Z^2)
$$
as in the case $\{\mathcal{B(2)},\mathcal{C}\}+\{\mathcal{B(1)},\mathcal{C}\}'$ so that the existence of a root $z\in
\Fbn$
 satisfying $g_1(z)=0$ is equivalent to the existence of $a,c\in \Fbn$ satisfying $\bu_G(a,c)\geq 8$ with the type
 $\{\mathcal{B(2)},\mathcal{C}\} +\{\mathcal{B(3)},\mathcal{C}\}'$.
  Note that such  $a$ and $c$ are explicitly given as
 $$
 c=\frac{1}{\alpha}\cdot \frac{1}{z^2+z+1}, \quad a=\frac{1}{\alpha (\alpha c+1)}(=G(0)+G(c))
 $$

\bigskip

\noindent
  $ \begin{aligned}
   \circled{4} \quad \{\mathcal{B(1)},&\mathcal{C}\} +\{\mathcal{B(3)},\mathcal{C}\}' :
    \\
  & \{\mathcal{B(1)},\mathcal{C}\} :
G(\tfrac{1}{\alpha})+G(b_1+\tfrac{1}{\alpha})=a=G(u_1)+G(v_1),
 \,\,\, \{u_1+c,v_1+c \}=\{\tfrac{1}{\alpha}, b_1+\tfrac{1}{\alpha}\} \\
  &\{\mathcal{B(3)},\mathcal{C}\}' :
  G(0)+G(c)=a=G(u_3)+G(v_3),
\,\,\, u_3+v_3=c
\end{aligned}
$

\medskip
\noindent In this case,  one has  $b_3=c \in \{u_1+\frac{1}{\alpha},v_1+\frac{1}{\alpha} \}$ because
$\{u_1+\frac{1}{\alpha},v_1+\frac{1}{\alpha} \}=\{c, b_1+c\}$.
 Since $\frac{1}{\alpha}+u_1, \frac{1}{\alpha}+v_1$ are
two roots of $\hh^{(a,b_1)}(X+\frac{1}{\alpha})=h^{(a,b_1)}_1(X)$ with $c=b_3 \in \{\frac{1}{\alpha}+u_1,
\frac{1}{\alpha}+v_1\}$, one gets
 \begin{align}
0&=h_1^{(a,b_1)}(b_3)=H_a(b_3,b_1)=H_a(c,\tfrac{\alpha c+1}{\alpha^2(\beta c+1)})
 =\tfrac{\beta}{\beta c+1}\left( c^3+\tfrac{1}{\beta}c^2+\tfrac{1}{\alpha^2\beta}c+\tfrac{1}{\alpha^3\beta}\right)
 \label{hab3b1}
 \end{align}
and, using Lemma \ref{tracelemma}-$(b)$, the above cubic polynomial is rewritten as
\begin{align*}
 0&=c^3+\frac{1}{\beta}c^2+\frac{1}{\alpha^2\beta}c+\frac{1}{\alpha^3\beta}
      =1+\frac{\alpha}{\beta}\left(\frac{1}{\alpha c}\right)
      +\frac{1}{\beta}\left(\frac{1}{\alpha c}\right)^2+\frac{1}{\beta}\left(\frac{1}{\alpha
      c}\right)^3 \\
      &=\left(\frac{1}{\alpha c}\right)^3+\left(\frac{1}{\alpha c}\right)^2+\alpha \left(\frac{1}{\alpha
      c}\right)+\beta \\
        &=\left(\frac{1}{ac\alpha^2}+1\right)^3+\left(\frac{1}{ac\alpha^2}+1\right)^2+\alpha
        \left(\frac{1}{ac\alpha^2}+1\right)+\beta,
\end{align*}
where the last expression is exactly same to the expression in the equation \eqref{b3d1eqn}.
 Therefore we have the exactly same polynomial
 $$g_2(Z)=Z^6+Z^5+Z^4+Z^3+\beta Z^2+\beta Z+1$$
 as in the combination $\{\mathcal{B(3)},\mathcal{C}\} +\{\mathcal{B(1)},\mathcal{C}\}'$, and the existence of a root $z\in \Fbn$
 satisfying $g_2(z)=0$ is equivalent to the existence of $a,c\in \Fbn$ satisfying $\bu_G(a,c)\geq 8$ of the type
   $\{\mathcal{B(1)},\mathcal{C}\} +\{\mathcal{B(3)},\mathcal{C}\}'$.

\bigskip

\noindent
  $ \begin{aligned}
  \circled{5}\quad \{\mathcal{B(1)},&\mathcal{C}\} +\{\mathcal{B(2)},\mathcal{C}\}' :
     \\
  & \{\mathcal{B(1)},\mathcal{C}\} :
G(\tfrac{1}{\alpha})+G(b_1+\tfrac{1}{\alpha})=a=G(u_1)+G(v_1),
 \,\,\, \{u_1+c,v_1+c \}=\{\tfrac{1}{\alpha}, b_1+\tfrac{1}{\alpha}\} \\
  &\{\mathcal{B(2)},\mathcal{C}\}' :
  G(\tfrac{1}{\beta})+G(c+\tfrac{1}{\beta})=a=G(u_2)+G(v_2),
\,\,\, u_2+v_2=c
\end{aligned}
$

\medskip
\noindent In this case,  one has  $b_2=c \in \{u_1+\frac{1}{\alpha},v_1+\frac{1}{\alpha} \}$ because
$\{u_1+\frac{1}{\alpha},v_1+\frac{1}{\alpha} \}=\{c, b_1+c\}$.
 Since $\frac{1}{\alpha}+u_1, \frac{1}{\alpha}+v_1$ are
two roots of $\hh^{(a,b_1)}(X+\frac{1}{\alpha})=h^{(a,b_1)}_1(X)$ with $b_2 \in \{\frac{1}{\alpha}+u_1,
\frac{1}{\alpha}+v_1\}$,
\begin{align}
0&=h^{(a,b_1)}_1(b_2)=H_a(b_2,b_1)
 =\tfrac{\alpha}{\beta^2 a(\alpha a+\beta)(\alpha a+1)^2}
   \left( a^4+a^2+\tfrac{\beta^2}{\alpha^2}a+\tfrac{\beta^2}{\alpha^4}\right) \label{f1quartic}
\end{align}
 with $b_1=\frac{1}{\alpha(\alpha a+\beta)}$ and $b_2=\frac{a+1}{\beta(\alpha a +1)}$.
  The above quartic polynomial (of $a$)  has a close connection with $h_{12}(Z)$ (See the equation \eqref{h12x})
  that we defined in Section \ref{sectionh12x}. In fact, by defining
  \begin{align*}
  f_1(Z)=Z^4+Z^2+\frac{\beta^2}{\alpha^2}Z+\frac{\beta^2}{\alpha^4},
  \end{align*}
  it is straightforward
  to verify
  \begin{align*}
   f_1\left(\frac{\beta^2}{\alpha^2}Z+1\right)=\frac{\beta^4}{\alpha^4}Z^4h_{12}\left(\frac{1}{Z}\right),
  \end{align*}
 which implies that there is one to one correspondence between the roots of $f_1$ and the roots of $h_{12}$,
 and $f_1$ can be discarded in our analysis. Note that the above isomorphism can also be given using
 homogeneous polynomials in view of Remark \ref{homo}.

\bigskip

\noindent
  $ \begin{aligned}
   \circled{6} \quad \{\mathcal{B(3)},&\mathcal{C}\} +\{\mathcal{B(2)},\mathcal{C}\}' :
      \\
  & \{\mathcal{B(3)},\mathcal{C}\} :
G(0)+G(b_3)=a=G(u_3)+G(v_3),
 \,\,\, \{u_3+c,v_3+c \}=\{0, b_3\} \\
  &\{\mathcal{B(2)},\mathcal{C}\}' :
  G(\tfrac{1}{\beta})+G(c+\tfrac{1}{\beta})=a=G(u_2)+G(v_2),
\,\,\, u_2+v_2=c
\end{aligned}
$

\medskip
\noindent In this case,  one has  $b_2=c \in \{u_3,v_3 \}$ because $\{u_3,v_3 \}=\{c, b_3+c\}$.
 Since $u_3, v_3$ are
two roots of $\hh^{(a,b_3)}(X)=(X+u_3)(X+v_3)$ with $b_2=c \in \{u_3, v_3\}$,
\begin{align}
0&=\hh^{(a,b_3)}(b_2)=\hh^{(a,b_3)}(b_2+\tfrac{1}{\alpha}+\tfrac{1}{\alpha}) \notag \\
 &=H_a(b_2+\tfrac{1}{\alpha}, b_3) =H_a(\tfrac{\alpha^2 a+1}{\alpha\beta(\alpha a+1)},\tfrac{\alpha a+1}{\alpha^2 a})
 =\tfrac{\alpha}{\beta^2 a^2(\alpha a+1)^2}\left( a^4+\tfrac{\beta}{\alpha}a^3+
 \tfrac{1}{\alpha^2}a^2 + \tfrac{\beta}{\alpha^3}a +\tfrac{\beta^2}{\alpha^5}\right) \label{f2quartic}
\end{align}
 We claim that the above quartic polynomial (of $a$) is also isomorphic
           to the  polynomial $h_{12}(Z)$. That is, by defining
  \begin{align*}
  f_2(Z)=Z^4+\frac\beta\alpha Z^3+\frac{1}{\alpha^2} Z^2+\frac{\beta}{\alpha^3}Z+\frac{\beta^2}{\alpha^5},
  \end{align*}
  it is straightforward
  to verify
  \begin{align*}
  f_2\left (Z+\dfrac{1}{\alpha^3}\right ) = \dfrac{\alpha^7}{\beta^6}Z^4 h_{12}\left (\dfrac{\beta^3}{\alpha^4Z}\right )
  \end{align*}
 and $f_2$ can also be eliminated in our analysis.

\subsection{$\{\mathcal{B},\mathcal{B}\}+\{\mathcal{B},\mathcal{C}\}$}

In Section \ref{differentialsec}, we showed that $\du_G(a,b)\geq 4$ can  happen
  when both $(p_i, b+p_i), (p_j, b+p_j)$ are of $\cb$ (i.e., $b+p_i, b+p_j \notin \pole$) such that
 \begin{align}
G(p_i)+G(b+p_i)=a=G(p_j)+G(b+p_j),  \label{bbeqn}
 \end{align}
 where $\{p_i,b+p_i\},\{p_j, b+p_j\}\in \mathcal{B}$ and the above $\bu$ equation is called of type
  $\{\mathcal{B},\mathcal{B}\}$.
For given $a\notin \{1,\tfrac\beta\alpha, \tfrac1\alpha\}$, the $\du$ equation $G(p_i)+G(b+p_i)=a$ has a unique
solution  $b=b_i$ with
   $\{b_1,b_2,b_3\}=\{\tfrac1{\alpha(\alpha a+\beta)}, \tfrac{a+1}{\beta(\alpha a+1)}, \tfrac{\alpha a+1}{\alpha^2
   a}\}$, and the conditions on $a$ for which the above simultaneous equation (\ref{bbeqn}) has a solution are
    stated in the equations \eqref{eqnb1b2}, \eqref{eqnb1b3} and \eqref{eqnb2b3}.

\medskip

 In this section, we will show that the only possible case having $\bu_G(a,c)\geq 8$ of the type
 $\{\mathcal{B},\mathcal{B}\}+\{\mathcal{B},\mathcal{C}\}$ is the case where the polynomial
  $\phi(Z)=Z^2+\tfrac\beta\alpha Z+\tfrac{\beta}{\alpha^2}$ (defined in Theorem \ref{butheorem}) has a root in $\Fbn$.
 When we consider $\{\mathcal{B},\mathcal{B}\}+\{\mathcal{B},\mathcal{C}\}$ where $\bu_G(a,c)\geq 8$ happens, we only
need to consider three different cases of $\{i,j,k\}=\{1,2,3\},\{1,3,2\},\{2,3,1\}$ such that
 \begin{align*}
   \{\mathcal{B({\it i})},\mathcal{B({\it j})}\} &:
G(p_i)+G(b+p_i)=a=G(p_j)+G(b+p_j),
 \,\,\, b=b_i=b_j \\
    \{\mathcal{B({\it k})},\mathcal{C}\} &:
G(p_k)+G(b_k+p_k)=a=G(u)+G(v),
 \,\,\, b_k=u+v
\end{align*}
and, for each $\{i,j,k\}=\{1,2,3\},\{1,3,2\},\{2,3,1\}$, one of the following holds;

 \begin{enumerate}[(i)]
 \item $\{p_j+c,b+p_j+c \}=\{p_i, b+p_i\}$ and $\{u+c,v+c \}=\{p_k, b_k+p_k\} :$ In this case,
    one has
    $$c \in \{p_i+p_j, p_i+p_j+b\} \cap \{p_k+u,p_k+v\}$$
     because $\{p_i+p_j, p_i+p_j+b\}=\{c, b+c\}$ and $\{p_k+u,p_k+v\}=\{c, b_k+c\}$.
 \item $\{p_j+c,b+p_j+c \}=\{p_i, b+p_i\}$ with $c=b_k=u+v :$
  In this case, one has
   $$ b_k \in \{p_i+p_j, p_i+p_j+b\}$$
   because $\{p_i+p_j, p_i+p_j+b\}=\{c, b+c\}$ with $c=b_k$.
 \item $\{u+c,v+c \}=\{p_k, b_k+p_k\}$ with  $c=b=b_i=b_j :$
  In this case, one has
  $$b \in \{p_k+u,p_k+v\}$$
   because $\{p_k+u,p_k+v\}=\{c, b_k+c\}$
   with $c=b$.
 \end{enumerate}

Now letting
 \begin{align*}
  r_{ij}(X)=(X+p_i+p_j)(X+p_i+p_j+b) \,\,\, \textrm{with}\,\, b=b_i=b_j,
 \end{align*}
 we will discuss three possible combinations of
$\{\mathcal{B},\mathcal{B}\}+\{\mathcal{B},\mathcal{C}\}$. Our subsequent arguments have certain similarity with the
 case of $\{\mathcal{A},\mathcal{C}\}+\{\mathcal{B},\mathcal{C}\}$ but involve much delicate computations so that
 we will give rather detailed explanations if necessary.


\bigskip
$ \begin{aligned}
  \{\mathcal{B(1)},&\mathcal{B(3)}\}+\{\mathcal{B(2)},\mathcal{C}\} : \\
  & \{\mathcal{B(1)},\mathcal{B(3)}\} :
G(\textstyle\frac{1}{\alpha})+G(b_1+\frac{1}{\alpha})=a=G(0)+G(b_1),
 \,\,\, b=b_1=b_3 \\
   & \{\mathcal{B(2)},\mathcal{C}\} :
G(\textstyle\frac{1}{\beta})+G(b_2+\frac{1}{\beta})=a=G(u_2)+G(v_2),
 \,\,\, b_2=u_2+v_2
\end{aligned}
$

\medskip \noindent Letting $r_{13}(X)=(X+\frac{1}{\alpha})(X+b_1+\frac{1}{\alpha})$ and
  $h_2^{(a,b_2)}(X)=\hh^{(a,b_2)}(X+\frac{1}{\beta})=(X+\frac{1}{\beta}+u_2)(X+\frac{1}{\beta}+v_2)$,
 one of the following three cases holds;

 \begin{enumerate}[(i)]
 \item $\{c,b_1+c \}=\{\frac{1}{\alpha}, b_1+\frac{1}{\alpha}\}$
 and $\{u_2+c,v_2+c \}=\{\frac{1}{\beta}, b_2+\frac{1}{\beta}\}$ \\
   $\,\,\,\Rightarrow\,\,\,c \in \{\frac{1}{\alpha}, b_1+\frac{1}{\alpha}\} \cap
\{u_2+\frac{1}{\beta}, v_2+\frac{1}{\beta}\} $
  $\,\,\,\Rightarrow\,\,\, \res(r_{13},h_2^{(a,b_2)})=0$
 \item $\{c,b_1+c \}=\{\frac{1}{\alpha}, b_1+\frac{1}{\alpha}\}$ with $c=b_2$
  $\,\,\,\Rightarrow\,\,\, b_2 \in \{\frac{1}{\alpha}, b_1+\frac{1}{\alpha}\}$
   $\,\,\,\Rightarrow\,\,\, r_{13}(b_2)=0$
 \item $\{u_2+c,v_2+c \}=\{\frac{1}{\beta}, b_2+\frac{1}{\beta}\}$ with  $c=b=b_1=b_3$
  $\,\,\,\Rightarrow\,\,\, b \in \{u_2+\tfrac1\beta,v_2+\tfrac1\beta\}$
   $\,\,\,\Rightarrow\,\,\, h_2^{(a,b_2)}(b)=0$
 \end{enumerate}

\medskip

\noindent  $\{\mathcal{B(1)},\mathcal{B(3)}\}+\{\mathcal{B(2)},\mathcal{C}\}$\,\,\, Case (i) :

\noindent Since the resultant is invariant under the translation of $X$, one needs to check $0=\res(r_{13}(X),
\hh^{(a,b_2)}(X+\frac{1}{\beta}))=\res(r_{13}(X+\frac{1}{\alpha\beta}), \hh^{(a,b_2)}(X+\frac{1}{\alpha}))$ where
\begin{align*}
r_{13}(X+\tfrac{1}{\alpha\beta})
 &=X^2+b_1X+\tfrac{1}{\beta}(\tfrac{1}{\beta}+b_1) =X^2+\tfrac{a}{\beta}X+\tfrac{a+1}{\beta^2}\\
\hh^{(a,b_2)}(X+\tfrac{1}{\alpha})&=X^2+b_2X+\tfrac{b_2}{\alpha^2a}
                       =X^2+(a+\tfrac{\beta^2+\alpha}{\alpha\beta})X+\tfrac{\beta^2+\alpha}{\alpha\beta^2}a+\tfrac{1}{\beta}
\end{align*}
Note that the linear expression of $b_1=\frac{a}{\beta}$ and $b_2=a+\frac{\beta^2+\alpha}{\alpha\beta}$ come from
$b_1=\frac{1}{\alpha(\alpha a+\beta)}=\frac{\alpha a+1}{\alpha^2 a}=b_3$ (i.e., $\alpha^2 a^2+\alpha\beta a+\beta=0$).
Via explicit computations, one has
\begin{align*}
\res(r_{13}(X+\tfrac{1}{\alpha\beta}),
\hh^{(a,b_2)}(X+\tfrac{1}{\alpha}))=\tfrac{1}{\alpha^2\beta^3}(\alpha^2a^2+\alpha\beta a+\beta)=0
\end{align*}
Conversely, if the following polynomial
\begin{align}
 \phi(Z)=Z^2+\frac\beta\alpha Z+\frac{\beta}{\alpha^2}  \label{phiz}
\end{align}
has a root $z\in \Fbn$ satisfying $\phi(z)=0$, then letting $(a,c)=(z, \tfrac{\alpha z+\beta}{\alpha\beta})$
 where $c=b_1+\tfrac{1}{\alpha}$ is the common root of $r_{13}(X)$ and $h_2^{(a,b_2)}(X)$, one
 gets $\bu_G(a,c)\geq 8$. So, in this case,
 the necessary and sufficient condition for $\bu_G(a,c)\geq 8$ is the existence of
 solution of $\phi(Z)=0$ in $\Fbn$, and a solution exists if and only $\tr(\frac{1}{\beta})=0$.

\begin{remark}
Note that we are always assuming $\beta^3+\beta^2+1\neq 0$ throughout sections after Remark \ref{betaremark} in Section
\ref{sectionacbc}.
 However when $\beta^3+\beta^2+1=0$, then from $\beta^2+\beta=\frac{1}{\beta}$, one has
$\tr(\frac{1}{\beta})=\tr(\beta^2+\beta)=0$. Therefore $\beta^3+\beta^2+1=0$ implies $\tr(\frac{1}{\beta})=0$,
 and eliminating the case $\tr(\frac{1}{\beta})=0$ also eliminates the case $\beta^3+\beta^2+1=0$.
\end{remark}

\medskip

\noindent $\{\mathcal{B(1)},\mathcal{B(3)}\}+\{\mathcal{B(2)},\mathcal{C}\}$\,\,\,  Case (ii) :

\noindent Using the (linear) expression of $b_1=\tfrac{1}{\beta} a, b_2=a+\tfrac{\beta^2+\alpha}{\alpha\beta}$,
$r_{13}(b_2)=(b_2+\tfrac{1}{\alpha})(b_2+b_1+\tfrac1\alpha)=0$ implies $\beta^3+\beta^2+1=0$ which cannot happen.

\medskip

\noindent  $\{\mathcal{B(1)},\mathcal{B(3)}\}+\{\mathcal{B(2)},\mathcal{C}\}$\,\,\,  Case (iii) :

\noindent One has
 $0=h_2^{(a,b_2)}(b)=\hh^{(a,b_2)}(b+\tfrac{1}{\beta})=
 \hh^{(a,b_2)}(b+\tfrac{1}{\alpha\beta}+\tfrac{1}{\alpha})=H_a(b+\tfrac1{\alpha\beta}, b_2)$ with $b=b_1=b_3$
 which implies that the three simultaneous equations \eqref{b2d1cubic},\eqref{b2d3cubic} and \eqref{eqnb1b3}
 must be satisfied, and it is straightforward to show the common solution does not exist.

\bigskip
$ \begin{aligned}
  \{\mathcal{B(2)},&\mathcal{B(3)}\}+\{\mathcal{B(1)},\mathcal{C}\} : \\
  & \{\mathcal{B(2)},\mathcal{B(3)}\} :
G(\tfrac{1}{\beta})+G(b_2+\tfrac{1}{\beta})=a=G(0)+G(b_2),
 \,\,\, b=b_2=b_3 \\
   & \{\mathcal{B(1)},\mathcal{C}\} :
G(\tfrac{1}{\alpha})+G(b_1+\tfrac{1}{\alpha})=a=G(u_1)+G(v_1),
 \,\,\, b_1=u_1+v_1
\end{aligned}
$

\medskip

\noindent Letting $r_{23}(X)=(X+\frac{1}{\beta})(X+b_2+\frac{1}{\beta})$ and
  $h_1^{(a,b_1)}(X)=\hh^{(a,b_1)}(X+\tfrac{1}{\alpha})
=(X+\tfrac{1}{\alpha}+u_1)(X+\tfrac{1}{\alpha}+v_1)=X^2+b_1X+\tfrac{b_1}{\alpha^2 a}$,
 one of the following three cases holds;

 \begin{enumerate}[(i)]
 \item $\{c,b_2+c \}=\{\frac{1}{\beta}, b_2+\frac{1}{\beta}\}$
 and $\{u_1+c,v_1+c \}=\{\frac{1}{\alpha}, b_1+\frac{1}{\alpha}\}$ \\
   $\,\,\,\Rightarrow\,\,\,c \in \{\frac{1}{\beta}, b_2+\frac{1}{\beta}\} \cap
\{u_1+\frac{1}{\alpha}, v_1+\frac{1}{\alpha}\} $
  $\,\,\,\Rightarrow\,\,\, \res(r_{23},h_1^{(a,b_1)})=0$
 \item $\{c,b_2+c \}=\{\frac{1}{\beta}, b_2+\frac{1}{\beta}\}$ with $c=b_1$
  $\,\,\,\Rightarrow\,\,\, b_1 \in \{\frac{1}{\beta}, b_2+\frac{1}{\beta}\}$
   $\,\,\,\Rightarrow\,\,\, r_{23}(b_1)=0$
 \item $\{u_1+c,v_1+c \}=\{\frac{1}{\alpha}, b_1+\frac{1}{\alpha}\}$ with  $c=b=b_2=b_3$
  $\,\,\,\Rightarrow\,\,\, b \in \{u_1+\tfrac1\alpha,v_1+\tfrac1\alpha\}$
   $\,\,\,\Rightarrow\,\,\, h_1^{(a,b_1)}(b)=0$
 \end{enumerate}

\medskip

\noindent $\{\mathcal{B(2)},\mathcal{B(3)}\}+\{\mathcal{B(1)},\mathcal{C}\}$\,\,\, Case (i) :

\noindent Please note that the resultant has a multiplicative expression
  $\res(r_{23},h_1^{(a,b_1)})=h_1^{(a,b_1)}(\tfrac1\beta)h_1^{(a,b_1)}(b_2+\tfrac1\beta)$.
 We will show that $h_1^{(a,b_1)}(\tfrac1\beta)\neq 0\neq h_1^{(a,b_1)}(b_2+\tfrac1\beta)$
  (i.e., $\res(r_{23},h_1^{(a,b_1)})\neq 0$) under
our basic assumption $\beta^3+\beta^2+1\neq 0$. When $c=\tfrac{1}{\beta}$,
\begin{align*}
h_1^{(a,b_1)}(\tfrac{1}{\beta})
 &=\tfrac{1}{\beta^2}+b_1(\tfrac{1}{\beta}+\tfrac{1}{\alpha^2 a})
    = \tfrac{1}{\beta^2}+\tfrac{1}{\alpha(\alpha a+\beta)}(\tfrac{1}{\beta}+\tfrac{1}{\alpha^2   a}) \notag \\
 &=\tfrac{1}{\alpha^3\beta^2a(\alpha a+\beta)}\left(\alpha^3 a(\alpha a +\beta)+ \alpha^2\beta a+\beta^2 \right)
 =\tfrac{1}{\alpha^3\beta^2a(\alpha a+\beta)}\left(\alpha^4a^2+\alpha^2\beta^2a+\beta^2\right) \notag \\
 &=\tfrac{\alpha}{\beta^2a(\alpha a+\beta)}\left(a^2+\tfrac{\beta^2}{\alpha^2}a+\tfrac{\beta^2}{\alpha^4}\right)
\end{align*}
By using the relation $a^2+\tfrac{1}{\alpha}a+\tfrac{\beta}{\alpha^3}=0$
 (i.e., the condition $b_2=b_3$ in the equation \eqref{eqnb2b3}), the
last expression in the above equations can be written as
\begin{align*}
a^2+\tfrac{\beta^2}{\alpha^2}a+\tfrac{\beta^2}{\alpha^4}=a^2+\tfrac{\beta^2}{\alpha^2}a
  +\tfrac{\beta}{\alpha}\left(a^2+\tfrac{1}{\alpha}a \right)
  = \tfrac1\alpha a^2+\tfrac\beta\alpha a=\tfrac1\alpha a(a+\beta)
\end{align*}
Thus $h_1^{(a,b_1)}(\tfrac{1}{\beta})=0$ can happen only if $a=\beta$, however again from $b_2=b_3$,
\begin{align*}
0=a^2+\tfrac{1}{\alpha}a+\tfrac{\beta}{\alpha^3}=\beta^2+\tfrac\beta\alpha+\tfrac\beta{\alpha^3}
     =\tfrac\beta{\alpha^3}\left(\alpha^3\beta+\alpha^2+1  \right)
     =\tfrac{\beta^2}{\alpha^3}\left(\alpha^3+\beta  \right)
     =\tfrac{\beta^2}{\alpha^3}\left(\beta^3+\beta^2+1 \right)\neq 0,
\end{align*}
which is a contradiction. In a similar manner, when $c=b_2+\tfrac1{\beta}=\tfrac{a}{\alpha a+1}$,
\begin{align*}
h_1^{(a,b_1)}(b_2+\tfrac{1}{\beta})
 &=\tfrac{a^2}{(\alpha a+1)^2}+b_1(\tfrac{a^2}{(\alpha a+1)^2}+\tfrac{1}{\alpha^2 a})
    =  \tfrac{a^2}{(\alpha a+1)^2}+\tfrac{1}{\alpha(\alpha a+\beta)}(\tfrac{a^2}{(\alpha a+1)^2}+\tfrac{1}{\alpha^2 a})\notag \\
 &=\tfrac{1}{\alpha^3 a(\alpha a+\beta)(\alpha a+1)^2}\left( \alpha^4a^4+\alpha^4a^3+1  \right)
 =\tfrac{\alpha}{ a(\alpha a+\beta)(\alpha a+1)^2}\left(a^4+a^3+\tfrac1{\alpha^4}\right)
\end{align*}
 Thus $h_1^{(a,b_1)}(b_2+\tfrac{1}{\beta})=0$ can happen only when $ a^4+a^3+\tfrac1{\alpha^4}=0$, however using
 the same relation $a^2+\tfrac{1}{\alpha}a+\tfrac{\beta}{\alpha^3}=0$ again,
\begin{align*}
 0=a^4+a^3+\tfrac1{\alpha^4}&=a^4+\tfrac1{\alpha^2}a^2+ a^3+\tfrac1{\alpha^2}a^2+\tfrac1{\alpha^4}
  = \left(a^2+\tfrac1\alpha a\right)^2+ a\left(a^2+\tfrac1\alpha a+\tfrac{\beta}{\alpha^2}a
  \right)+\tfrac1{\alpha^4}\\
  &=\tfrac{\beta^2}{\alpha^6}+ a\left(\tfrac{\beta}{\alpha^3}+\tfrac{\beta}{\alpha^2}a
  \right)+\tfrac1{\alpha^4} =
  \tfrac\beta{\alpha^2}a^2+\tfrac\beta{\alpha^3}a+\tfrac1{\alpha^6}
    =\tfrac\beta{\alpha^2}\left(a^2+\tfrac1\alpha a\right)+\tfrac1{\alpha^6}\\
    &=\tfrac\beta{\alpha^2}\cdot\tfrac\beta{\alpha^3}+\tfrac1{\alpha^6}=\tfrac1{\alpha^6}(\beta^3+\beta^2+1)\neq 0
\end{align*}
which is a contradiction.

\medskip

\noindent  $\{\mathcal{B(2)},\mathcal{B(3)}\}+\{\mathcal{B(1)},\mathcal{C}\}$\,\,\,  Case (ii) :

\noindent Using $b_2=b_3$, one finds the linear expression $b_2=b_3=\tfrac{\alpha}{\beta}a+\tfrac{1}{\alpha\beta}$
 and $b_1=\tfrac{\alpha}{\beta^3}(a+1)$. From this, one concludes
 $r_{23}(b_1)=(b_1+\tfrac1\beta)(b_1+b_2+\tfrac1\beta)=0$ implies $\beta^3+\beta^2+1=0$ when $b_1+\tfrac1\beta=0$,
 which is not possible. Also, $b_1+b_2+\tfrac1\beta=0$ implies $a=\alpha, b_1=\beta^2+\beta+1, b_2=b_3=\beta^3$ and
 $\beta^5=1$. This last case gives $\du_G(a,b_1)=2<4$
  (i.e., there is no $u_1,v_1\in \mathcal{C}$ satisfying $a=G(u_1)+G(v_1),\,\, b_1=u_1+v_1$)
   because $b_1=\tfrac{a}{\alpha a+1}$ and one has $\# \mathcal{C}_G^{(a,b)}=0$ by Corollary \ref{abalphamcoro}.
 Therefore $\bu_G(a,c)\geq 8$ is not possible in this case.

\medskip

\noindent  $\{\mathcal{B(2)},\mathcal{B(3)}\}+\{\mathcal{B(1)},\mathcal{C}\}$\,\,\,  Case (iii) :

\noindent One has $0=h_1^{(a,b_1)}(b)=\hh^{(a,b_1)}(b+\tfrac{1}{\alpha})=H_a(b, b_1)$ with $b=b_2=b_3$
 which  implies that the three simultaneous equations \eqref{hab3b1},\eqref{f1quartic} and \eqref{eqnb2b3}
 must be satisfied, and it is easy to show the common solution does not exist.

\bigskip
   $ \begin{aligned}
  \{\mathcal{B(1)},&\mathcal{B(2)}\}+\{\mathcal{B(3)},\mathcal{C}\} : \\
  & \{\mathcal{B(1)},\mathcal{B(2)}\} :
G(\tfrac{1}{\alpha})+G(b_1+\tfrac{1}{\alpha})=a=G(\tfrac{1}{\beta})+G(b_1+\tfrac{1}{\beta}),
 \,\,\, b=b_1=b_2 \\
   & \{\mathcal{B(3)},\mathcal{C}\} :
G(0)+G(b_3)=a=G(u_3)+G(v_3),
 \,\,\, b_3=u_3+v_3
\end{aligned}
$

\medskip

\noindent Letting $r_{12}(X)=(X+\frac{1}{\alpha\beta})(X+b_1+\frac{1}{\alpha\beta})$ and
  $h_3^{(a,b_3)}(X)=\hh^{(a,b_3)}(X)
=(X+u_3)(X+v_3)$,
 one of the following three cases holds;

 \begin{enumerate}[(i)]
 \item $\{\tfrac{1}{\beta}+c,b_1+\tfrac{1}{\beta}+c \}=\{\tfrac{1}{\alpha}, b_1+\tfrac{1}{\alpha}\}$
 and $\{u_3+c,v_3+c \}=\{0, b_3\}$ \\
   $\,\,\,\Rightarrow\,\,\,c \in \{\frac{1}{\alpha\beta}, b_1+\frac{1}{\alpha\beta}\} \cap
\{u_3, v_3\} $
  $\,\,\,\Rightarrow\,\,\, \res(r_{12},h_3^{(a,b_3)})=0$
 \item $\{\tfrac{1}{\beta}+c,b_1+\tfrac{1}{\beta}+c \}=\{\tfrac{1}{\alpha}, b_1+\tfrac{1}{\alpha}\}$ with $c=b_3$
  $\,\,\,\Rightarrow\,\,\, b_3 \in \{\frac{1}{\alpha\beta}, b_1+\frac{1}{\alpha\beta}\}$
   $\,\,\,\Rightarrow\,\,\, r_{12}(b_3)=0$
 \item $\{u_3+c,v_3+c \}=\{0, b_3\}$ with  $c=b=b_1=b_2$
  $\,\,\,\Rightarrow\,\,\, b \in \{u_3,v_3\}$
   $\,\,\,\Rightarrow\,\,\, h_3^{(a,b_3)}(b)=0$
 \end{enumerate}

\medskip

\noindent   $\{\mathcal{B(1)},\mathcal{B(2)}\}+\{\mathcal{B(3)},\mathcal{C}\}$\,\,\, Case (i) :

\noindent Since the resultant is invariant under the translation of $X$, one needs to check $0=\res(r_{12}(X),
\hh^{(a,b_3)}(X)=\res(r_{12}(X+\frac{1}{\alpha}), \hh^{(a,b_3)}(X+\frac{1}{\alpha}))$ where
\begin{align*}
r_{12}(X+\tfrac{1}{\alpha})
 &=X^2+b_1X+\tfrac{1}{\beta}(\tfrac{1}{\beta}+b_1) =X^2+\tfrac{\alpha a+1}{\alpha^2\beta}X+\tfrac{\alpha a+\beta^2}{\alpha^2\beta^2}\\
\hh^{(a,b_3)}(X+\tfrac{1}{\alpha})&=X^2+b_3X+\tfrac{b_3}{\alpha^2a}
                       =X^2+\tfrac{\alpha a+\beta^2+\alpha}{\alpha\beta^2}X+
                        \tfrac{(\alpha^2+\alpha\beta^2)a+\alpha\beta^2+1}{\alpha^2\beta^4}
\end{align*}
Note that the linear expression of $b_1=\tfrac{\alpha a+1}{\alpha^2\beta}$ and $b_3=\tfrac{\alpha
a+\beta^2+\alpha}{\alpha\beta^2}$ come from
 $b_1=\frac{1}{\alpha(\alpha a+\beta)}=\frac{a+1}{\beta(\alpha a+1)}=b_2$
   (i.e., $\alpha^2 a^2+\alpha^2 a+\beta^2=0$ from the equation \eqref{eqnb1b2}). Via
explicit computations, one has
\begin{align*}
\res(r_{12}(X+\tfrac{1}{\alpha}),
 \hh^{(a,b_3)}(X+\tfrac{1}{\alpha}))=\tfrac{1}{\alpha^6\beta^8}(\alpha a+\alpha^3+\beta^2)^2(\alpha\beta a+\alpha\beta+1)
\end{align*}
 The above resultant is zero if and only if $a\in
 \{\frac{\alpha^3+\beta^2}{\alpha}, \frac{\alpha\beta+1}{\alpha\beta}
 \}$, however in this case, one has $\beta^3+\beta^2+1=0$ which is a contradiction.  That is, when
 $a=\frac{\alpha^3+\beta^2}{\alpha}$, one may proceed as
 \begin{align*}
0 &=\alpha^2 a^2+\alpha^2
a+\beta^2=(\alpha^3+\beta^2)^2+\alpha(\alpha^3+\beta^2)+\beta^2 \\
 &=\alpha^6+\beta^4+\alpha^4+\alpha\beta^2+\beta^2=\alpha^6+1+\beta^3
 \\
 &=\beta^6+\beta^4+\beta^3+\beta^2=\beta^2(\beta+1)(\beta^3+\beta^2+1)
 \end{align*}
\noindent  The case $a=\frac{\alpha\beta+1}{\alpha\beta}$ can be dealt in a similar manner.

\medskip

\noindent $\{\mathcal{B(1)},\mathcal{B(2)}\}+\{\mathcal{B(3)},\mathcal{C}\}$\,\,\,  Case (ii) :

\noindent Again using the linear expression of $b_1=b_2$ and $b_3$, one finds
 $r_{12}(b_3)=0$ implies $\beta^3+\beta^2+1=0$ when $b_3+\tfrac{1}{\alpha\beta}=0$, which is
  not possible. When $b_3+b_1+\tfrac{1}{\alpha\beta}=0$, one finds $a=\alpha^4=b_3, b_1=b_2=\beta^3+\beta+1$
   and $\beta^5=1$. Similarly one has $\du_G(a,b_3)=2<4$
  (i.e., there is no $u_3,v_3\in \mathcal{C}$ satisfying $a=G(u_3)+G(v_3),\,\, b_3=u_3+v_3$)
   because $b_3=\tfrac{a}{\beta(\alpha a+\beta)}$ and one has $\# \mathcal{C}_G^{(a,b)}=0$ by Corollary \ref{abalphamcoro}.
 Therefore $\bu_G(a,c)\geq 8$ is not possible in this case.

\medskip

\noindent $\{\mathcal{B(1)},\mathcal{B(2)}\}+\{\mathcal{B(3)},\mathcal{C}\}$\,\,\,  Case (iii) :

\noindent One has
  $0=h_3^{(a,b_3)}(b)=\hh^{(a,b_3)}(b)=\hh^{(a,b_3)}(b+\tfrac1\alpha+\tfrac1\alpha)=H_a(b+\tfrac1\alpha, b_3)$ with
$b=b_1=b_2$
 which implies that the three simultaneous equations \eqref{hab1b3},\eqref{f2quartic} and \eqref{eqnb1b2}
 must be satisfied, and it is easy to show the common solution does not exist.

\bigskip
\noindent {\bf Proof of Theorem \ref{butheorem} :} \hfill

\medskip

\noindent
 From our previous analysis of all possible types of combinations of differential uniformities
 producing $\bu_G(a,c)\geq 8$, we already showed that one has $\bu_G\leq 6$ if and only if
none of the above mentioned five polynomials $h_{12},h_{13},g_1,g_2$ and $\phi$ have a root in $\Fbn$. Therefore it
remains to show that
 there do exist $a,c \in \Fbn$ such that $\bu_G(a,c)\geq 6$.
 We will start from the following $\bu$ equation,
 \begin{align}
 G(0)+G(b)=a=G(u)+G(v), \quad b=u+v\quad \text{with}\,\,\,b=\tfrac{\alpha a+1}{\alpha^2 a}
 \end{align}
The equation $a=G(0)+G(b)\,\, (b\notin \pole)$ is always satisfied with the parametrization $b=\tfrac{\alpha
a+1}{\alpha^2 a}=\tfrac1\alpha+\tfrac1{\alpha^2 a}$, which gives one to one correspondence between $a\in \Fbn \setminus
\{0, \tfrac1\alpha, \tfrac\beta\alpha\}$ and $b\in \Fbn \setminus \pole$. To satisfy the second equation $a=G(u)+G(v)$
for
 some $u,v  \notin \pole$ with $u+v=b$, one needs to have $\tr(\tfrac{1}{ab\alpha^2})=0$. Since the
equation $a=G(0)+G(b)$ implies $\tfrac{1}{\alpha b}+1=\tfrac{1}{ab\alpha^2}$ as is already mentioned in the
 Lemma \ref{tracelemma}-$(b)$, one gets $\tr(\tfrac{1}{ab\alpha^2})=0$ if and only if $\tr(\tfrac{1}{\alpha b}+1)=0$.
Via one to one correspondence, $\tfrac{1}{\alpha b}+1$ takes all values of $\Fbn$ except $3$ values, and half of $\Fbn$
take zero trace value. Therefore one has $\tr(\tfrac{1}{\alpha b}+1)=0$ for approximately $2^{n-1}$ values of $a$ (or
$b$). That is, one has $\du_G(a,b)\geq 4$ with $b=\tfrac{\alpha a+1}{\alpha^2 a}$ for approximately $2^{n-1}$ values of
$a\in \Fbn$. Now letting $c=u$ or $c=v$, the $\bu$ equation $G(x)+G(y)=a=G(x+c)+G(y+c)$ has $2$ solutions
$\{x,y\}=\{0,b\}, \{u,v\}$ (i.e., $4$ solutions of ordered pair $(x,y)$), which means $\bu_G(a,c)\geq 4$. We claim that
there exists
 $a\in \Fbn$, among those $2^{n-1}$ values producing $u,v$, such that $\du_G(a,c)\geq 2$ for some $c=u$ or $v$.
More precisely we will show that there is $a$ satisfying $\tr(\tfrac1{ac\alpha^2})=0$. From the equation
\eqref{basicpoly}, one has
 \begin{align*}
b &= a(u\alpha +1)(v\alpha +1)=a(uv\alpha^2  +b \alpha +1) =au(b+u)\alpha^2 + ab\alpha+a \\
              &=au^2\alpha^2+uab\alpha^2+ab\alpha +a=au^2\alpha^2+u(\alpha a+1)+\dfrac{\alpha a+1}{\alpha} +a \\
              &=au^2\alpha^2+u(\alpha a+1)+\dfrac{1}{\alpha}
 \end{align*}
where $\tfrac{1}{\alpha b}+1=\tfrac{1}{ab\alpha^2}$ (i.e., $ab\alpha^2=\alpha a+1$) is used. Therefore, using the same
equation again, it follows that
\begin{align*}
0&=au^2\alpha^2+u(\alpha a+1)+\dfrac{1}{\alpha}+b=au^2\alpha^2+u(\alpha a+1)+\dfrac{1}{a\alpha^2}\\
 & =u\left(au\alpha^2+(\alpha a+1)+\dfrac{1}{au\alpha^2}\right)
\end{align*}
which implies
\begin{align*}
0= au\alpha^2+(\alpha a+1)+\dfrac{1}{au\alpha^2}
\end{align*}
Using the symmetry between $u$ and $v$, we get the same equation for $v$ and $a$ so that
 \begin{align}
 au\alpha^2+\dfrac{1}{au\alpha^2} = \alpha a+1 = av\alpha^2+\dfrac{1}{av\alpha^2} \label{boomerang6eq}
\end{align}
(Note that the above equation can also be derived using the
 fact $au\alpha^2\cdot av\alpha^2=1$ (i.e., $uv=\tfrac{1}{a^2\alpha^4}$) which can be
derived directly from the equation (\ref{eqnhh}) by comparing constant terms.) Now, for the moment, suppose that
 there is $a\in \Fbn$ such that
\begin{align}
\tr(\alpha a+1)=1 \quad \text{and} \quad \tr(\tfrac{1}{\alpha a+1})=0 \label{tracekoolsterman}
\end{align}
Then one has $\tr(ab\alpha^2)=1$  and
 $\tr(\tfrac{1}{ab\alpha^2})=0$ because $ab\alpha^2=\alpha a+1$. From the equation (\ref{boomerang6eq}),
one gets
$$1=\tr(ab\alpha^2)=\tr(au\alpha^2)+\tr(av\alpha^2)=\tr(\tfrac{1}{au\alpha^2})+\tr(\tfrac{1}{av\alpha^2})$$
Therefore the existence of $a\in \Fbn$ satisfying $\tr(\tfrac{1}{\alpha a+1})=0$ and $\tr(\alpha a +1)=1$ implies
$$
\tr(\tfrac{1}{ab\alpha^2})=0 \quad \text{and} \quad \tr(\tfrac{1}{au\alpha^2})+\tr(\tfrac{1}{av\alpha^2})=1
$$
Thus, choosing $c=u$ or $v$ satisfying $ \tr(\tfrac{1}{ac\alpha^2})=0$, one has $\bu_G(a,c)\geq 6$. Finally, it remains
to show that there exists such $a$ satisfying (\ref{tracekoolsterman}). The result on Koolsterman sum (See \cite{LW90})
implies
 $-2^{\frac{n}{2}+1}<\displaystyle\sum_{z\neq 0 \in \Fbn} (-1)^{\tr(z+\frac{1}{z})} <2^{\frac{n}{2}+1}$. Letting $k$ be the number
 of $z\in \Fbn \setminus \{0\}$ satisfying $\tr(z+\frac{1}{z})=1$, the result on Koolsterman sum implies $k$ is bounded as
 $2^{n-1}-2^{\frac{n}{2}}-1 <k<2^{n-1}+2^{\frac{n}{2}}$. Therefore there are plenty of $z\in \Fbn$ satisfying
$\tr(z+\frac{1}{z})=1$ and one can choose $\alpha a+1=z$ or $\tfrac{1}{z}$ satisfying $\tr(\alpha a+1)=1$.
 \qed

\section{Application to Involutions with Boomerang Uniformity 6 and Implementations}

An involution  $F$ is a permutation whose compositional inverse is itself, i.e., $F\circ F(x)=x$ for all $x\in \Fbn$.
Due to the natural (sequence like) structure of the
 permutations with Carlitz form,  one can easily construct involutions in Carlitz form. For example,
we easily see that $F(x)=[a_{m+1},a_m,\ldots, a_2, a_1+a_0x]$ is an involution if $a_0=1$ and $a_{m+1-i}=a_{1+i}$ for
$0\le i \le m$ (i.e., palindromic structure). Now let us consider involutions of Carlitz rank three of the form
$F(x)=[a_4,a_3,a_2,a_1+a_0x]=[\gamma, \beta, \beta, \gamma+x]$, or equivalently,
\begin{equation}\label{CR3_invol}
F(x)=(((x+\gamma)^{2^n-2} +\beta)^{2^n-2} +\beta)^{2^n-2}+\gamma
\end{equation}
where $\beta \ne 0$. From Proposition \ref{EAequivalent lemma}-$(2)$, $F(x)$ is affine equivalent to $[0,\beta^2,1,x]$.
Also since
$$[0,\beta^2,1,x] = x^{2}\circ [0,\beta,1, x]\circ x^{2^{n-1}},$$
we see that $[0,\beta^2,1,x]$ is affine equivalent to  $[0,\beta,1,x]$.
 We already mentioned in Section \ref{sectioncarlitz3} that $[0,\beta,1,x]$ and $[0,1,\beta,x]$
  are affine equivalent via the relation $\beta[0,\beta,1,\beta x]=[0,1,\beta,x]$. Therefore, $F(x)$ with any $\gamma \in \Fbn$
in \eqref{CR3_invol} is affine equivalent to $[0,1,\beta,x]$ whose boomerang and differential uniformity are
investigated in Section \ref{differentialsec} and \ref{boomerangsec}. We summarize this result in the following
theorem.

\begin{theorem}
Let $\beta\ne 0 \in \Fbn$. Then the involution $F(x)=[\gamma,\beta,\beta,\gamma+x]$ with any $\gamma\in \Fbn$ is affine
equivalent to $[0,1,\beta, x]$, hence the involution $F$ has the same boomerang uniformity with $[0,1,\beta, x]$.
\end{theorem}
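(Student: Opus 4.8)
The plan is to establish the claimed affine equivalence by chaining together three elementary equivalences, after which the equality of boomerang uniformities follows at once from the affine invariance of boomerang uniformity recorded in Section 2. Note that the involutivity of $F$ plays no role in this argument; it is the palindromic shape of the coefficients that matters only insofar as it produces the particular Carlitz form below.

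First I would rewrite $F(x)=[\gamma,\beta,\beta,\gamma+x]$ in standard Carlitz notation, reading off $m=3$ together with $a_0=1$, $a_1=\gamma$, $a_2=\beta$, $a_3=\beta$, $a_4=\gamma$. Since $\beta\neq 0$, the coefficients $a_0,a_2,a_3$ are all nonzero, so Proposition \ref{EAequivalent lemma}-$(2)$ applies and yields an affine equivalence of $F$ with $[0,b_3,1,x]$, where $b_3=a_2^{(-1)^{3+1}}a_3=\beta\cdot\beta=\beta^2$. Thus $F$ is affine equivalent to $[0,\beta^2,1,x]$.

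The one computation requiring care is the passage from $[0,\beta^2,1,x]$ to $[0,\beta,1,x]$, so I expect this to be the only substantive step. Here I would verify the Frobenius conjugation identity
\[
[0,\beta^2,1,x]=x^2\circ[0,\beta,1,x]\circ x^{2^{n-1}}.
\]
The key point is that the squaring map $\sigma(z)=z^2$ is a field automorphism of $\Fbn$ commuting with the power map $z\mapsto z^{q-2}$ (indeed $\sigma(z^{q-2})=(z^2)^{q-2}=\sigma(z)^{q-2}$ for every $z$, including $z=0$) and sending each constant $c$ to $c^2$. Applying $\sigma$ termwise to $[0,\beta,1,w]=((w^{q-2}+1)^{q-2}+\beta)^{q-2}$ turns the constant $\beta$ into $\beta^2$, and evaluating at $w=x^{2^{n-1}}$ uses $\sigma(x^{2^{n-1}})=x^{2^{n}}=x$ to recover $[0,\beta^2,1,x]$. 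Since $x^2$ and $x^{2^{n-1}}$ are mutually inverse $\F_2$-linear bijections, this identity is a (linear, hence affine) equivalence.

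Finally, the relation already recorded in Section \ref{sectioncarlitz3}, namely $[0,1,\beta,x]=\beta\,[0,\beta,1,\beta x]$, exhibits $[0,\beta,1,x]$ as affine equivalent to $[0,1,\beta,x]$. Composing the three equivalences shows that $F$ is affine equivalent to $[0,1,\beta,x]$, and because affine equivalent functions share the same boomerang uniformity, $F$ has the same boomerang uniformity as $[0,1,\beta,x]$, whose value is classified in Section \ref{boomerangsec}. The two outer links are direct invocations of results stated earlier, so the Frobenius identity above is the only place where any genuine verification is needed.
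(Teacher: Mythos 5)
Your proposal is correct and follows essentially the same route as the paper: Proposition \ref{EAequivalent lemma}-$(2)$ reduces $F$ to $[0,\beta^2,1,x]$, the Frobenius conjugation $[0,\beta^2,1,x]=x^2\circ[0,\beta,1,x]\circ x^{2^{n-1}}$ passes to $[0,\beta,1,x]$, and the relation $\beta[0,\beta,1,\beta x]=[0,1,\beta,x]$ from Section \ref{sectioncarlitz3} completes the chain, with affine invariance of boomerang uniformity finishing the argument. Your added verification of the Frobenius identity is a correct elaboration of a step the paper states without proof.
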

\begin{corollary}
Up to affine equivalence, there is one to one correspondence between permutations of Carlitz rank $3$ and involutory
permutations of Carlitz rank $3$. In other words, any permutation of Carlitz rank $3$ can be expressed as an involution
via affine equivalence.
\end{corollary}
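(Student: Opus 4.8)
The plan is to reduce an arbitrary permutation of Carlitz rank $3$ to one of the palindromic involutions already exhibited, and then to repackage this as the stated bijection. First I would take an arbitrary permutation $P$ of Carlitz rank $3$. By Proposition \ref{EAequivalent lemma}-$(2)$ with $m=3$, $P$ is affine equivalent to the reduced form $[0,\beta,1,x]$ for some $\beta\neq 0$. Using the affine equivalence $\beta[0,\beta,1,\beta x]=[0,1,\beta,x]$ recalled in Section \ref{sectioncarlitz3}, it follows that $P$ is affine equivalent to $[0,1,\beta,x]$.

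Next I would invoke the preceding theorem, which states that for every $\gamma\in\Fbn$ the palindromic involution $[\gamma,\beta,\beta,\gamma+x]$ is affine equivalent to $[0,1,\beta,x]$; choosing $\gamma=0$ gives the involution $I_\beta(x)=[0,\beta,\beta,x]$. Chaining the equivalences produces $P\sim[0,1,\beta,x]\sim I_\beta$, so $P$ is affine equivalent to the involution $I_\beta$. Thus every affine equivalence class of rank-$3$ permutations contains an involution. Because affine equivalence preserves the Carlitz rank, $I_\beta$ itself has Carlitz rank exactly $3$, so it is a genuine rank-$3$ involution.

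To conclude the one-to-one correspondence, let $\mathcal I_3$ and $\mathcal P_3$ be the sets of affine equivalence classes of, respectively, involutions and permutations of Carlitz rank $3$. Sending the class of an involution to the permutation class it generates defines a map $\mathcal I_3\to\mathcal P_3$; this map is injective, since two involutions lying in a single $\mathcal P_3$-class are affine equivalent and hence already share the same $\mathcal I_3$-class. The previous step shows the map is surjective, and therefore it is the desired bijection, identifying each rank-$3$ permutation (up to affine equivalence) with a rank-$3$ involution.

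The one step that needs care is the claim that affine equivalence preserves the Carlitz rank, which guarantees $I_\beta$ does not collapse to rank $<3$. The idea is that composing $F(x)=[a_{m+1},\ldots,a_1+a_0x]$ of \eqref{car} on the right with an affine map only alters $a_1,a_0$, while composing on the left with $x\mapsto ex+f$ can be absorbed level by level through the identity $ey^{q-2}=(e^{-1}y)^{q-2}$, leaving a continued fraction of the same length. Hence the rank cannot increase under affine equivalence; applying the same bound to the inverse pair of affine maps gives equality, so $[0,1,\beta,x]$ and its affine partner $I_\beta$ have the same Carlitz rank $3$.
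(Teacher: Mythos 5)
Your reduction is exactly the paper's: Proposition~\ref{EAequivalent lemma}-$(2)$ plus the relation $\beta[0,\beta,1,\beta x]=[0,1,\beta,x]$ brings $P$ to $[0,1,\beta,x]$, and the preceding theorem with $\gamma=0$ supplies the involution $I_\beta=[0,\beta,\beta,x]$; the passage to equivalence classes at the end is routine. The genuine gap is in your last paragraph. In this paper, affine equivalence means composition with \emph{arbitrary} affine permutations of $\Fbn$ regarded as an $\F_2$-vector space, not merely with maps $x\mapsto ex+f$, and Carlitz rank is \emph{not} invariant under affine equivalence in that sense; your absorption argument covers only the maps $x\mapsto ex+f$. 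To see that the general claim is false, note that distinct $\F_2$-linear permutations $A'$ give distinct permutations $A'\circ x^{q-2}$, all affine equivalent to the inverse function, and for $n\ge 4$ there are more such permutations than the fewer-than-$q^3$ permutations of Carlitz rank at most $1$ (already for $n=4$ one has $|GL_4(\F_2)|=20160>2^{12}$); hence some affine copy of the rank-$1$ permutation $x^{q-2}$ has Carlitz rank larger than $1$.

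This objection bites your chain concretely, not just in principle: the equivalence $I_\beta\sim[0,1,\beta,x]$ furnished by the theorem passes through the Frobenius conjugation $[0,\beta^2,1,x]=x^{2}\circ[0,\beta,1,x]\circ x^{2^{n-1}}$, which is affine in the paper's sense but is not of the form $x\mapsto ex+f$, so your level-by-level absorption does not apply to that link, and the conclusion $\crk(I_\beta)=3$ (needed for the stated bijection with rank-$3$ involutions, though not for the weaker second sentence of the corollary) is left unproved. The repair is short: since $(y^{2})^{q-2}=(y^{q-2})^{2}$, Frobenius conjugation sends $[a_{m+1},a_m,\ldots,a_1+a_0x]$ to $[a_{m+1}^2,a_m^2,\ldots,a_1^2+a_0^2x]$ and therefore also preserves the length of a Carlitz form; combined with your absorption for $x\mapsto ex+f$, every link of the actual chain preserves Carlitz rank, so $\crk(I_\beta)=\crk([0,1,\beta,x])=3$ by the remark in Section~\ref{sectioncarlitz3}. (Alternatively, observe that $I_\beta$ is equivalent to $[0,\beta^2,1,x]$ using only maps $x\mapsto ex+f$, and check directly that the latter has rank $3$.) With this patch your proof is complete and coincides with the paper's argument.
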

\begin{remark}
One may choose $\gamma=0$ such that the involution $F(x)=[0,\beta,\beta,x]$ can be written as
$$  F(x)=
   \begin{cases}
\tfrac{\beta}{\alpha^2} &\text{ if } x= 0 \\
\tfrac{1}{\beta} &\text{ if }  x= \tfrac{1}{\beta} \\
0 &\text{ if }  x= \tfrac{\beta}{\alpha^2} \\
\frac{\beta x+1}{\alpha^2 x+\beta } &\text{ if } x\neq 0,\tfrac1\beta,\tfrac\beta{\alpha^2}
   \end{cases}
  $$
  where $\alpha=\beta+1$. Therefore the above involutions with $\beta$ satisfying Theorem \ref{butheorem}
  provide a COMPLETE LIST of involutions $($up to affine equivalence$)$ with boomerang uniformity six. Please note that $[0,1,\beta, x]$ that
  we considered in Section \ref{differentialsec}, \ref{boomerangsec} is a compositional inverse of $[0,\beta,1,x]$,
  while $[0,\beta,\beta,x]$ is an involution which is affine equivalent to both $[0,\beta,1,x]$ and $[0,1,\beta,x]$.
\end{remark}

\noindent It should be mentioned that, in  \cite{FF19,LWY13e}, involutions of the form $\pi(x)^{2^n-2}$ with 3-cycle
$\pi(x)=(0\,\beta\,\beta^{-1})$ are considered and some  conditions for which $\pi(x)^{2^n-2}$ becomes differentially
4-uniform are stated in terms of trace conditions of $\beta$. Since $\crk(\pi(x)^{2^n-2})=3$,
 they found some classes of differentially 4-uniform involutions of Carlitz rank 3. However, using Theorem
\ref{dutheorem} and the above theorem, we
 have a complete answer for the conditions for which an involution $F$ with $\crk(F)=3$ becomes differentially 4-uniform. Also, using
 Theorem \ref{butheorem} and the above theorem, we have a complete answer for the conditions
 for which an involution $F$ with $\crk(F)=3$ becomes boomerang 6-uniform.

 \bigskip

 \begin{table}[!b]
\begin{center}
\begin{tabular}{|c|c|c|c|}
\hline
\hline  & the number of  $\beta\in\Fbn\setminus\F_4$ with  & the number of  $\beta\in\Fbn\setminus\F_4$ with & Timing \\
 $n$ & $\texttt{BU}_G=6$ & $\texttt{DU}_G=4$ and $\texttt{BU}_G=6$ & (seconds)\\
\hline 4 & 4 & 0 & 0.013\\
\hline 6 & 6 & 6 & 0.006\\
\hline 8 & 16 & 8 & 0.035 \\
\hline 10 & 80 & 50 & 0.149 \\
\hline 12 & 264 & 180 & 0.676 \\
\hline 14 & 1148 & 784 & 3.456 \\
\hline 16 & 3696 & 2080 & 21.1\\
\hline 18 & 16020 & 9828 & 244.5 \\
\hline 20 & 63760 & 38120 & 3346.7 \\
\hline 22 & 252538 & 152020 & 57717.7\\
\hline \hline
\end{tabular}
\caption{Number of $\beta\in\Fbn \setminus \F_4$ having optimal $\texttt{BU}_G$ and $\texttt{DU}_G$}
\label{table_bu6du4}
\end{center}
\end{table}
By using Theorem \ref{dutheorem} and Theorem \ref{butheorem}, we did implementations using a software SageMath for the
permutation $G(x)=[0,1,\beta,x]$ on $\Fbn$ for even $4\le n\le 22$.
 The
result is shown  in Table \ref{table_bu6du4}, where the number of $\beta\in\Fbn\setminus\F_4$ such that
$\texttt{BU}_G=6$ and the number of $\beta\in\Fbn\setminus\F_4$ such that $\texttt{BU}_G=6, \texttt{DU}_G=4$ are listed
in the second and third column respectively. In our experiments, it is enough to try  $\frac{2^n}{n}$ number of $\beta
\in \Fbn$ for each $n$, because $[0,1,\beta,x]$ is affine equivalent to $[0,1,\beta^{2^i},x]$ ($1\le i \le n-1$) from
the equivalence $[0,\beta^{2^i},1,x] = x^{2^i}\circ [0,\beta,1, x]\circ x^{2^{n-i}}$. Please note that we have not
checked  the equivalences such as EA or CCZ equivalences among the permutations that we found. Our experiments were
performed via SageMath on Intel Core i7-4770 3.40GHz with 8GB memory processor.

\section{Conclusion}

In this paper, we presented a new methodology for computing boomerang and differential uniformity of permutations of
low Carlitz rank. We applied our method to permutations of Carlitz rank $3$, and obtained a complete list of
permutations having the boomerang uniformity six, which is the least possible case. We also gave complete
classifications of all possible differential uniformities of permutations of Carlitz rank $3$. As a  consequence, we
discovered new classes of permutations having the boomerang uniformity six and differential uniformity four, which were
previously unknown.

Since all permutations $F$ on $\Fbn$ of Caritz rank $3$ have  high algebraic degree $n-1$ and also  high nonlinearity
$\geq 2^{n-1}-2^{\frac{n}2}-2$, a permutation $F$ with low boomerang uniformity is a good example of cryptographic
S-box. In addition, due to the nice
 structure of Carlitz form, one can always choose an involutory permutation $F$ (via affine equivalence)
  of boomerang uniformity six, which is also a new result.

For future research, we hope our approach with inductive technique can be extended to permutations with Carlitz rank
$\geq 4$ and find new families  of permutations with good cryptographic properties.

\end{document}